\documentclass[runningheads]{llncs}

\usepackage[T1]{fontenc}

\usepackage{amsmath,amssymb}
\usepackage{hyperref}

\usepackage{graphicx}
\usepackage{xcolor}
\usepackage{subcaption}
\usepackage{forest}
\usepackage{enumitem}
\usepackage{tabularx}

\setlist[itemize]{
  leftmargin=*,
  topsep=2pt,
  itemsep=0pt,
  parsep=0pt,
  partopsep=0pt
}

\setlist[enumerate]{
  leftmargin=*,
  topsep=2pt,
  itemsep=0pt,
  parsep=0pt,
  partopsep=0pt
}

\spnewtheorem{observation}
  {Observation}
  {\bfseries}
  {\itshape}

\begin{document}

\title{When Robots Crash: Optimal Asynchronous Gathering at Weber Meeting Nodes}
\titlerunning{Crash-Tolerant Gathering at Weber Meeting Nodes}

\author{
Animesh Maiti\inst{1}\thanks{Corresponding author.} \and
Prakhar Shukla\inst{1} \and
Abhinav Chakraborty\inst{2} \and
Subhash Bhagat\inst{1}
}

\authorrunning{A. Maiti et al.}

\institute{
Department of Mathematics,\\
Indian Institute of Technology Jodhpur,\\
Jodhpur, Rajasthan, India\\
\email{p22ma201@iitj.ac.in,
       p23ma0011@iitj.ac.in,
       sbhagat@iitj.ac.in}
\and
Department of Mathematics,\\
Birla Institute of Technology Mesra,\\
Ranchi, Jharkhand, India\\
\email{abhinavchakraborty@bitmesra.ac.in}
}

\maketitle
%
\begin{abstract}

We study the \textit{optimal gathering} problem over a finite set of designated \textit{meeting nodes} for \textit{asynchronous, anonymous,} and \textit{oblivious} mobile robots on an infinite grid under crash faults. The robots have global visibility and strong multiplicity detection, but share neither a coordinate system nor chirality. The objective is to gather all non-faulty robots at a \textsc{Weber Meeting Node}, minimizing the total Manhattan distance from their initial positions. Up to $n-2$ robots may crash permanently, and such crashes are indistinguishable from arbitrary delays. Existing approaches often rely on a designated robot to break symmetry, whose crash may block the remaining robots indefinitely. Instead, our approach enables every robot to independently select the same target from its snapshot, while target-dependent restricted shortest paths preserve the target as a \textsc{Weber Meeting Node}. We prove that, under strong multiplicity detection, optimal gathering is impossible from certain fully symmetric configurations. For all remaining configurations, our algorithm \textsc{CrashTolerantWeberGathering()} selects a unique common target, preserves its optimality throughout the execution, and allows non-faulty robots to progress without waiting for crashed robots, thereby guaranteeing gathering in finite time.

\keywords{Swarm Robotics \and Optimal gathering \and \textsc{Weber Meeting Nodes} \and Crash faults \and Infinite grid \and Asynchronous scheduler.}
\end{abstract}

\section{Introduction}
\textbf{Background and motivation.}
Swarm robotics is inspired by the collective behaviour of social insects such
as ants, bees, and termites. A robot swarm consists of several simple and
inexpensive robots that work together, without centralized control or explicit
communication, to perform tasks such as exploration, pattern formation,
convergence, and gathering. The main question is how robots with limited
computational capabilities can coordinate to achieve a common goal.

In the classical distributed model, robots are
\emph{autonomous} (operate without centralized coordination),
\emph{homogeneous} (identical in capabilities and behaviour),
\emph{anonymous} (have no unique identifiers),
\emph{oblivious} (have no memory of past actions), and
\emph{silent} (do not communicate explicitly). Each robot executes the same
deterministic algorithm and repeatedly performs a
\emph{Look--Compute--Move} (\textsc{LCM}) cycle: it observes the current
configuration, computes a destination, and either remains stationary or moves
toward that destination. The robots are disoriented and share neither a
coordinate system nor chirality. We consider robots deployed on an infinite grid, where initially each robot occupies a distinct grid node and moves along grid edges. Movements are instantaneous, so robots are observed
only at grid nodes. They have \emph{global visibility} and
\emph{strong global multiplicity detection}; hence, they observe the complete
configuration and know the exact number of robots at each occupied node. Every
robot makes its decision only from the snapshot expressed in its local
coordinate system. The grid contains a finite set of distinguished nodes, called
\emph{meeting nodes}, which may represent charging stations, communication
points, maintenance facilities, or evacuation sites. In the
\emph{gathering over meeting nodes} problem, the robots must gather at one of
these designated nodes rather than at an arbitrary grid node. For
\emph{optimal gathering}, the destination must be a
\textsc{Weber Meeting Node}, that is, a meeting node minimizing the total
Manhattan distance from the initial robot positions. The robots operate under a fair \emph{asynchronous} (\textsc{ASYNC})
scheduler, which activates every non-faulty robot infinitely often. They have no common notion of time, and their \textsc{Look}, \textsc{Compute}, and
\textsc{Move} phases may take unpredictable amounts of time. Up to $n-2$ robots may crash permanently. A crashed robot stops
executing its \textsc{LCM} cycles but remains visible at its last position.
Since a crash is indistinguishable from an arbitrary delay, the non-faulty
robots cannot rely on the future movement of any particular robot. We characterize the initial configurations from which crash-tolerant optimal
gathering is possible and give a deterministic algorithm that gathers all
non-faulty robots at the same \textsc{Weber Meeting Node} without waiting for
any other robot.

\noindent\textbf{Related work.}
Gathering has been studied in the continuous plane under different assumptions
on visibility, multiplicity detection, orientation, and activation
schedules~\cite{prencipe_impossibility,flocchini_limited_visibility}, and in discrete
environments such as rings, trees, grids, and
hypercubes~\cite{klasing_ring,dangelo_ring,izumi_ring,dangelo_grid_tree,bose_hypercube}. For rings, impossibility and algorithmic
results under different forms of multiplicity detection were given
in~\cite{klasing_ring,dangelo_six,kamei_symmetric}. Gatherable configurations on finite grids and trees were
characterized in~\cite{dangelo_grid_tree}.

Problems with prescribed destinations include landmark covering
~\cite{fujinaga_landmarks}, embedded pattern formation~\cite{cicerone_epf},
and gathering over meeting points~\cite{cicerone_meeting_points}. Bhagat et
al.~\cite{doi:10.3233/FI-222128} studied gathering over meeting nodes on an
infinite grid in the fault-free setting. Fault-tolerant gathering has been
studied in the plane~\cite{agmon_peleg,defago_fault_tolerant,DBLP:conf/icdcs/Bouzid0T13,DBLP:journals/siamcomp/IzumiSKIDWY12,bhagat_one_axis} and on graphs under different faults
~\cite{byzantine_graph,delay_fault,faulty_token,malicious_agent}.
Pelc~\cite{pelc_crash_graph} considered labeled agents with memory and
communication, while Bose et al.~\cite{bose_crash_grid} considered anonymous
robots on finite grids with at most one crash. In contrast, we study crash-tolerant \emph{optimal gathering} over a finite
set of designated meeting nodes on an infinite grid, where the gathering
destination must be a \textsc{Weber Meeting Node}. Our model allows up to
$n-2$ permanent crash faults, which are indistinguishable from arbitrary
activation delays, and therefore does not permit the robots to rely on the
future movement of any particular robot.

\noindent\textbf{Our contributions.}
Our main contributions to crash-tolerant optimal gathering at Weber Meeting Nodes are as follows:
\begin{itemize}
\item We prove that weak multiplicity detection is insufficient for
optimal Weber-meeting-node gathering in the presence of up to $n-2$
crash faults, even under a fully synchronous (\textsc{FSYNC}) scheduler,
where all non-faulty robots are activated simultaneously in each round
and execute their \textsc{Look--Compute--Move} cycles synchronously. Thus, strong multiplicity detection is necessary.
\item We characterize reflectionally and rotationally symmetric
configurations for which optimal gathering becomes impossible when
robots occupying symmetry-breaking positions crash.

\item For every remaining configuration, we give a deterministic algorithm
that selects a unique Weber Meeting Node without relying on the movement
of any distinguished robot.
\item We introduce target-dependent restricted shortest paths and prove
that they preserve the selected region and the target
\textsc{Weber Meeting Node} throughout the execution. This guarantees
that all non-faulty robots gather at the selected target in finite time
despite up to $n-2$ crash faults.
\end{itemize}


\noindent\textbf{Terminology and notation.} We use the infinite-grid model and basic terminology of
\cite{doi:10.1142/S0129054122500174}. Let
$\mathcal G=(\mathcal V,\mathcal E)$ denote the infinite square grid,
$\mathcal M=\{m_1,\ldots,m_s\}\subset\mathcal V$ the finite set of meeting
nodes, and $\mathcal R(t)=\{r_1(t),\ldots,r_n(t)\}$ the multiset of robot
positions at time $t$. The entire configuration is
$\mathcal I(t)=(\mathcal R(t),\mathcal M)$, with initial time $t_0$.
For $u,v\in\mathcal V$, let $d(u,v)$ denote their Manhattan distance, and
let $\lambda_t(v)$ denote the number of robots occupying $v$ at time $t$.
Initially, $\lambda_{t_0}(v)\leq 1$ for every $v\in\mathcal V$. For each $m\in\mathcal M$, define
$c_t(m)=\sum_{v\in\mathcal V}d(v,m)\lambda_t(v)$ and
$\mathcal W(t)=\arg\min_{m\in\mathcal M}c_t(m)$. The nodes in
$\mathcal W(t)$ are called \textsc{Weber Meeting Nodes}. We classify configurations using the reflectional and rotational
symmetries of $\mathcal M$ and $\mathcal I(t)$. Throughout the paper, we
also use traversal strings $\Gamma$, pivot corners, Candidate Weber sets, canonical traversal strings $\alpha$, anchor corners $\mathcal{K}_{a}$, and balanced or unbalanced regions in our target-selection and movement rules. Due to page constraints, the
formal definitions of grid deployment and symmetry, the complete traversal
conventions, an illustrative computation, and the required Weber-node invariance property are provided in the APPENDIX.

\noindent\textbf{Problem Definition.}  Let $f\leq n-2$ denote an upper bound on the number of crash faults that the algorithm can tolerate. For a given execution, let $\mathcal F\subseteq\mathcal R$
denote the set of robots that crash during the execution, where
$|\mathcal F|\leq f$. The set of non-faulty robots is denoted by
$\mathcal R_{nf}=\mathcal R\setminus\mathcal F.$
Consequently, $|\mathcal R_{nf}|=n-|\mathcal F|\geq 2.$
A crashed robot remains permanently stationary at the node where it
crashes. The \emph{crash-tolerant gathering over Weber Meeting Nodes problem}
requires all non-faulty robots to gather at a common
\textsc{Weber Meeting Node} of the initial configuration. More precisely,
an execution starting from $\mathcal I(t_0)$ is successful if there exist
a finite time $T$ and a meeting node, say $m^\dagger\in\mathcal W(t_0)$ such that
$p_r(t)=m^\dagger,
\ \   \text{for every } r\in\mathcal R_{nf} \text{ and every } t\geq T.
$
The crashed robots are not required to reach $m^\dagger$.

\noindent\textbf{Partition of the Initial Configuration.} Let $\mathcal I(t_0)=(\mathcal R(t_0),\mathcal M)$ be an initial configuration, $\mathcal W(t_0)$ the set of \textsc{Weber Meeting Nodes}, and $\mathcal W_c(t_0)$ the set of \textit{Candidate Weber Meeting Nodes}. We partition all initial configurations into the following mutually exclusive and exhaustive classes.

\noindent\textbf{$\mathcal I_1$:}
The configuration admits a unique \textsc{Weber Meeting Node}; that is,
$|\mathcal W(t_0)|=1.$

\noindent\textbf{$\mathcal I_2$:}
The meeting-node configuration $\mathcal M$ is asymmetric and admits
multiple \textsc{Weber Meeting Nodes}; namely,
$|\mathcal W(t_0)|\ge2.$ In this case, there exists a unique \textit{Candidate Weber Meeting Node}, i.e.,$|\mathcal W_c(t_0)|=1.$

\noindent\textbf{$\mathcal I_3$:}
The meeting-node configuration $\mathcal M$ admits a unique line of
symmetry $\mathcal L$ and multiple \textsc{Weber Meeting Nodes}; namely,
$|\mathcal W(t_0)|\geq 2.$ Moreover, the initial robot configuration
$\mathcal R(t_0)$ is asymmetric. The following subclasses are distinguished. \textbf{(a)} \textbf{$\mathcal I_{31}$:} At least one \textsc{Weber Meeting Node} lies on the symmetry line $\mathcal L$; that is, $\mathcal W(t_0)\cap\mathcal L\neq\varnothing.$ This class is further partitioned as follows. \textbf{(a.1)} \textbf{$\mathcal I_{311}$:} If the symmetry line $\mathcal L$ is horizontal or vertical, then exactly one \textit{Candidate Weber Meeting Node} exists, i.e., $|\mathcal W_c(t_0)|=1.$ \textbf{(a.2)} \textbf{$\mathcal I_{312}$:} The symmetry line $\mathcal L$ is diagonal. In this case, either one or two \textit{Candidate Weber Meeting Nodes} exist; that is, $|\mathcal W_c(t_0)|\in\{1,2\}.$ \textbf{(b)} \textbf{$\mathcal I_{32}$:} No \textsc{Weber Meeting Node} lies on the symmetry line $\mathcal L$, and exactly two \textit{Candidate Weber Meeting Nodes} exist; that is, $\mathcal W(t_0)\cap\mathcal L=\varnothing $ \text{and} $|\mathcal W_c(t_0)|=2.$ This class is further partitioned as follows. \textbf{(b.1)} \textbf{$\mathcal I_{321}$:} The symmetry line $\mathcal L$ is horizontal or vertical. \textbf{(b.2)} \textbf{$\mathcal I_{322}$:} The symmetry line $\mathcal L$ is diagonal, and the configuration admits a unique \textit{pivot corner}. \textbf{(b.3)} \textbf{$\mathcal I_{323}$:} The symmetry line $\mathcal L$ is diagonal, and the configuration admits two \textit{pivot corners}.

\noindent\textbf{$\mathcal I_4$:}
The meeting-node configuration $\mathcal M$ admits a nontrivial rotational
symmetry with center $c$ and multiple \textsc{Weber Meeting Nodes}; namely, $|\mathcal W(t_0)|\ge2.$ The initial robot configuration $\mathcal R(t_0)$ is asymmetric. This class is further divided into the following subclasses. \textbf{(a)} \textbf{$\mathcal I_{41}$:} The center of rotation is itself a \textsc{Weber Meeting Node}; that is, $c\in\mathcal W(t_0).$ \textbf{(b)} \textbf{$\mathcal I_{42}$:} The center of rotation is not a \textsc{Weber Meeting Node}, and there exist at least two \textit{Candidate Weber Meeting Nodes}; namely, $c\notin\mathcal W(t_0)$ \text{and} $|\mathcal W_c(t_0)|\ge2.$ This class is further partitioned as follows. \textbf{(b.1)} \textbf{$\mathcal I_{421}$:} At least one Candidate \textsc{Weber Meeting Node} lies in an open quadrant; that is, $\mathcal W_c(t_0)\cap\left(\bigcup_{i=1}^{4}\mathcal Q_i\right)\neq\varnothing.$
This class is further partitioned as follows. \textbf{(b.1.1)} \textbf{$\mathcal I_{4211}$:} There exists a unique quadrant that contains the maximum number of
\textit{Candidate Weber Meeting Node}s; that is, writing $\mu_k=|\mathcal W_c(t_0)\cap\mathcal Q_k|$ and $\mu^{\max}=\max_k\mu_k$, $\big|{k:\mu_k=\mu^{\max}}\big|=1.$ \textbf{(b.1.2)} \textbf{$\mathcal I_{4212}$:}
There exist multiple quadrants that contain the maximum number of
\textit{Candidate Weber Meeting Node}s; that is, $\mathcal S:={k:\mu_k=\mu^{\max}},|\mathcal S|\ge2.$ This class is further partitioned as follows. \textbf{(b.1.2.1)} \textbf{$\mathcal I_{42121}$:} The configuration is \emph{unbalanced}: a unique quadrant in $\mathcal S$ attains $\nu^{\max}_{\mathcal S}$; that is, $\big|{k\in\mathcal S:\nu_k=\nu^{\max}_{\mathcal S}}\big|=1.$ \textbf{(b.1.2.2)} \textbf{$\mathcal I_{42122}$:} The configuration is \emph{balanced}: it is not unbalanced; that is, $\big|{k\in\mathcal S:\nu_k=\nu^{\max}_{\mathcal S}}\big|\ge 2.$ \textbf{(b.2)} \textbf{$\mathcal I_{422}$:} No Candidate \textsc{Weber Meeting Node} lies in any open quadrant; that is, $\mathcal W_c(t_0)\cap\left(\bigcup_{i=1}^{4}\mathcal Q_i\right)=\varnothing.$ This class is further partitioned as follows. \textbf{(b.2.1)} \textbf{$\mathcal I_{4221}$:} The configuration is \emph{unbalanced}: a unique quadrant attains
$\nu^{\max}$; that is, $\big|{k:\nu_k=\nu^{\max}}\big|=1$. \textbf{(b.2.2)} \textbf{$\mathcal I_{4222}$:} The configuration is \emph{balanced}: $\big|{k:\nu_k=\nu^{\max}}\big|\ge 2$.

Let $\mathcal I(t_0)$ denote the set of all initial configurations. Note
that the classes and subclasses defined above form a partition of
$\mathcal I(t_0)$. Since every robot has global visibility of the initial
configuration $\mathcal I(t_0)$, all robots uniquely identify the same class
and subclass during their \textsc{Look} phase. Hence, no ambiguity or
conflict arises in the classification.

\section{Impossibility Results}
\label{sec:impossibility}

The impossibility results are based on a common geometric obstruction:
robots occupying symmetric positions cannot deterministically break the
symmetry when no admissible gathering node is fixed by that
symmetry~\cite{doi:10.1142/S0129054122500174}.

\noindent\textbf{Impossibility inherited from the crash-free setting.}
The partitive-symmetry results of
\cite{doi:10.1142/S0129054122500174} imply that
gathering is impossible from any symmetric initial configuration in which
the fixed part of the symmetry contains no admissible target. In
particular, this includes reflectionally symmetric configurations whose
symmetry line contains no meeting node, and rotationally symmetric
configurations whose center is not a meeting node. For optimal gathering,
the same conclusion holds when the fixed part contains no
\textsc{Weber Meeting Node}. These results remain valid in our model
because every crash-free execution is an admissible execution when up to
$n-2$ robots may crash.


\begin{theorem}[Crash-induced symmetry impossibility]
\label{thm:crash-induced-symmetry}
No deterministic algorithm can guarantee gathering at a meeting node in
any of the following cases: \emph{(i)} $\mathcal I(t_0)$ is reflectionally
symmetric about a line $\mathcal L$,
$\mathcal M\cap\mathcal L=\varnothing$, and all robots initially on
$\mathcal L$ crash before moving; \emph{(ii)} $\mathcal I(t_0)$ is
rotationally symmetric about a grid node $c\notin\mathcal M$, and the
robot initially at $c$ crashes before moving; or \emph{(iii)}
$\mathcal I(t_0)$ is rotationally symmetric about $c\notin\mathcal M$,
and all robots initially on two invariant perpendicular grid lines through
$c$ crash before moving. In each case, at least one complete nontrivial
symmetry orbit outside the crashed set remains non-faulty. Consequently,
gathering at a \textsc{Weber Meeting Node} is also impossible.
\end{theorem}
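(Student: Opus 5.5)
The plan is an adversarial symmetry-preservation argument, as in the crash-free partitive-symmetry results of \cite{doi:10.1142/S0129054122500174}, adapted so that the symmetry-breaking robots are exactly the crashed ones. Let $\sigma$ be the relevant nontrivial isometry of the grid: the reflection about $\mathcal L$ in case (i), or the generator of the rotation group about $c$ in cases (ii) and (iii). Let $\mathrm{Fix}(\sigma)$ be its fixed set: the line $\mathcal L$ in case (i), the node $c$ in case (ii), and $c$ itself in case (iii). In case (iii), the robots on the two invariant perpendicular lines are removed as well.

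\textbf{Step 1.} Fix any deterministic algorithm. Let the adversary crash, at time $t_0$ and before any move, all robots the statement allows: those on $\mathcal L$, the robot at $c$, or those on the two invariant lines. The remaining robots form a union of complete nontrivial $\sigma$-orbits. This set is nonempty in each case; otherwise the statement's hypothesis that some orbit survives fails, and the bound $|\mathcal R_{nf}|\ge 2$ also supports nonemptiness. The crashed robots are stationary and lie in a $\sigma$-invariant set, so the crashed part of the configuration stays $\sigma$-symmetric forever.

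\textbf{Step 2.} Use an orbit-synchronous schedule, which is admissible under \textsc{ASYNC} and even under \textsc{FSYNC}. All non-faulty robots of each orbit are activated together with identical phase timings. Robots $r$ and $\sigma(r)$ may have local frames that are images of each other under $\sigma$; the model allows this because there is no common chirality or orientation. They then take identical snapshots in their local frames. Since the algorithm is deterministic, their computed destinations are $\sigma$-images of each other. Moves are along grid edges, the adversary resolves any tie identically, and $\sigma$ maps grid edges to grid edges. By induction over rounds, $\mathcal I(t)$ remains $\sigma$-symmetric for all $t$.

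\textbf{Step 3.} Gathering would require, at some time $T$, all non-faulty robots to occupy one node $m^\dagger\in\mathcal M$. A surviving orbit then forces $m^\dagger=\sigma(m^\dagger)$, so $m^\dagger\in\mathrm{Fix}(\sigma)\cap\mathcal M$. In case (i) this set is empty by $\mathcal M\cap\mathcal L=\varnothing$. In cases (ii) and (iii) the fixed set is $\{c\}$ and $c\notin\mathcal M$. In each case this is a contradiction. Since $\mathcal W(t_0)\subseteq\mathcal M$, gathering at a Weber Meeting Node is impossible as well.

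\textbf{Main obstacle.} The delicate step is Step 2, specifically justifying that paired robots have mirrored snapshots when $\sigma$ is a reflection, since the robots lack chirality. For a rotation by $90^\circ$, all four orbit members must receive rotated frames, and robots in an orbit must never merge in a way that breaks pairing. I would handle mergers by noting that strong multiplicity detection still yields $\sigma$-symmetric multiplicities, so merged robots remain symmetric as a multiset. For orbits of size $2$ under a $180^\circ$ rotation I would use the same pairing argument.
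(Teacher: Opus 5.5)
Your proposal is correct and follows essentially the same route as the paper's proofs of the three appendix lemmas. In both, the robots on the fixed set of the symmetry crash, the adversary activates the surviving orbits symmetrically so that anonymity and determinism preserve the symmetry forever, and any gathering node must then be fixed by the symmetry: it lies on $\mathcal L$ or equals $c$, and neither contains a meeting node. Your explicit choice of $\sigma$-related local frames (justified by the absence of common chirality), the orbit-synchronous schedule, and your treatment of merges only spell out details the paper leaves implicit in the phrase ``activated symmetrically.''
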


\begin{theorem}[Impossibility under weak multiplicity detection]
\label{thm:weak-multiplicity-impossible}
No deterministic distributed algorithm equipped only with global weak
multiplicity detection can solve optimal Weber-meeting-node gathering in
the presence of up to $n-2$ crash faults, even under \textsc{FSYNC}
activation and rigid movement.
\end{theorem}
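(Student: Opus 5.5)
The proof is an indistinguishability argument. We build two executions whose snapshots, under weak multiplicity detection, are identical for every non-faulty robot, yet whose unique correct gathering nodes differ. Since the algorithm is deterministic and the robots are oblivious, the same snapshot sequence forces the same moves. So at least one of the two executions fails.

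Weak multiplicity detection lets a robot see only whether a node is empty, singly occupied, or multiply occupied, and not the exact count. The Weber cost $c_t(m)=\sum_v d(v,m)\lambda_t(v)$, by contrast, depends on the exact counts. The construction will exploit this gap.

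\textbf{Construction.} We fix two meeting nodes $m_1,m_2$ and a small robot placement. We then consider a reachable configuration, obtained under \textsc{FSYNC} with rigid moves, with the following shape: a multiplicity point $u$, together with a few isolated robots, where $m_1$ lies on the side of $u$ and $m_2$ on the side of the isolated robots. Because at most $n-2$ robots may crash, the adversary can choose how many robots reside at $u$ and which of them are faulty.

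Two genuinely different initial configurations $\mathcal I$ and $\mathcal I'$ will be chosen so that:
\begin{itemize}
\item in $\mathcal I$ the unique \textsc{Weber Meeting Node} is $m_1$, while in $\mathcal I'$ it is $m_2$;
\item there is a time $t$ after which both executions produce the same weak-multiplicity snapshot.
\end{itemize}
For example, $\mathcal I'$ has additional robots placed near $m_2$, so that $m_2$ becomes optimal in $\mathcal I'$. In $\mathcal I$, the algorithm must move robots so that they stack; in $\mathcal I'$, the corresponding robots stack with the extra robots. By the time a multiplicity appears, the snapshots coincide.

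A simpler variant may be preferable: start directly from a fixed $\mathcal I(t_0)$ and argue about the current configuration only. Because the robots are oblivious, the initial Weber node is not remembered. Some execution must create a multiplicity point before gathering completes, since gathering itself creates one. Once the snapshot no longer determines the counts, the robots cannot recompute which node was optimal from $\mathcal I(t_0)$.

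\textbf{Steps.}
\begin{enumerate}
\item Exhibit a concrete small instance, say $n=4$ robots on a line with $m_1,m_2$ on that line, where the Weber node is unique.
\item Argue that any correct algorithm must, in some FSYNC execution, reach a configuration $C$ containing a multiplicity node $u\neq$ target while robots remain elsewhere. Where the algorithm avoids this, the adversary forces it using crashes: crash robots at appropriate moments so that the remaining non-faulty robots must merge before reaching the target.
\item Build a second execution $\mathcal I'$, with a different $n'$ or a different number of crashed robots stacked at $u$, reaching a configuration $C'$ with the same weak snapshot as $C$ but a different Weber node of its initial configuration.
\item Conclude that from $C$ and $C'$ the robots behave identically, so they gather at the same node, which is wrong for one of the two executions.
\end{enumerate}

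\textbf{Main obstacle.} The hardest step is step 2 combined with step 3: guaranteeing that every algorithm is forced into such an ambiguous multiplicity configuration. The algorithm might try to move all robots directly to the target while avoiding any intermediate stacking. I would first try to resolve this with $n=3$ and one crash. A crash of a robot sitting at the target, or a crash on a path, is indistinguishable from a delay. Hence the two non-faulty robots eventually meet somewhere. The adversary arranges that they meet first at a non-target node, exploiting the symmetric choices of shortest paths. Alternatively, the meeting happens at the target itself, and a crashed robot elsewhere leaves a snapshot that matches that of another initial configuration in which a different node is the Weber node.
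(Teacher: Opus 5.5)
\textbf{Verdict: genuine gap.} Your two-run indistinguishability framework is the right general shape, but the part that actually proves the theorem is missing. You never fix a configuration or exhibit the two executions, and you leave steps~2--3 open yourself. Several of the mechanisms you sketch for them would also not work as stated.

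\emph{Steps 2--3, control of the adversary.} Under \textsc{FSYNC} with rigid moves the adversary only chooses the initial configuration, which robots crash and when, and the local frames. It does not choose paths, timing, or where robots stack. So ``the adversary arranges that they meet first at a non-target node'' is not available against an arbitrary algorithm. A merge away from the target has to be forced geometrically: put the meeting nodes and some crashed robots on one grid line, so that (if optimality forbids detours, as the paper assumes) the unique shortest path of a non-faulty robot runs through a crashed robot's node.

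\emph{Your example $\mathcal I'$.} It fails for a related reason. At $t_0$ the extra robots near $m_2$ are visible singletons, so the algorithm sees different snapshots in the two runs from the first round on. Nothing then forces the ``corresponding robots'' to stack as they do from $\mathcal I$.

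\emph{The ``simpler variant''.} This is not an argument. An algorithm need not recompute $\mathcal W(t_0)$ from exact costs; it may keep the target recognisable from the geometry of the current configuration, which is what the paper's own algorithm does.

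\emph{Step~4.} Equal weak snapshots at one instant do not give equal behaviour afterwards. When a non-faulty robot leaves a multiplicity node, that node can become a singleton in one run while remaining a multiplicity in the other. This is avoided only if the numbers of crashed robots there agree, or are at least two in both runs.

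\emph{Single multiplicity node $u$.} If the robots know $n$ (the paper explicitly allows this), the count at a lone multiplicity node equals $n$ minus the number of visible singletons, so nothing is hidden. Your fallback of a different $n'$ presupposes that $n$ is unknown.

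\emph{What the paper does instead.} The concrete idea is to create \emph{two} multiplicity nodes whose counts can be traded while their sum stays fixed. Take $m_1=(0,0)$, $m_2=(8,0)$ and ten robots, with $r_1,r_2,r_4,r_6$ crashed at $(2,0),(3,0),(5,0),(4,-2)$; then $c_{t_0}(m_1)=45<47=c_{t_0}(m_2)$. One round of moves toward $m_1$ stacks four robots at $(3,0)$ and three at $(5,0)$. After three more crashes, the weak snapshot $(1,\mathsf M,1,1,\mathsf M)$ is consistent with the exact profiles $(1,2,1,1,5)$ and $(1,5,1,1,2)$. Their cost pairs $\bigl(c_{t'}(m_1),c_{t'}(m_2)\bigr)$ are $(43,41)$ and $(37,47)$, so the current Weber node is $m_2$ in the first and $m_1$ in the second. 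In the actual run, each hidden node holds at least two crashed robots.

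The paper then argues within this single execution that the surviving robots split between $m_1$ and $m_2$. Some robots therefore reverse direction, and their trajectories are no longer shortest paths to $m_1$. It does not build a second run from another initial configuration.

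\emph{What you still need.} To complete your version, you must produce a second execution of the \emph{same} algorithm that reaches such a weak snapshot, with matching non-faulty positions, from an initial configuration whose Weber node is $m_2$. That is exactly the step your proposal leaves open.
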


In the \textsc{FSYNC} model, all non-crashed agents are activated
synchronously in each round. Thus, in every round, each non-crashed
agent performs a complete Look--Compute--Move cycle, and all agents
observe the same global configuration before making their moves. The complete proofs of these two theorems are given in the APPENDIX. Therefore, global strong multiplicity detection is necessary, even under \textsc{FSYNC} activation and rigid movement. Let $\mathbb I$ be the set of all initial configurations and let
$\mathbb U\subseteq\mathbb I$ contain the configurations excluded by the
preceding impossibility results. The proposed algorithm is therefore
defined for every initial configuration
$\mathcal I(t_0)\in\mathbb I\setminus\mathbb U$.
\section{Algorithm \textsc{CrashTolerantWeberGathering()}}

\noindent\textbf{High-Level Idea of the Algorithm.}
The algorithm consists of two phases: \emph{Target Selection} and \emph{Crash-Tolerant Directed Gathering}. Upon activation,
a robot observes the current configuration and identifies the applicable
configuration class from the symmetry of $\mathcal M$ and the robot
positions. In the first phase, the robot either selects a common target
\textsc{Weber Meeting Node} $m^\dagger$ directly or first determines a
target region. Direct selection applies when $\mathcal M$ is asymmetric,
when $c\in\mathcal M$ is the unique \textsc{Weber Meeting Node}, or when a
suitable candidate \textsc{Weber Meeting Node} lies on the symmetry line
$\mathcal L$. In the remaining symmetric cases, the prescribed occupancy
and tie-breaking rules determine either a target half-plane
$\mathcal B^\dagger$ or a target quadrant $\mathcal Q^\dagger$. In the
rotational case, the selected region is
$\mathcal Q^\dagger\cup\mathcal L_1^{++}\cup\mathcal L_2^{++}$, where
$\mathcal L_1^{++}$ and $\mathcal L_2^{++}$ bound
$\mathcal Q^\dagger$. A unique node $m^\dagger$ associated with the
selected region is then chosen. In the second phase, every non-faulty robot executes either
\textsc{Direct Gathering} or \textsc{Transfer Plane}. Under
\textsc{Direct Gathering}, the robot invokes \textsc{MoveToDestination()}
to move toward $m^\dagger$. This applies when $m^\dagger$ is selected
directly and also when
$m^\dagger\in\mathcal L_1^{++}\cup\mathcal L_2^{++}$. Under
\textsc{Transfer Plane}, a robot first moves into $\mathcal B^\dagger$ in
the unique-reflection case or into $\mathcal Q^\dagger$ in the rotational
case, and then invokes \textsc{MoveToDestination()} to proceed toward
$m^\dagger$. The overall structure of the algorithm is illustrated in
Figure~\ref{fig:algo-flow}.
\subsection{Phase~I: Target Selection}
In this phase, every robot selects a common target \textsc{Weber Meeting Node} $m^\dagger$, either directly or, when required by symmetry, after first determining a target half-plane or quadrant.

\noindent\textbf{No-region Target Weber Meeting Node Selection:} For the configurations considered in this subsection, a unique target
\textsc{Weber Meeting Node} can be determined directly. Hence, neither a target half-plane nor a target quadrant needs to be selected. The \textit{target Weber meeting node} $m^\dagger$ is determined according to one of the following cases.

     \noindent \textit{Configuration $\mathcal I_1$:} The configuration admits
    a unique \textsc{Weber Meeting Node}, i.e., $|\mathcal W(t_0)|=1$. Every
    robot deterministically identifies this node and selects it as
    $m^\dagger$.

    \begin{figure}[ht!]
    \centering
    \includegraphics[width=0.65\linewidth]{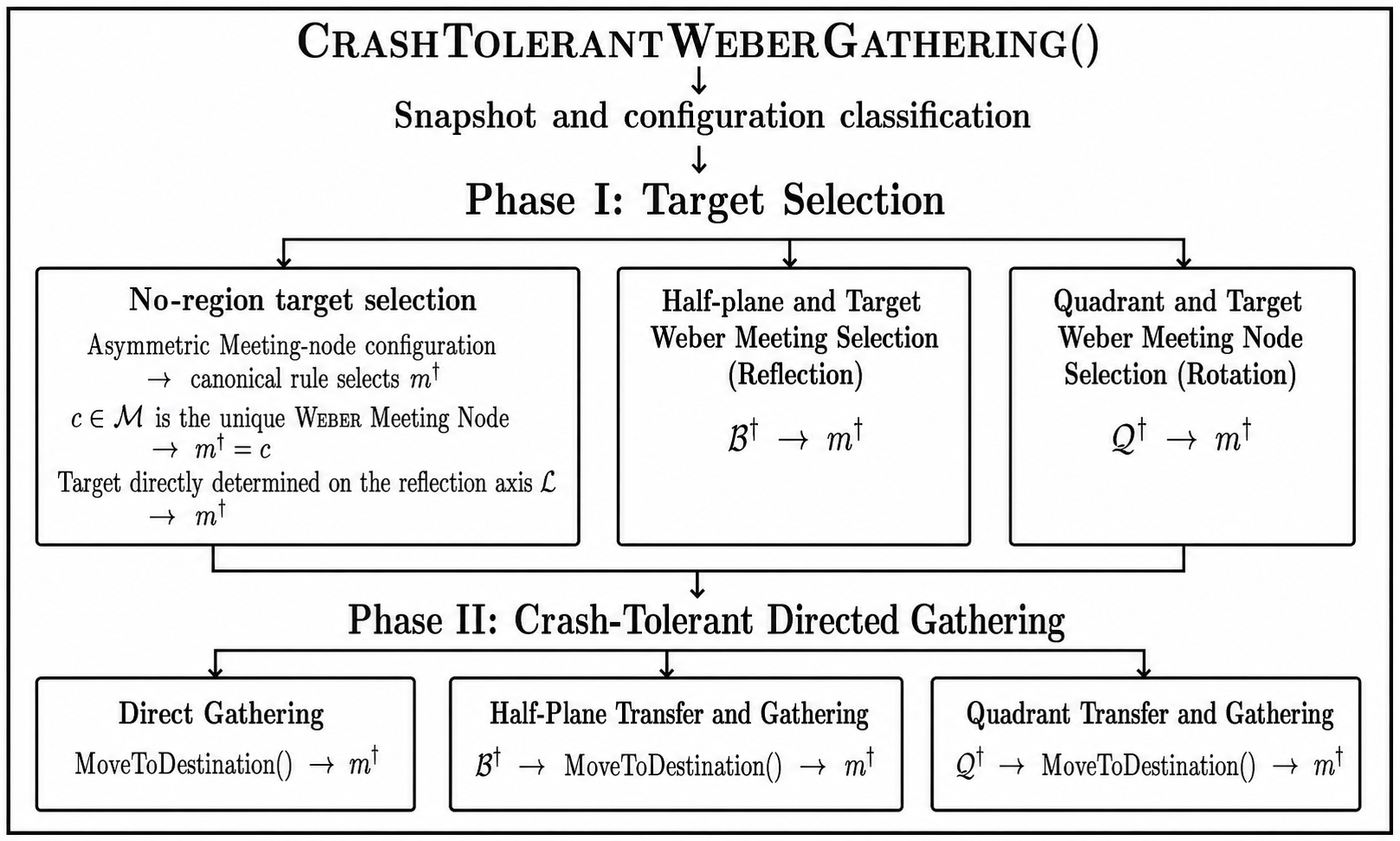}
\caption{Overview of the two phases of the
\textsc{CrashTolerantWeberGathering()} algorithm.}
    \label{fig:algo-flow}
    \vspace*{-0.5cm}
\end{figure}

     \noindent \textit{Configuration $\mathcal I_2$:} The meeting-node
    configuration $\mathcal M$ is asymmetric and admits multiple
    \textsc{Weber Meeting Nodes}, i.e., $|\mathcal W(t_0)|\ge 2$. The unique
    canonical traversal determined by $\mathcal{SER}_{\mathcal M}$ induces
    a canonical ordering of the meeting nodes; the node ranked highest in
    this ordering is uniquely selected as $m^\dagger$.

     \noindent \textit{Configuration $\mathcal I_{311}$:} At least one \textsc{Weber Meeting Node} lies on the symmetry line $\mathcal L$, i.e., $\mathcal W(t_0)\cap\mathcal L\neq\varnothing.$ Moreover, the symmetry line $\mathcal L$ is horizontal or vertical, and there exists exactly one \textit{Candidate Weber Meeting Node}, i.e., $|\mathcal W_c(t_0)|=1.$ This unique candidate node is selected as $m^\dagger$. Since $m^\dagger$ lies on $\mathcal L$, it remains invariant under reflection, and hence every robot independently identifies the same target meeting node.

     \noindent \textit{Configuration $\mathcal I_{312}$:} The symmetry line $\mathcal L$ is diagonal. In this case, either one or two \textit{Candidate Weber Meeting Nodes} exist; that is, $|\mathcal W_c(t_0)|\in\{1,2\}.$ Accordingly, the following two cases arise: If $|\mathcal W_c(t_0)|=1$, then the unique \textit{Candidate Weber Meeting Node} is selected as the target \textsc{Weber Meeting Node} $m^\dagger$. If $|\mathcal W_c(t_0)|=2$, and since the entire configuration $\mathcal I(t_0)$ is asymmetric, then the \textit{Candidate Weber Meeting Node} at the last position of the unique canonical traversal string from the \textit{\textit{anchor corner}} is selected as the target \textsc{Weber Meeting Node} $m^\dagger$. In this configuration, where $|\mathcal W_c(t_0)|=2$, the ambiguity between the two candidates is resolved once at least one of the two or more non-faulty robots moves toward $m^\dagger$. Consequently, the number of candidate \textsc{Weber Meeting Nodes} is reduced to one, i.e., $|\mathcal W_c(t)|=1$.

    \noindent\textit{Configuration $\mathcal I_{41}$:} The meeting-node
    configuration $\mathcal M$ admits a nontrivial rotational symmetry
    with center $c$, and the center of rotation is itself the unique
    \textsc{Weber Meeting Node}, i.e., $c\in\mathcal W(t_0)$. Since $c$ is
    invariant under the rotational symmetry, every robot deterministically
    selects $m^\dagger=c.$

\noindent\textbf{Half-plane and Target Weber Meeting Selection(Reflection):} Suppose that the meeting-node configuration admits a unique line of symmetry $\mathcal L$. The line $\mathcal L$ divides the grid, excluding the nodes on $\mathcal L$, into two open bounded half-planes, denoted by $\mathcal B_1$ and $\mathcal B_2$, which are mapped onto each other under reflection across $\mathcal L$. For configurations in which the \textit{target Weber meeting node} lies outside $\mathcal L$, its selection is performed in two stages. First, a unique target half-plane $\mathcal B^\dagger$ is determined; subsequently, a unique \textit{target Weber meeting node} $m^\dagger$ associated with $\mathcal B^\dagger$ is selected.

\noindent{\textbf{\textit{Phase~I-A (Target Half-Plane Selection):}}} This sub-phase determines the target half-plane
$\mathcal B^\dagger$ whenever the candidate distribution alone does not
directly determine the \textit{target Weber meeting node}.

\noindent\textit{Configuration $\mathcal I_{32}$:}
No \textsc{Weber Meeting Node} lies on the reflection line
$\mathcal L$, and exactly two \textit{Candidate Weber Meeting Nodes}
exist; that is,
$
\mathcal W(t_0)\cap\mathcal L=\varnothing$\ \text{and} $
\left|\mathcal W_c(t_0)\right|=2.
$ The target half-plane is selected according to the following
hierarchical rule.

\noindent\textit{Unequal candidate distribution:}
If $\omega_i>\omega_j$, then $\mathcal B_i$ is the unique half-plane
containing the maximum number of \textit{Candidate Weber Meeting Node}s.
Consequently, $
\mathcal B^\dagger=\mathcal B_i.
$
The other half-plane $\mathcal B_j$ is designated as the transfer half-plane.

\noindent\textit{Equal candidate distribution:}
If $\omega_1=\omega_2$, then, since $\left|\mathcal W_c(t_0)\right|=2$, each half-plane contains exactly one \textit{Candidate Weber Meeting Node}. In this case we have following two cases.

\noindent\textbf{Unbalanced case:}
If $\nu_i>\nu_j$, then the unique half-plane containing the maximum
number of robots is selected:
$
\mathcal B^\dagger=\mathcal B_i.
$
The remaining half-plane $\mathcal B_j$ becomes the transfer
half-plane. This rule applies independently of the orientation of
$\mathcal L$ and the number of pivot corners.

\noindent\textbf{Balanced case:}
If $\nu_1=\nu_2$, then neither the candidate distribution nor the robot occupancy
distinguishes the two half-planes. The tie is therefore resolved
according to the subclass of $\mathcal I_{32}$.

\noindent\textit{Subclass $\mathcal I_{321}$($\mathcal L$ is horizontal or vertical):} Let $\mathcal K_{a}^\dagger$ be the \textit{\textit{anchor corner}} determined by the canonical comparison of
the traversal strings of $\mathcal{SER}_{\mathcal I(t_0)}$. The unique
half-plane not containing $\mathcal K_{a}^\dagger$ is selected as the target half-plane: $\mathcal B^\dagger
=$the unique half-plane not containing $\mathcal K_{a}^\dagger.$

\noindent\textit{Subclass $\mathcal I_{322}$($\mathcal L$ is diagonal):} There is a unique \emph{pivot corner}. The target half-plane is selected by considering the following two
cases.

\noindent\textbf{Case 1}
$(\mathcal{SER}_{\mathcal I(t_0)}
=
\mathcal{SER}_{\mathcal M}$): Let $P$ denote the unique \emph{pivot corner}, and let
$\alpha_{PQ}$ and $\alpha_{PS}$ be the canonical traversal strings
associated with the two traversal directions originating at $P$.
Denote by $\mathcal B_{PQ}$ and $\mathcal B_{PS}$ the corresponding
half-planes. The target half-plane is selected as 
$\mathcal B^\dagger
=
\begin{cases}
\mathcal B_{PS},
&
\text{if }
\alpha_{PS}\succ_{\mathrm{lex}}\alpha_{PQ},
\\[1mm]
\mathcal B_{PQ},
&
\text{if }
\alpha_{PQ}\succ_{\mathrm{lex}}\alpha_{PS}.
\end{cases}$

\noindent\textbf{Case 2}
$(\mathcal{SER}_{\mathcal M}
\subsetneq
\mathcal{SER}_{\mathcal I(t_0)}):$ Let $\mathcal B_1$ and $\mathcal B_2$ denote the two half-planes
induced by $\mathcal L$. For each $i\in\{1,2\}$, define $\eta_i
=
\left|
\left(
\mathcal B_i
\cap
\mathcal{SER}_{\mathcal I(t_0)}
\right)
\cap
\mathbb Z^2
\right|,$ where the grid nodes lying on $\mathcal L$ are excluded from both counts. Thus, $\eta_i$ denotes the number of grid nodes of
$\mathcal{SER}_{\mathcal I(t_0)}$ lying strictly in
$\mathcal B_i$. If $\eta_1\neq\eta_2$, the half-plane containing more grid nodes is
selected as the target half-plane. Hence,
$\mathcal B^\dagger
=
\begin{cases}
\mathcal B_1,
&
\text{if } \eta_1>\eta_2,
\\[1mm]
\mathcal B_2,
&
\text{if } \eta_2>\eta_1.
\end{cases}$

Suppose now that $\eta_1=\eta_2$. Let $\mathcal K_{a}^\dagger$ denote the unique
\emph{anchor corner} of the entire initial configuration
$\mathcal I(t_0)$. Let $\gamma_1$ and $\gamma_2$ denote the two
canonical traversal strings originating at  $\mathcal K_{a}^\dagger$, and let
$\mathcal B_{\gamma_1}$ and $\mathcal B_{\gamma_2}$ denote the
corresponding half-planes. The target half-plane is then selected as
$\mathcal B^\dagger
=
\begin{cases}
\mathcal B_{\gamma_1},
&
\text{if }
\gamma_1\succ_{\mathrm{lex}}\gamma_2,
\\[1mm]
\mathcal B_{\gamma_2},
&
\text{if }
\gamma_2\succ_{\mathrm{lex}}\gamma_1.
\end{cases}$

Thus, when
$\mathcal{SER}_{\mathcal I(t_0)}
=
\mathcal{SER}_{\mathcal M}$,
the target half-plane is determined using the canonical traversal
strings originating at the unique \textit{pivot corner}. When $\mathcal{SER}_{\mathcal M}
\subsetneq
\mathcal{SER}_{\mathcal I(t_0)}$,
the half-plane containing more grid nodes is selected; if the two
grid-node counts are equal, the tie is resolved using the canonical
traversal strings originating at the unique \textit{anchor corner} of
$\mathcal I(t_0)$.

\noindent\textit{Subclass $\mathcal I_{323}$($\mathcal L$ is diagonal):} There are two \emph{pivot corners}. The target half-plane is selected by considering the following two
cases.

\noindent\textbf{Case 1}
$(\mathcal{SER}_{\mathcal I(t_0)}
=
\mathcal{SER}_{\mathcal M}):$ Using the prescribed cyclic labelling $P,Q,R,S$ of the corners of $\mathcal{SER}_{\mathcal I(t_0)}$, let $P$ and $R$ be the two pivot corners. Consider $\alpha_{PS}$ and $\alpha_{RS}$ are the two canonical traversal strings associated with the two corners $P$ and $R$, respectively. Denote by $\mathcal B_{PQ}$ and $\mathcal B_{PS}$ the corresponding half-planes. The target half-plane is selected as
$\mathcal B^\dagger
=
\begin{cases}
\mathcal B_{PS},
&
\text{if }
\alpha_{PS}\succ_{\mathrm{lex}}\alpha_{RS},
\\[1mm]
\mathcal B_{PQ},
&
\text{if }
\alpha_{RS}\succ_{\mathrm{lex}}\alpha_{PS}.
\end{cases}$

\noindent\textbf{Case 2}
$(\mathcal{SER}_{\mathcal M}
\subsetneq
\mathcal{SER}_{\mathcal I(t_0)}):$
If $\eta_1\neq\eta_2$, the half-plane containing more grid nodes is
selected as the target half-plane. Thus,
$\mathcal B^\dagger
=
\begin{cases}
\mathcal B_1,
&
\text{if } \eta_1>\eta_2,
\\[1mm]
\mathcal B_2,
&
\text{if } \eta_2>\eta_1.
\end{cases}$

Suppose that $\eta_1=\eta_2$. Let $\mathcal K_{a}^\dagger$ be the unique \emph{anchor corner} of the entire initial configuration $\mathcal I(t_0)$. For
each half-plane, consider only the canonical traversal string whose
traversal direction lies within that half-plane. Let $\beta_1$ and
$\beta_2$ denote the resulting canonical traversal strings associated
with $\mathcal B_1$ and $\mathcal B_2$, respectively. The target
half-plane is then selected as
$\mathcal B^\dagger
=
\begin{cases}
\mathcal B_1,
&
\text{if }
\beta_1\succ_{\mathrm{lex}}\beta_2,
\\[1mm]
\mathcal B_2,
&
\text{if }
\beta_2\succ_{\mathrm{lex}}\beta_1.
\end{cases}$

Thus, when
$\mathcal{SER}_{\mathcal I(t_0)}
=
\mathcal{SER}_{\mathcal M}$,
the two \textit{pivot-corner} traversal strings lying in the same half-plane are compared to determine the \textit{anchor corner} and, consequently, the target half-plane. When $\mathcal{SER}_{\mathcal M}
\subsetneq
\mathcal{SER}_{\mathcal I(t_0)}$,
the target half-plane is determined by the grid-node counts, and any
remaining tie is resolved using the canonical traversal strings
associated with the unique \textit{anchor corner}. The half-plane different from $\mathcal B^\dagger$ is denoted by
$\mathcal B^{\mathrm{tr}}$.

\noindent{\textbf{\textit{Phase~I-B (Target Weber Meeting Node Selection):}}} Once the target half-plane $\mathcal B^\dagger$ has been determined, define the set of \textit{Candidate Weber Meeting Node}s contained in the selected
half-plane by
$\mathcal W_c^\dagger(t_0)
=
\mathcal W_c(t_0)\cap\mathcal B^\dagger.$
By the target-half-plane selection rule,
$\mathcal W_c^\dagger(t_0)\neq\varnothing$. The \textit{target Weber meeting node}
$m^\dagger$ is selected as follows.

\noindent \textbf{Unique candidate in the target half-plane:} If $\left|\mathcal W_c^\dagger(t_0)\right|=1$, then its unique element is directly selected as the \textit{target Weber meeting node}. Thus, if $\mathcal W_c^\dagger(t_0)=\{m\}$, then $m^\dagger=m$. The selected half-plane contains exactly one \textit{Candidate Weber Meeting Node}.

\noindent \textbf{Multiple candidates in the target half-plane:} Suppose that $\left|\mathcal W_c^\dagger(t_0)\right|>1$. Since, $\left|\mathcal W_c(t_0)\right|=2$, this implies $\left|\mathcal W_c^\dagger(t_0)\right|=2$ under which both
\textit{Candidate Weber Meeting Node}s lie in the selected half-plane
$\mathcal B^\dagger$. Let $\tau^\dagger$ denote the prescribed canonical traversal associated
with $\mathcal B^\dagger$. According to the subclass of
$\mathcal I_{32}$, this traversal is determined as follows: For $\mathcal I_{321}$, $\tau^\dagger$ is the canonical traversal
determined by the \textit{anchor corner} $\mathcal K_{a}^{\dagger}$. For $\mathcal I_{322}$, $\tau^\dagger$ is the traversal from the
unique \textit{pivot corner} $P$ in the direction associated with
$\mathcal B^\dagger$. For $\mathcal I_{323}$, $\tau^\dagger$ is the canonical traversal
associated with the corner pair determining $\mathcal B^\dagger$.
More precisely, it is the traversal represented by $\alpha_{PS}$ when
$\mathcal B^\dagger=\mathcal B_{PS}$, and its reflected counterpart when $\mathcal B^\dagger
=
\mu_{\mathcal L}\bigl(\mathcal B_{PS}\bigr).$

Let $u^\dagger_1,u^\dagger_2,\ldots,u^\dagger_{\ell}$ be the ordered sequence of grid nodes encountered according to $\tau^\dagger$. The robots scan this sequence and select the first encountered node belonging to $\mathcal W_c^\dagger(t_0)$. Formally, define
$j^\dagger
=
\min\left\{
j\in\{1,\ldots,\ell\}
:
u^\dagger_j\in\mathcal W_c^\dagger(t_0)
\right\}.$ The \textit{target Weber meeting node} is then defined as $m^\dagger=u^\dagger_{j^\dagger}.$ \\


\noindent\textbf{Quadrant and Target Weber Meeting Node Selection(Rotation):} When the meeting-node configuration $\mathcal M$ possesses a nontrivial
rotational symmetry about the center $c$ and contains multiple
\textsc{Weber Meeting Nodes} (not lying on $c$), each robot independently identifies the
same target meeting node $m^\dagger$. Whenever a quadrant selection is
required, every robot also consistently chooses the same quadrant
$\mathcal Q^\dagger$ according to the following rules.

\noindent\textbf{\textit{Phase~I-A (Target Quadrant Selection).}}
The target quadrant $\mathcal Q^\dagger$ is selected according to the
applicable configuration class.

\noindent\textbf{Configuration $\mathcal I_{4211}$.}
The unique quadrant containing the maximum number of
\textit{Candidate Weber Meeting Node}s is selected as
$\mathcal Q^\dagger$.

\noindent\textbf{Configurations $\mathcal I_{42121}$ and
$\mathcal I_{4221}$.}
The unique quadrant in $\mathcal S$ attaining
$\nu_{\mathcal S}^{\max}$ is selected as $\mathcal Q^\dagger$.

\noindent\textbf{Configuration $\mathcal I_{42122}$.}
Here, at least two specified quadrants attain
$\nu_{\mathcal S}^{\max}$; hence, the target quadrant is selected
according to the following subclasses.

\noindent\textit{Subclass $\mathcal I_{42122}^{(1)}$.}
The meeting-node configuration admits rotational symmetry of angle
$180^\circ$, and at least one specified quadrant contains both a
\textit{Candidate Weber Meeting Node} and a \textit{pivot corner}.
Among all such quadrants attaining $\nu_{\mathcal S}^{\max}$, the unique
quadrant whose pivot corner is associated with the lexicographically
largest string $\alpha_i$ is selected as $\mathcal Q^\dagger$.

\noindent\textit{Subclass $\mathcal I_{42122}^{(2)}$.}
The meeting-node configuration admits rotational symmetry of angle
$180^\circ$, and none of the specified quadrants containing a
\textit{Candidate Weber Meeting Node} contains a \textit{pivot corner}.
Among all such quadrants attaining $\nu_{\mathcal S}^{\max}$, the unique
quadrant whose \textit{non-pivot corner} is associated with the
lexicographically largest string $\beta_i$ is selected as
$\mathcal Q^\dagger$.

\noindent\textit{Subclass $\mathcal I_{42122}^{(3)}$.}
The meeting-node configuration admits rotational symmetry of angle
$90^\circ$. Among the specified quadrants attaining
$\nu_{\mathcal S}^{\max}$, the unique quadrant whose pivot corner is
associated with the lexicographically largest canonical traversal string
$\alpha_i$ is selected as $\mathcal Q^\dagger$.

\noindent\textbf{Configuration $\mathcal I_{4222}$.}
Here, at least two specified quadrants attain
$\nu_{\mathcal S}^{\max}$; hence, robot occupancy alone cannot determine
$\mathcal Q^\dagger$.

\noindent\textit{Subclass $\mathcal I_{42221}$.}
Among all specified quadrants attaining $\nu_{\mathcal S}^{\max}$, the
unique quadrant whose pivot corner is associated with the
lexicographically smallest canonical traversal string $\alpha_i$ is
selected as $\mathcal Q^\dagger$.

\noindent\textbf{\textit{Phase~I-B (Target Weber Meeting Node Selection):}} In this sub-phase, every robot selects the unique target \textsc{Weber Meeting Node} $m^\dagger$ according to the applicable configuration class.

\noindent\textbf{Configuration $\mathcal I_{41}$:}
The center of rotation is itself a \textsc{Weber Meeting Node}, i.e.,
$c\in\mathcal W(t_0)$. Since $c$ is invariant under the rotational
symmetry, every robot deterministically selects $m^\dagger=c$.

\noindent\textbf{Configurations $\mathcal I_{4211}$,
$\mathcal I_{42121}$, $\mathcal I_{42122}^{(1)}$,
$\mathcal I_{42122}^{(2)}$, and $\mathcal I_{42122}^{(3)}$:}
Let $\mathcal Q^\dagger$ be the unique quadrant selected by the
corresponding quadrant-selection rule, let $p^\dagger$ be the corner of
$\mathcal{SER}_{\mathcal M}$ associated with $\mathcal Q^\dagger$, and
define $\mathcal W^\dagger=\mathcal W_c(t)\cap\mathcal Q^\dagger$. If a unique node of $\mathcal W^\dagger$ minimizes its distance from
$p^\dagger$, it is selected as the target \textsc{Weber Meeting Node};
that is,
$m^\dagger=\underset{m\in\mathcal W^\dagger}{\arg\min}\,
d_M(p^\dagger,m)$. Otherwise, let
$\mathcal W_{\min}^\dagger=
\{m\in\mathcal W^\dagger:
d_M(p^\dagger,m)=
\min_{w\in\mathcal W^\dagger}d_M(p^\dagger,w)\}$
be the set of tied candidates. Among these nodes, the unique node
minimizing its distance from the boundary half-line
$\mathcal L_1^{++}$ is selected; hence,
$m^\dagger=
\underset{m\in\mathcal W_{\min}^\dagger}{\arg\min}\,
d_M(m,\mathcal L_1^{++})$.

\noindent\textbf{Configurations $\mathcal I_{4221}$ and
$\mathcal I_{42221}$:}
Let $\mathcal Q^\dagger$ be the unique quadrant selected by the
corresponding quadrant-selection rule, and let $p^\dagger$ denote its
associated pivot corner. Define
$\overline{\mathcal W}^\dagger=
\mathcal W_c(t)\cap
(\mathcal L_1^{++}\cup\mathcal L_2^{++})$.
If $|\overline{\mathcal W}^\dagger|=1$, its unique element is selected
as $m^\dagger$. If $|\overline{\mathcal W}^\dagger|>1$, the asymmetry of the robot
configuration uniquely determines the direction of the canonical
traversal from $\mathcal K_a^\dagger$. Let
$u_1^\dagger,u_2^\dagger,\ldots,u_\ell^\dagger$ be the ordered sequence
of grid nodes encountered in this traversal, and define
$j^\dagger=\min\{j\in\{1,\ldots,\ell\}:
u_j^\dagger\in\overline{\mathcal W}^\dagger\}$.
The first encountered candidate is then selected; that is,
$m^\dagger=u_{j^\dagger}^\dagger$.

\subsection{Phase~II: Crash-Tolerant Directed Gathering}
Once Phase~I determines the target \textsc{Weber Meeting Node} $m^\dagger$ and, if necessary, a target region, every activated non-faulty robot executes the corresponding gathering rule described below.

\noindent\textbf{Direct Gathering:} This subroutine applies to configurations $\mathcal I_1$, $\mathcal I_2$, $\mathcal I_{31}$, and
$\mathcal I_{41}$. Every activated non-faulty robot not already located
at $m^\dagger$ executes \textsc{MoveToDestination()} and moves one hop
toward $m^\dagger$ along a shortest path. A robot already located at
$m^\dagger$ remains stationary.

\noindent\textbf{Half-Plane Transfer and Gathering(Reflection):} This subroutine is executed for every configuration in
$\mathcal I_{32}$, namely, for the subclasses
$\mathcal I_{321}$, $\mathcal I_{322}$, and $\mathcal I_{323}$.
In each of these cases, Phase~I-A selects a unique target half-plane
$\mathcal B^\dagger$. The other half-plane is denoted by
$\mathcal B^{\mathrm{tr}}$ and is called the \emph{transfer half-plane}.
Phase~I-B subsequently selects
$m^\dagger\in
\mathcal W_c(t_0)\cap\mathcal B^\dagger.$ The movements of robots in the two half-planes occur concurrently and
are determined by their current positions. A robot located in $\mathcal B^\dagger$ executes
\textsc{MoveToDestination()} and performs its next hop toward
$m^\dagger$ along a prescribed shortest path contained in
$\mathcal B^\dagger$. A robot located on the reflection line $\mathcal L$ executes
\textsc{MoveToDestination()} and selects a prescribed restricted
shortest path whose first hop enters $\mathcal B^\dagger$. It then
continues toward $m^\dagger$ along a shortest path contained in
$\mathcal B^\dagger$. 

A robot located in the transfer half-plane
$\mathcal B^{\mathrm{tr}}$ follows a prescribed restricted shortest path
toward $m^\dagger$. It first moves toward $\mathcal L$, subsequently
crosses $\mathcal L$ and enters $\mathcal B^\dagger$, and thereafter
remains in $\mathcal B^\dagger$ while proceeding toward
$m^\dagger$. Thus, its positional transition is
$\mathcal B^{\mathrm{tr}}
\longrightarrow
\mathcal L
\longrightarrow
\mathcal B^\dagger
\longrightarrow
m^\dagger.$ Once such a robot enters $\mathcal B^\dagger$, it follows the
\emph{Direct Gathering} rule at every subsequent activation until it
reaches $m^\dagger$. Every path prescribed by \textsc{MoveToDestination()} is a shortest
path toward the same target $m^\dagger$. Consequently, every completed
hop from $r$ to $r'$ satisfies
$d(r',m^\dagger)=d(r,m^\dagger)-1.$ Meanwhile, robots already belonging to $\mathcal B^\dagger$ continue
moving toward $m^\dagger$ without waiting for robots in
$\mathcal B^{\mathrm{tr}}$ to cross $\mathcal L$. Therefore, gathering
within $\mathcal B^\dagger$ and the transfer of robots from
$\mathcal B^{\mathrm{tr}}$ occur simultaneously rather than as two
separate sequential phases.

\noindent\textbf{Quadrant Transfer and Gathering(Rotation):} 
For rotationally symmetric configurations, every activated robot follows the prescribed movement rule according to whether $m^\dagger$ lies in the interior of $\mathcal Q^\dagger$ or on one of its boundary half-lines, $\mathcal L_1^{++}$ or $\mathcal L_2^{++}$.\\
\noindent{\textit{ \textbf{Case 1( $m^\dagger$ is in the interior of $\mathcal Q^\dagger$):}}} This case applies to configurations in $\mathcal I_{4211},
\mathcal I_{42121},\mathcal I_{42122}^{(1)},\mathcal
I_{42122}^{(2)},\mathcal I_{42122}^{(3)}$ whenever the selected target
$m^\dagger\in\mathcal W^\dagger$ lies strictly inside $\mathcal Q^\dagger$.
Relative to $\mathcal Q^\dagger$, the two quadrants that share a boundary
half-line with it, along $\mathcal L_1$ and $\mathcal L_2$, respectively,
are called the \textit{adjacent} quadrants and are denoted by
$\mathcal Q^{\text{adj}}_1$ and $\mathcal Q^{\text{adj}}_2$. The remaining
quadrant shares no boundary half-line with $\mathcal Q^\dagger$ and
intersects it only at the center of rotation $c$. We refer to this quadrant
as the \textit{opposite} quadrant and denote it by $\mathcal Q^{\text{opp}}$. Three types of movement occur concurrently. (i) a robot already located in $\mathcal Q^\dagger$ executes
\textsc{MoveToDestination()} toward $m^\dagger$ along a shortest path
contained in $\mathcal Q^\dagger$. (ii) a robot located in $\mathcal Q^{\text{adj}}_1$,
$\mathcal Q^{\text{adj}}_2$, or on $\mathcal L_1$ or $\mathcal L_2$,
executes \textsc{MoveToDestination()} along the prescribed restricted
shortest path that crosses directly into, and subsequently remains in,
$\mathcal Q^\dagger$; upon crossing, it continues toward $m^\dagger$ along a
shortest path contained in $\mathcal Q^\dagger$. (iii) a robot located in $\mathcal Q^{\text{opp}}$ shares no boundary
half-line with $\mathcal Q^\dagger$; it therefore executes
\textsc{MoveToDestination()} to move along the prescribed shortest path
toward $c$. Once it reaches $c$, at its next activation it selects the
prescribed restricted shortest path whose first hop enters $\mathcal
Q^\dagger$ across $\mathcal L_1$ or $\mathcal L_2$, and thereafter proceeds
as above. Its local transition is $\mathcal Q^{\text{opp}} \longrightarrow
c \longrightarrow \mathcal L_1\text{ or }\mathcal L_2 \longrightarrow
\mathcal Q^\dagger \longrightarrow m^\dagger$.
\vspace{-0.05cm}

\noindent{\textbf{\textit{\textbf{Case 2 ($m^\dagger$ is on $\mathcal L_1^{++}$ or $\mathcal L_2^\dagger$})}:}}
This case applies to configurations in $\mathcal I_{4221}$ and
$\mathcal I_{42221}$, where $\overline{\mathcal W}^\dagger$ may contain
candidates lying on the boundary half-lines, and whenever the selected
target satisfies $m^\dagger\in\mathcal L_1^{++}\cup\mathcal L_2^{++}$.
Without loss of generality, assume $m^\dagger\in\mathcal L_1^{++}$; the
case $m^\dagger\in\mathcal L_2^{++}$ is symmetric. As in Case~1, let
$\mathcal Q^{\text{adj}}_1,\mathcal Q^{\text{adj}}_2$ denote the two
quadrants adjacent to $\mathcal Q^\dagger$ along $\mathcal L_1$ and
$\mathcal L_2$ respectively, and let $\mathcal Q^{\text{opp}}$ denote the
quadrant diagonally opposite to $\mathcal Q^\dagger$. Four types of movement
occur concurrently. (i) a robot already located in $\mathcal Q^\dagger$ executes
\textsc{MoveToDestination()} toward $m^\dagger$ along a shortest path
contained in $\mathcal Q^\dagger\cup\mathcal L_1^{++}$. (ii) a robot positioned in either $\mathcal Q^{\text{adj}}_1$ or
$\mathcal Q^{\text{adj}}_2$ and sufficiently close to $m^\dagger$ so that
$m^\dagger$ is reachable via a shortest path directly executes
\textsc{MoveToDestination()} toward $m^\dagger$ along such a path. (iii) a robot located in $\mathcal Q^{\text{adj}}_1$ or
$\mathcal Q^{\text{adj}}_2$ that is farthest from $m^\dagger$, for which
no such direct shortest path exists, instead executes
\textsc{MoveToDestination()} to move toward the boundary half-line it
shares with $\mathcal Q^\dagger$ ($\mathcal L_1^{++}$ or $\mathcal
L_2^\dagger$, respectively). Once it reaches that half-line, it continues
toward $m^\dagger$ along a shortest path contained in the half-line (or,
if the half-line is $\mathcal L_2^{++}$, it proceeds via $c$ onto
$\mathcal L_1^{++}$, as in the next case).   (iv) a robot located in $\mathcal Q^{\text{opp}}$ shares no boundary
half-line with $\mathcal Q^\dagger$ or with $\mathcal L_1^{++}$; it
therefore executes \textsc{MoveToDestination()} to move along the
prescribed shortest path toward $c$. Once it reaches $c$, at its next
activation,  it selects the prescribed shortest path along $\mathcal
L_1^\dagger$ toward $m^\dagger$. Its local transition is $\mathcal
Q^{\text{opp}} \longrightarrow c \longrightarrow \mathcal L_1^{++}
\longrightarrow m^\dagger$.



\subsection{Routine \textsc{MoveToDestination()}}
After Phase~I, every robot computes the same target
\textsc{Weber Meeting Node} $m^\dagger$. The routine
\textsc{MoveToDestination()} specifies the path followed by an activated
robot toward $m^\dagger$. Let $r$ be a robot located at the grid node $p_r$.
If $p_r\neq m^\dagger$, then $r$ moves to an adjacent grid node $r'$ satisfying
$d(r',m^\dagger)=d(p_r,m^\dagger)-1$. Thus, every move is one hop along a
Manhattan shortest path toward $m^\dagger$; a robot already located at
$m^\dagger$ remains stationary. The restrictions imposed on the choice of a
shortest path depend on the symmetry of the meeting-node configuration
$\mathcal M$.

\noindent\textbf{Movement when $\mathcal M$ is asymmetric:}
When $\mathcal M$ is asymmetric, Phase~I uniquely determines $m^\dagger$, and
no target half-plane or quadrant needs to be preserved. Therefore, a robot may
follow any Manhattan shortest path from its current position to $m^\dagger$.

\noindent\textbf{Movement when $\mathcal M$ has a line of symmetry:}
Suppose that $\mathcal M$ has a unique line of symmetry $\mathcal L$, which
divides the grid into two open half-planes $\mathcal B_i$ and $\mathcal B_j$.
Phase~I either directly selects a target $m^\dagger$ on $\mathcal L$ or first
selects one half-plane as the target half-plane and subsequently determines
$m^\dagger$ within it. Since multiple shortest paths to the target may exist,
allowing robots to choose any of them may fail to break the tie used to select
the target half-plane. Therefore, the prescribed shortest paths ensure that
every robot progresses toward $m^\dagger$ without entering the non-selected
half-plane, so that the selected target half-plane and $m^\dagger$ remain
unchanged.


\noindent\textit{\textbf{Movement within the target half-plane.}}
Suppose that $r\in\mathcal B_i$. The robot selects a canonical shortest path
to $m_i$ that remains entirely in $\mathcal B_i$; a path intersecting
$\mathcal L$ or entering $\mathcal B_j$ is not used. If $r=m_i$, the robot
does not move.

\noindent\textit{\textbf{Movement toward the target half-plane}.}
Suppose that $r\in\mathcal B_j\cup\mathcal L$. The robot follows a canonical
shortest path to $m_i$ that crosses $\mathcal L$ only once. Before crossing
$\mathcal L$, the path remains in $\mathcal B_j\cup\mathcal L$; afterward,
it remains in $\mathcal B_i$. For a robot in $\mathcal B_j$, the restricted
path is chosen so that
$d(r',m_i)-d(r',m_j)\leq d(r,m_i)-d(r,m_j)$. When the robot is on
$\mathcal L$, its next hop enters $\mathcal B_i$ and satisfies
$d(r',m_i)=d(r,m_i)-1$ and $d(r',m_j)=d(r,m_j)+1$. Hence, every non-faulty
robot starting in $\mathcal B_j$ eventually crosses $\mathcal L$, enters
$\mathcal B_i$, and follows the rule for movement within the target
half-plane.

Let $\mathcal I(t')$ be obtained from $\mathcal I(t)$ after one such hop. Then $c_{t'}(m_i)-c_{t'}(m_j)
=c_t(m_i)-c_t(m_j)
+\bigl(d(r',m_i)-d(r,m_i)\bigr)
-\bigl(d(r',m_j)-d(r,m_j)\bigr)$ and therefore
$c_{t'}(m_i)-c_{t'}(m_j)\leq c_t(m_i)-c_t(m_j)$. For a hop from
$\mathcal L$ into $\mathcal B_i$, the inequality is strict:
$c_{t'}(m_i)-c_{t'}(m_j)=c_t(m_i)-c_t(m_j)-2$. The same hop also increases
the number of robots in the target half-plane. Since every robot crossing
$\mathcal L$ enters $\mathcal B_i$, concurrent movements cannot reverse this
comparison. Figure~\ref{fig:restricted-path} in the APPENDIX illustrates these paths.

\noindent\textbf{Movement when $\mathcal M$ has Rotational Symmetry:}
Suppose that $\mathcal M$ has rotational symmetry about the center $c$.
Phase~I either determines $m^\dagger$ directly or first selects the target
quadrant $\mathcal Q^\dagger$ and then selects $m^\dagger$ within it. In the
latter case, an unrestricted shortest path may pass through a non-selected
quadrant and change the occupancy or candidate comparison used to select
$\mathcal Q^\dagger$ or $m^\dagger$. Therefore, the movement paths are
restricted according to the current position of the robot relative to
$\mathcal Q^\dagger$, its boundary half-lines, and $c$.

\begin{figure}[ht!]
\centering
\begin{subfigure}[t]{0.43\linewidth}
    \centering
    \includegraphics[width=\linewidth]{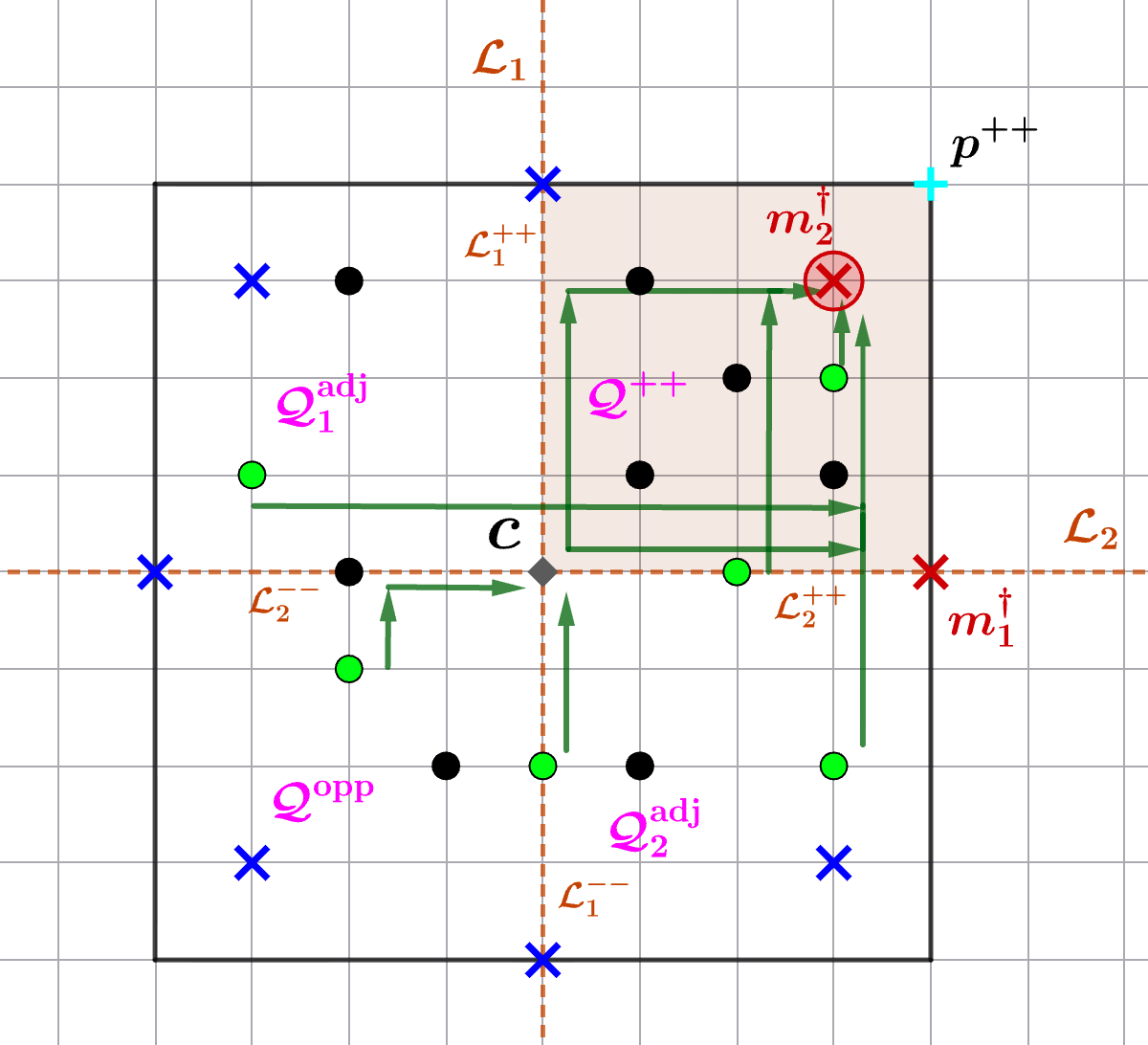}
    \caption{}
\end{subfigure}
\hfill
\begin{subfigure}[t]{0.43\linewidth}
    \centering
    \includegraphics[width=\linewidth]{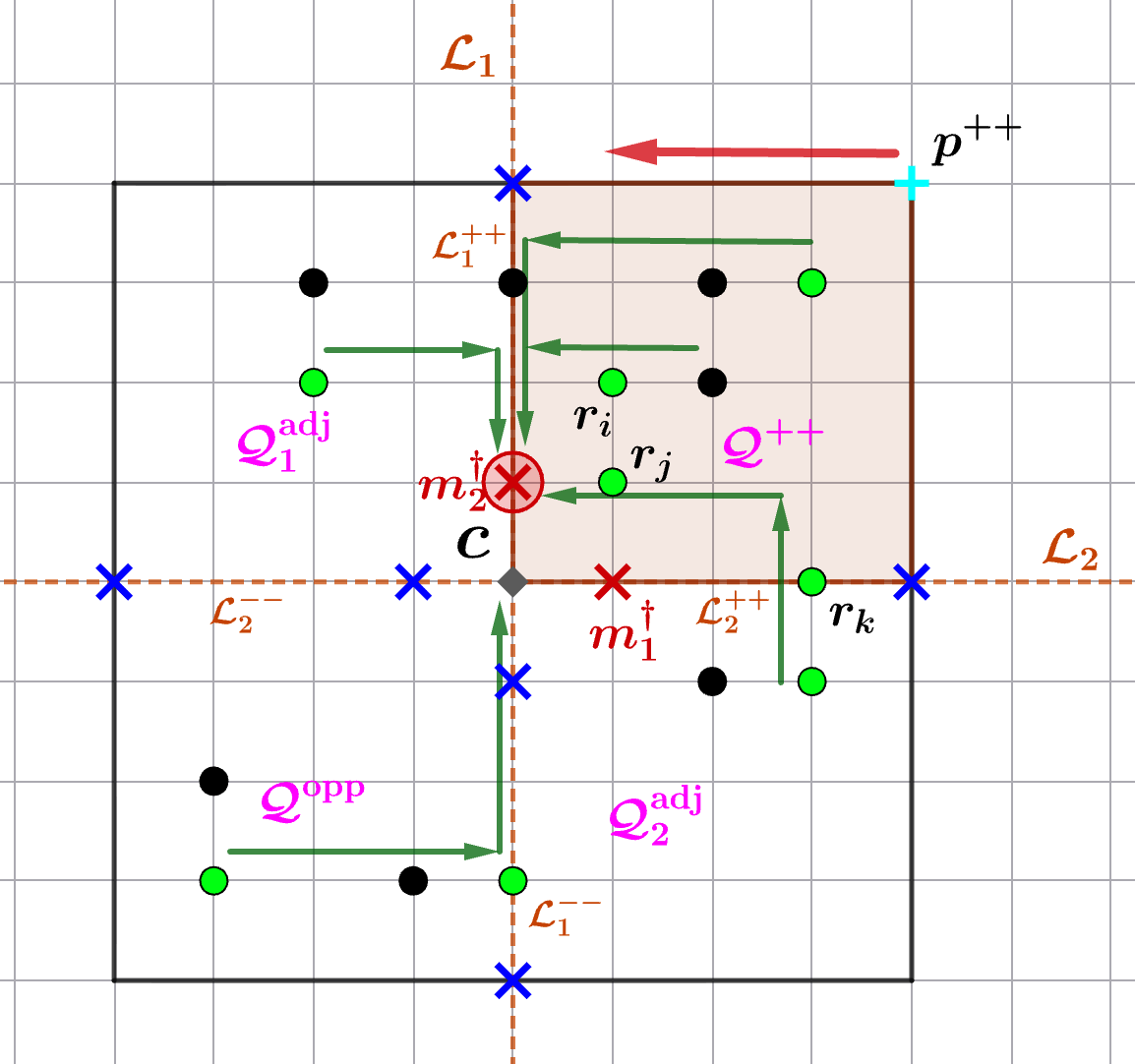}
    \caption{}
\end{subfigure}
\caption{Restricted shortest paths toward $m_2^\dagger$. The green dots
represent the robots activated to move. (a) If
$m_2^\dagger\in\mathcal Q^\dagger$, robots enter $\mathcal Q^\dagger$
directly, except those in $\mathcal Q^{\mathrm{opp}}$, which pass through
$c$. (b) If $m_2^\dagger\in\mathcal L_1^{++}$, robots first reach
$\mathcal L_1^{++}$ through their prescribed regions and then proceed along
it toward $m_2^\dagger$.}
\label{fig:restricted-path_1}
\end{figure}

\noindent\textit{\textbf{Movement toward a \textsc{Weber Meeting Node} in
$\mathcal Q^\dagger$}.}
Let $m_2^\dagger\in\mathcal W^\dagger$ be the target
\textsc{Weber Meeting Node}; see Figure~\ref{fig:restricted-path_1}(a).
Each robot is classified according to its current position relative to
$\mathcal Q^\dagger$.

\noindent\textbf{\textit{Robots already in $\mathcal Q^\dagger$}.}
A robot in $\mathcal Q^\dagger$ executes \textsc{MoveToDestination()} toward
$m_2^\dagger$ along a shortest path contained entirely in
$\mathcal Q^\dagger$.

\noindent\textbf{Robots in the adjacent quadrants
$\mathcal Q_1^{\mathrm{adj}}$ and $\mathcal Q_2^{\mathrm{adj}}$.}
These quadrants share the boundary half-lines $\mathcal L_1^{++}$ and
$\mathcal L_2^{++}$, respectively, with $\mathcal Q^\dagger$. A robot in
either adjacent quadrant follows a restricted shortest path that enters
$\mathcal Q^\dagger$ through the corresponding shared boundary without
passing through $c$, and then follows the rule for robots already in the
target quadrant. Such a robot is never directed through $c$, the other
adjacent quadrant, or $\mathcal Q^{\mathrm{opp}}$.

\noindent\textbf{\textit{Robots in $\mathcal Q^{\mathrm{opp}}$}.}
Since $\mathcal Q^{\mathrm{opp}}$ and $\mathcal Q^\dagger$ meet only at $c$,
a robot in $\mathcal Q^{\mathrm{opp}}$ first follows a shortest path to $c$.
From $c$, it enters $\mathcal Q^\dagger$ through $\mathcal L_1^{++}$ or
$\mathcal L_2^{++}$ and continues toward $m_2^\dagger$; thus,
$\mathcal Q^{\mathrm{opp}}\longrightarrow c\longrightarrow
\mathcal Q^\dagger\longrightarrow m_2^\dagger$. Several robots may meet at
$c$ while following this route. Such a multiplicity causes no ambiguity, as
$c$ lies outside all open quadrants and is equidistant from rotationally
symmetric candidate nodes.

\noindent\textbf{\textit{Robots on $\mathcal L_1^{++}$ or
$\mathcal L_2^{++}$}.}
A robot on either boundary half-line follows a shortest path toward
$m_2^\dagger$. If the path leaves the boundary, it enters
$\mathcal Q^\dagger$ directly; the robot is not routed through $c$.

\noindent\textbf{\textit{Robots on $\mathcal L_1^{--}$ or
$\mathcal L_2^{--}$.}}
A robot on a half-line bounding $\mathcal Q^{\mathrm{opp}}$ first follows a
shortest path to $c$, then enters $\mathcal Q^\dagger$ and proceeds toward
$m_2^\dagger$. For every prescribed hop from $r$ to $r'$,
$d(r',m_2^\dagger)=d(r,m_2^\dagger)-1$. No robot is routed through a
non-target open quadrant other than the quadrant in which it is currently
located. Consequently, the movement does not increase the occupancy of any
non-selected quadrant and does not alter the comparison used to select
$\mathcal Q^\dagger$.

\noindent\textit{\textbf{Movement toward \textit{Candidate Weber Meeting Node}s on
$\mathcal L_1^{++}$ and $\mathcal L_2^{++}$}.}
Without loss of generality, let $m_2^\dagger$ lie on the boundary half-line
$\mathcal L_1^{++}$ of $\mathcal Q^\dagger$; see
Figure~\ref{fig:restricted-path_1}(b). The case
$m_2^\dagger\in\mathcal L_2^{++}$ is symmetric, with the roles of
$\mathcal L_1^{++}$ and $\mathcal L_2^{++}$, and those of
$\mathcal Q_1^{\mathrm{adj}}$ and $\mathcal Q_2^{\mathrm{adj}}$,
interchanged. Here, $\overline{\mathcal W}^\dagger$ contains the selected
candidate $m_2^\dagger$ on $\mathcal L_1^{++}$ and another candidate
$m_1^\dagger$, possibly on $\mathcal L_2^{++}$.

\noindent\textbf{Robots in $\mathcal Q^\dagger$.}
A robot in $\mathcal Q^\dagger$ first follows a shortest path to the nearest
reachable node of $\mathcal L_1^{++}$; except for its endpoint, this path
remains in $\mathcal Q^\dagger$. It then moves along the half-line toward
$m_2^\dagger$.

\noindent\textbf{Robots on $\mathcal L_1^{++}$.}
A robot already on $\mathcal L_1^{++}$ follows the half-line directly toward
$m_2^\dagger$.

\noindent\textbf{Robots in $\mathcal Q_1^{\mathrm{adj}}$.}
Since $\mathcal Q_1^{\mathrm{adj}}$ shares $\mathcal L_1^{++}$ with
$\mathcal Q^\dagger$, a robot in this quadrant follows a shortest grid path
directly to $m_2^\dagger$.

\noindent\textbf{Robots in $\mathcal Q_2^{\mathrm{adj}}$ or on
$\mathcal L_2^{++}$.}
A robot in $\mathcal Q_2^{\mathrm{adj}}$ first follows a restricted shortest
path that enters $\mathcal Q^\dagger$ directly through
$\mathcal L_2^{++}$, moves to $\mathcal L_1^{++}$, and proceeds along it
toward $m_2^\dagger$. A robot on $\mathcal L_2^{++}$ is treated as a robot
in $\mathcal Q_2^{\mathrm{adj}}$.

\noindent\textbf{Robots in $\mathcal Q^{\mathrm{opp}}$.}
A robot in $\mathcal Q^{\mathrm{opp}}$ first follows the prescribed shortest
path to $c$, then moves onto $\mathcal L_1^{++}$ and proceeds along it toward
$m_2^\dagger$; thus,
$\mathcal Q^{\mathrm{opp}}\longrightarrow c\longrightarrow
\mathcal L_1^{++}\longrightarrow m_2^\dagger$. A robot already at $c$
follows the same rule from $c$ onward.

\noindent\textbf{Robots on $\mathcal L_1^{--}$ or
$\mathcal L_2^{--}$.}
A robot on either half-line bounding $\mathcal Q^{\mathrm{opp}}$ first
follows a shortest path to $c$ and then proceeds along
$\mathcal L_1^{++}$ toward $m_2^\dagger$.

For every prescribed hop from $r$ to $r'$,
$d(r',m_2^\dagger)=d(r,m_2^\dagger)-1$, and the restricted paths satisfy
$d(r',m_2^\dagger)-d(r,m_2^\dagger)
\leq d(r',m_1^\dagger)-d(r,m_1^\dagger).$ Let $\mathcal I(t')$ be obtained from $\mathcal I(t)$ after this hop. Then $c_{t'}(m_2^\dagger)-c_{t'}(m_1^\dagger)
=c_t(m_2^\dagger)-c_t(m_1^\dagger)
+\bigl(d(r',m_2^\dagger)-d(r,m_2^\dagger)\bigr)
-\bigl(d(r',m_1^\dagger)-d(r,m_1^\dagger)\bigr)$ and therefore
$c_{t'}(m_2^\dagger)-c_{t'}(m_1^\dagger)
\leq c_t(m_2^\dagger)-c_t(m_1^\dagger)$. The inequality becomes strict
when a robot makes a committed move toward $m_2^\dagger$ along
$\mathcal L_1^{++}$, or enters the side associated with
$\mathcal L_1^{++}$ from $c$ or $\mathcal Q^\dagger$. Consequently, if
$c_t(m_2^\dagger)=c_t(m_1^\dagger)$, the first such successful hop gives
$c_{t'}(m_2^\dagger)<c_{t'}(m_1^\dagger)$. The candidate $m_1^\dagger$ is
then no longer a \textsc{Weber Meeting Node}, whereas $m_2^\dagger$ remains
one; all subsequent movements preserve this strict inequality.

\section{Correctness of the Algorithm}

For a snapshot $\mathcal I(t)=(\mathcal R(t),\mathcal M)$, let
$m^\dagger(t)$ denote the target \textsc{Weber Meeting Node} computed by an
activated robot and, when applicable, let $\mathcal B^\dagger(t)$ or
$\mathcal Q^\dagger(t)$ denote the computed target region. Let $\mathcal F$
be the unknown, possibly empty, set of crash-faulty robots. Due to page constraints, the proofs of the following results are provided in the APPENDIX.

\begin{lemma}[Uniqueness of the target region]
\label{lem:region-unique}
For every configuration in $\mathcal I_{321}$, $\mathcal I_{322}$, or
$\mathcal I_{323}$, Phase~I-A uniquely selects one of $\mathcal B_1$ and
$\mathcal B_2$. For every configuration in
$\mathcal I_{4211}$, $\mathcal I_{42121}$,
$\mathcal I_{42122}^{(1)}$, $\mathcal I_{42122}^{(2)}$,
$\mathcal I_{42122}^{(3)}$, $\mathcal I_{4221}$, or
$\mathcal I_{42221}$, it uniquely selects one of
$\mathcal Q_1,\ldots,\mathcal Q_4$. 
\end{lemma}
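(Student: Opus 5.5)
The proof will be a case analysis over the rule hierarchy of Phase~I-A. Every rule has the same shape: compute a numerical or lexicographic key for each candidate region and take the region that attains the unique extremum. A tie falls through to the next rule. So for each subclass two things must be shown. First, the keys are functions of the snapshot only, so they do not depend on any robot's local coordinate system or chirality. Second, the chain of tie-breaks ends in a strict comparison. The first point comes from the observation made after the partition: class membership and the quantities $\omega_i,\nu_i,\eta_i,\mu_k$ are invariant under grid isometries. The traversal strings $\alpha,\beta,\gamma$ are likewise taken in canonical form from the corners of $\mathcal{SER}$, so all robots compute the same multiset of keys, attached to the same physical regions.

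\textbf{Reflection classes.} For $\mathcal I_{32}$ the rule first compares $\omega_1,\omega_2$, then $\nu_1,\nu_2$. A strict inequality at either step selects a unique half-plane. In the balanced case the subclass decides.
\begin{itemize}
\item For $\mathcal I_{321}$, note that $\mathcal I(t_0)$ is asymmetric (the definition of $\mathcal I_3$) while $\mathcal M$ is symmetric about $\mathcal L$. The reflection $\mu_{\mathcal L}$ therefore maps $\mathcal I(t_0)$ to a distinct configuration, so the traversal strings of $\mathcal{SER}_{\mathcal I(t_0)}$ are not all equal and the anchor corner $\mathcal K_a^\dagger$ is unique. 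Since $\mathcal L$ is horizontal or vertical and does not pass through any corner of $\mathcal{SER}$, the corner $\mathcal K_a^\dagger$ lies in exactly one of $\mathcal B_1,\mathcal B_2$, which fixes $\mathcal B^\dagger$.
\item For $\mathcal I_{322}$ and $\mathcal I_{323}$, if $\eta_1\ne\eta_2$ the selection is immediate. Otherwise the two compared strings ($\alpha_{PQ}$ vs.\ $\alpha_{PS}$, $\gamma_1$ vs.\ $\gamma_2$, or $\beta_1$ vs.\ $\beta_2$) are images of each other's traversal under $\mu_{\mathcal L}$. Equality of the two strings would make $\mathcal I(t_0)$ invariant under $\mu_{\mathcal L}$, contradicting asymmetry. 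The lexicographic comparison is therefore strict, and each string is associated with exactly one half-plane.
\end{itemize}

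\textbf{Rotation classes.} The treatment is parallel.
\begin{itemize}
\item $\mathcal I_{4211}$ and $\mathcal I_{42121}/\mathcal I_{4221}$ are unique by their defining conditions: the argmax of $\mu_k$ or $\nu_k$ is a singleton.
\item For the balanced classes $\mathcal I_{42122}^{(1)},\mathcal I_{42122}^{(2)},\mathcal I_{42122}^{(3)},\mathcal I_{42221}$, the keys are canonical strings $\alpha_i$ or $\beta_i$ attached to (pivot or non-pivot) corners of the tied quadrants. Two tied quadrants $\mathcal Q_k,\mathcal Q_{k'}$ are exchanged by some rotation $\rho$ in the symmetry group of $\mathcal M$. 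If their strings coincided, then $\rho$ would map $\mathcal R(t_0)$ to itself, because the string read from a corner of $\mathcal{SER}_{\mathcal I(t_0)}$ encodes the whole configuration. That contradicts the asymmetry of $\mathcal R(t_0)$ assumed in $\mathcal I_4$.
\item One must also check that each subclass supplies the corners it needs: in $\mathcal I_{42122}^{(1)}$ and $\mathcal I_{42122}^{(3)}$ the tied quadrants contain pivot corners, and in $\mathcal I_{42122}^{(2)}$ they contain non-pivot corners. Each corner belongs to a single quadrant, so the maximizer (or, in $\mathcal I_{42221}$, the minimizer) determines a unique quadrant.
\end{itemize}

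\textbf{Main obstacle.} The hard step is the injectivity claim: that equality of the two compared traversal strings forces a symmetry of the whole configuration. This depends on the APPENDIX conventions, namely that a canonical string read from a corner of $\mathcal{SER}_{\mathcal I(t_0)}$ determines the configuration up to the isometry that carries that corner and direction. I would first prove a small auxiliary fact: if the strings from corner--direction pairs $(x,d)$ and $(x',d')$ agree, then the isometry sending $(x,d)$ to $(x',d')$ preserves $\mathcal I(t_0)$. The second point needing care is the diagonal case, where $\mathcal L$ may pass through corners of $\mathcal{SER}$ (pivot corners). There I must confirm that each compared string's traversal direction lies in, and hence is assigned to, exactly one open half-plane, which is exactly how the rules for $\mathcal I_{322}$ and $\mathcal I_{323}$ associate strings to half-planes.
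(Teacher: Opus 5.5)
Your proposal follows the paper's proof essentially step for step. Both use the same descent through the $\omega$, $\nu$, $\eta$ comparisons, read uniqueness in $\mathcal I_{4211}$, $\mathcal I_{42121}$ and $\mathcal I_{4221}$ off their defining conditions, and make the final string comparison strict because equal strings would force a nontrivial symmetry of the whole configuration, contradicting the asymmetry of $\mathcal R(t_0)$. The paper simply asserts that last implication, whereas you isolate it as a lemma about corner--direction pairs of $\mathcal{SER}_{\mathcal I(t_0)}$. Use that lemma in its general form rather than the case-specific claims in your bullets. The isometry relating two compared strings is not always $\mu_{\mathcal L}$ or a rotation of $\mathcal M$: for $\gamma_1,\gamma_2$ at a single anchor corner it is the reflection in the diagonal through that corner, and $\mathcal{SER}_{\mathcal I(t_0)}$ need not be invariant under the rotations of $\mathcal M$. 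Any nontrivial automorphism of $\mathcal I(t_0)$ already gives the contradiction.
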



\begin{lemma}[Uniqueness of the target node]
\label{lem:node-unique}
For every solvable configuration, Phase~I determines exactly one target
\textsc{Weber Meeting Node} $m^\dagger\in\mathcal W(t)$ from the current
snapshot.
\end{lemma}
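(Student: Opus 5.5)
We prove Lemma~\ref{lem:node-unique} by a case analysis over the partition of solvable configurations into $\mathcal I_1$, $\mathcal I_2$, $\mathcal I_{31}$, $\mathcal I_{32}$, $\mathcal I_{41}$ and $\mathcal I_{42}$. Since these classes are mutually exclusive and exhaustive, every robot places the current snapshot in the same class. For each class we must establish two things. First, the rule of Phase~I returns exactly one node, and it does not depend on the robot's local coordinate system. Second, that node lies in $\mathcal W(t)$. For the second point, we note that every selection rule chooses among elements of $\mathcal W(t)$ or of $\mathcal W_c(t)\subseteq\mathcal W(t)$. We take the inclusion $\mathcal W_c\subseteq\mathcal W$ from the definition of Candidate Weber Meeting Nodes in the APPENDIX. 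So the real work is uniqueness and frame-independence.

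\emph{Direct cases.} In $\mathcal I_1$, uniqueness is immediate because $|\mathcal W(t)|=1$. In $\mathcal I_2$, $\mathcal M$ is asymmetric, so the canonical traversal of $\mathcal{SER}_{\mathcal M}$ is unique. This gives a total order on $\mathcal M$ that is the same in every local frame, and its highest-ranked Weber node is unique. In $\mathcal I_{311}$, and in $\mathcal I_{312}$ with $|\mathcal W_c|=1$, the single candidate is the target. In $\mathcal I_{312}$ with $|\mathcal W_c|=2$, the whole configuration $\mathcal I(t)$ is asymmetric, so the anchor corner and its canonical traversal string are unique. The last candidate in that string is therefore well defined. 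In $\mathcal I_{41}$ the target is $c$, which is invariant under every symmetry of $\mathcal M$.

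\emph{Region cases.} For $\mathcal I_{32}$ and $\mathcal I_{42}$ we invoke Lemma~\ref{lem:region-unique}, which gives a unique $\mathcal B^\dagger$ or $\mathcal Q^\dagger$. In $\mathcal I_{32}$ the set $\mathcal W_c^\dagger=\mathcal W_c\cap\mathcal B^\dagger$ is nonempty:
\begin{itemize}
\item if $\omega_i>\omega_j$, the selected half-plane holds the maximum number of candidates;
\item if $\omega_1=\omega_2$, each half-plane holds exactly one candidate.
\end{itemize}
If $|\mathcal W_c^\dagger|=2$, the traversal $\tau^\dagger$ is fixed by $\mathcal B^\dagger$ together with the unique anchor or pivot corner. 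Taking the first candidate met along $\tau^\dagger$ then gives a single node.

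In $\mathcal I_{4211}$, $\mathcal I_{42121}$ and $\mathcal I_{42122}^{(1),(2),(3)}$, the target is found by minimising $d_M(p^\dagger,\cdot)$ and then breaking ties by $d_M(\cdot,\mathcal L_1^{++})$. We must show that the tied set $\mathcal W^\dagger_{\min}$ has a unique minimiser of the distance to $\mathcal L_1^{++}$. In $\mathcal Q^\dagger$, write a node as $(x,y)$ with $p^\dagger$ as the reference corner. All points at a fixed Manhattan distance from $p^\dagger$ lie on one anti-diagonal. On that anti-diagonal, the distance to the axis-parallel half-line $\mathcal L_1^{++}$ is a single coordinate, and it is injective. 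This settles the tie-break. We also need $\mathcal L_1^{++}$ itself to be determined consistently, i.e. the labelling of the two boundary half-lines of $\mathcal Q^\dagger$ must be canonical. We obtain this from the canonical traversal at $p^\dagger$. In $\mathcal I_{4221}$ and $\mathcal I_{42221}$, we use the fact that $\mathcal R(t)$ is asymmetric, so $\mathcal K_a^\dagger$ and the direction of traversal from it are unique. The first element of $\overline{\mathcal W}^\dagger$ met along that traversal is then unique, provided $\overline{\mathcal W}^\dagger\neq\varnothing$. This nonemptiness follows from the defining condition of $\mathcal I_{422}$: no candidate lies in an open quadrant, so candidates lie on the axes, and the selected quadrant's boundary half-lines carry one.

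\emph{Main obstacle.} The hardest step is frame-independence of the tie-breaking data in the rotational and diagonal cases. There the labels ``$\mathcal L_1^{++}$'', ``the traversal associated with $\mathcal B^\dagger$'' and the pivot corner must be the same for robots without common chirality. Our approach is to show that every such label is defined by lexicographic comparison of traversal strings. Whenever that comparison yields a tie, the configuration would admit a symmetry of $\mathcal I(t)$, which contradicts the asymmetry hypothesis built into classes $\mathcal I_3$ and $\mathcal I_4$. Concretely, if two strings are equal, the grid isometry mapping one start corner and direction to the other preserves both $\mathcal R(t)$ and $\mathcal M$.
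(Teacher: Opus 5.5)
\textbf{The gap: nonemptiness of $\overline{\mathcal W}^\dagger$ in $\mathcal I_{4221}$ and $\mathcal I_{42221}$.} Your case analysis is the paper's own: direct classes first, then Lemma~\ref{lem:region-unique} plus a rule inside the region. Your anti-diagonal argument improves on the paper's ``the conditions of the class ensure'' uniqueness. With $c$ at the origin and $p^\dagger=(a,b)$, a node $(x,y)\in\mathcal Q^\dagger$ has $d(p^\dagger,(x,y))=(a-x)+(b-y)$ and distance $y$ to the positive $x$-half-axis, so the two keys together are injective. The problem is your claim that $\overline{\mathcal W}^\dagger\neq\varnothing$ because ``the selected quadrant's boundary half-lines carry one''. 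In $\mathcal I_{422}$ every $\omega_k=0$, so $\mathcal S=\{1,2,3,4\}$. Hence $\mathcal Q^\dagger$ is chosen from the occupancies $\nu_k$ alone, or from $\alpha$-strings in $\mathcal I_{42221}$. That choice never looks at which half-axes carry candidates.

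\textbf{A counterexample.} Take $\mathcal M=\{(\pm 10,0),(0,\pm 10)\}$ and robots at $(1,1)$, $(1,2)$, $(2,1)$, $(-100,-1)$, $(-1,-50)$.
\begin{itemize}
\item The robot set is asymmetric, so no impossibility result applies and the configuration lies outside $\mathbb U$.
\item $c_{t_0}$ equals $188$ at $(-10,0)$ and $(0,-10)$, and $202$ at $(10,0)$ and $(0,10)$.
\item All four corners of the square $\mathcal{SER}_{\mathcal M}$ are pivot corners. The corners $(-10,10)$ and $(10,-10)$ give $\mathcal W_c(t_0)=\{(-10,0),(0,-10)\}$.
\item $\nu$ is $3$ in the first quadrant, $2$ in the third, and $0$ elsewhere.
\end{itemize}
This is an $\mathcal I_{4221}$ configuration with $\mathcal Q^\dagger$ the first quadrant. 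Both candidates lie on the half-lines bounding $\mathcal Q^{\mathrm{opp}}$, so $\overline{\mathcal W}^\dagger=\varnothing$ and Phase~I selects no target.

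\textbf{Consequence.} The paper's proof has the same hole: it writes ``Since $\overline{\mathcal W}^\dagger\neq\varnothing$'' with no argument. So the defect is in the selection rule itself, not only in your write-up. Your stated derivation is nonetheless invalid. The lemma cannot be proved for these subclasses unless the rule is changed, for instance by restricting $\mathcal Q^\dagger$ to quadrants whose bounding half-lines contain a candidate, or by adding a rule for the empty case.

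\textbf{A smaller point on frame-independence.} Equal traversal strings imply a symmetry of $\mathcal I(t)$ only if the scanned region contains all of $\mathcal R(t)\cup\mathcal M$ and the isometry maps it onto itself. A scan of $\mathcal{SER}_{\mathcal M}$ from $p^\dagger$ ignores robots outside that rectangle. To label $\mathcal L_1^{++}$ canonically, use side lengths when $\mathcal{SER}_{\mathcal M}$ is not a square. Otherwise, scan a square centred at $c$ that contains the whole configuration.
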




\begin{lemma}[Invariance of the selected half-plane]
\label{lem:halfplane-preservation}
Let $\mathcal B^\dagger$ be the target half-plane selected from a
configuration in $\mathcal I_{321}$, $\mathcal I_{322}$, or
$\mathcal I_{323}$. If every subsequent move follows
\textsc{MoveToDestination()}, every configuration reached during Phase~II
selects the same half-plane $\mathcal B^\dagger$.
\end{lemma}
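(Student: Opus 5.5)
I will argue by induction on the sequence of completed hops in Phase~II. The invariant at a time $t$ is that an activated robot reading $\mathcal I(t)$ selects the same half-plane $\mathcal B^\dagger$ as at $t_0$. Crashes and ASYNC delays only affect which robots move. Every configuration reached is obtained from the previous one by a single completed hop prescribed by \textsc{MoveToDestination()}, and a stale snapshot still yields $\mathcal B^\dagger$ by the induction hypothesis. So it suffices to show that one prescribed hop from $r$ to $r'$ cannot change the outcome of the hierarchical rule of Phase~I-A.

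Write $m_i=m^\dagger\in\mathcal B_i=\mathcal B^\dagger$ and let $m_j=\mu_{\mathcal L}(m_i)$ be the other candidate. I will check the tests of Phase~I-A in order and show each is either preserved or becomes strictly more favourable to $\mathcal B_i$.

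\emph{Candidate test.} By the cost identity derived for restricted paths, every prescribed hop satisfies $c_{t'}(m_i)-c_{t'}(m_j)\le c_t(m_i)-c_t(m_j)$. Hence $m_i$ stays in $\mathcal W(t)$. The other candidates are handled by symmetry of $\mathcal M$: $\mathcal W_c$ is defined from $\mathcal M$ together with costs, and a hop toward $m_i$ never decreases $c(m_i)$ relative to another node's cost in a way that would create a new candidate in $\mathcal B_j$. Consequently $\omega_i\ge\omega_j$ is maintained. Once a robot crosses $\mathcal L$, the gap becomes strict ($-2$), so $m_j$ drops out of $\mathcal W$ and $m^\dagger$ is then the unique Weber node.

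\emph{Occupancy test.} Hops inside $\mathcal B_i$ keep $\nu_i$ fixed, and hops inside $\mathcal B_j$ toward $\mathcal L$ keep $\nu_j$ fixed. A hop from $\mathcal L$ into $\mathcal B_i$ increases $\nu_i$, and a hop from $\mathcal B_j$ onto $\mathcal L$ decreases $\nu_j$. No prescribed hop enters $\mathcal B_j$. Therefore $\nu_i-\nu_j$ is nondecreasing, and an unbalanced configuration stays unbalanced in favour of $\mathcal B_i$.

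\emph{Balanced case.} The delicate part is when the configuration was balanced at $t_0$ and the choice rested on anchor-corner or pivot-corner traversal strings, or on the $\eta$ counts, all of which the move may disturb. I will argue that this tie can persist only until the first completed hop. In the balanced case, a robot in $\mathcal B_i$ hopping inside $\mathcal B_i$ keeps $\nu$ equal. However, the strictness property for crossings, together with the fact that $|\mathcal R_{nf}|\ge2$ and some non-faulty robot eventually moves, must break the tie. My plan is to show the stronger fact that after any completed hop the tie is broken in favour of $\mathcal B_i$ by an earlier test in the hierarchy.

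The concern is a robot moving within $\mathcal B_i$, which changes neither $\nu$ nor crosses $\mathcal L$. I will try to show this hop strictly decreases $c(m_i)-c(m_j)$ unless $r$ and $r'$ are equidistant from $m_j$. When strictness fails, I will fall back on checking directly that the decisive string comparison is preserved:
\begin{itemize}
\item the anchor corner and $\mathcal{SER}$ are unchanged, or grow only toward $\mathcal B_i$;
\item the traversal strings change only in a position that makes the $\mathcal B_i$-side string lexicographically larger.
\end{itemize}
I expect this last verification, over the subclasses $\mathcal I_{321}$, $\mathcal I_{322}$ and $\mathcal I_{323}$ and Cases~1 and~2, to be the main obstacle. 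The first thing I would try there is to show that $m_i$ becomes the unique Weber node, placing the configuration in $\mathcal I_1$, which also selects $m^\dagger$.
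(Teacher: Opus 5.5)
Your induction over completed hops and the monotonicity of $\nu_i-\nu_j$ match the paper's proof. The gap is where you flag it: the balanced case, where $\mathcal B^\dagger$ was chosen by the anchor-corner or pivot-corner string comparison or by $\eta_1$ versus $\eta_2$, is never proved, and the route you sketch cannot work. You hope each hop either lowers $c(m_i)-c(m_j)$ strictly or makes $m_i$ the unique \textsc{Weber Meeting Node}, so that an earlier criterion takes over. But adjacent grid nodes always differ by exactly one in their distance to $m_j$, so your exception ``$r$ and $r'$ equidistant from $m_j$'' never occurs. The true dichotomy is that the gap drops by $2$ when the hop moves away from $m_j$ and is unchanged when it moves toward $m_j$. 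The second case is typical. For example, take $\mathcal L$ vertical and $m_j=\mu_{\mathcal L}(m_i)$. A vertical hop of a robot in $\mathcal B_i$ toward $m_i$ also approaches $m_j$, so $\omega$, $\nu$ and $c(m_i)-c(m_j)$ are unchanged and $m_j$ stays a Weber node. The decision therefore still rests on the tie-break. Yet the hop changes the $\langle f,\lambda\rangle$ entries at two positions of every traversal string. It may even shrink the enclosing rectangle of the configuration, moving its corners and changing the $\eta_i$.

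The decisive comparison therefore has to be shown to survive such hops directly, and the paper's proof supplies this step. Rather than escalating to an earlier criterion, it argues that the first criterion distinguishing the half-planes keeps its sign. For the string criteria it uses that the restricted paths never move a robot to an earlier position in the traversal of the non-selected half-plane. Hence the first differing entry favouring $\mathcal B^\dagger$ is preserved or replaced by an earlier entry with the same preference, and an analogous preservation claim is made for the $\eta$ counts. Without such an argument your proof covers at most the unequal-candidate and unbalanced cases.

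Those cases also need repair:
\begin{enumerate}
\item Your candidate test yields only $\omega_i\ge\omega_j$. Suppose $\omega_i>\omega_j$ decided the selection while $\nu_j>\nu_i$. A later drop to $\omega_i=\omega_j$ would pass the decision to occupancy, which may still favour $\mathcal B_j$ since $\nu_i-\nu_j$ is merely nondecreasing. So you need the strict inequality to be preserved.
\item The justification ``handled by symmetry of $\mathcal M$'' is not an argument. Each element of $\mathcal W_c$ is the last Weber node along a pivot traversal. When Lemma~\ref{wbn} removes nodes from $\mathcal W$, that candidate falls back to the latest surviving Weber node of the traversal, which may lie across $\mathcal L$.
\item Writing $m_j=\mu_{\mathcal L}(m_i)$ presupposes one candidate per half-plane. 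This fails when both candidates lie in $\mathcal B^\dagger$, a case Phase~I-B treats explicitly.
\item A crossing does not make $m^\dagger$ the unique \textsc{Weber Meeting Node}, because $\mathcal W(t_0)$ may contain nodes other than the two candidates.
\item For diagonal $\mathcal L$, even the drop of $2$ at a crossing can fail. With $\mathcal L\colon y=x$, $m_i=(0,2)$ and $m_j=(2,0)$, the only admissible hop $(-1,-1)\to(-1,0)$ approaches both.
\end{enumerate}
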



\begin{lemma}[Invariance of the selected quadrant]
\label{lem:quadrant-preservation}
Let $\mathcal Q^\dagger$ be the target quadrant selected at the beginning of
Phase~II. Every subsequent configuration reached during Phase~II selects the
same quadrant $\mathcal Q^\dagger$.
\end{lemma}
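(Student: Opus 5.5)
We argue by induction on the sequence of completed hops in Phase~II. We show that each single hop performed according to \textsc{MoveToDestination()} either leaves the quadrant-selection quantities unchanged or moves them strictly in favour of $\mathcal Q^\dagger$. Since $\mathcal M$ is fixed, the rotational symmetry of $\mathcal M$, the centre $c$, the four open quadrants, and the half-lines $\mathcal L_1^{\pm\pm},\mathcal L_2^{\pm\pm}$ are the same in every snapshot. Only three robot-dependent ingredients can change: the candidate set $\mathcal W_c(t)$ (through the costs $c_t(\cdot)$), the occupancies $\nu_k$, and the canonical strings $\alpha_i,\beta_i$. Under \textsc{ASYNC}, a robot may act on an outdated snapshot. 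However, by the induction hypothesis every snapshot it could have seen already selected $\mathcal Q^\dagger$ and $m^\dagger$, so its hop is still a prescribed hop toward the same target. It therefore suffices to analyse one hop $r\to r'$ from a configuration that selects $\mathcal Q^\dagger$.

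\textbf{Step 1 (occupancy).} We run through the routing cases of the rotational movement rules. A robot in $\mathcal Q^\dagger$ stays in $\mathcal Q^\dagger$. A robot in $\mathcal Q^{\mathrm{adj}}_i$ either stays in its own quadrant or enters $\mathcal Q^\dagger$ (or $\mathcal L_1^{++}$). Robots in $\mathcal Q^{\mathrm{opp}}$ or on $\mathcal L_1^{--},\mathcal L_2^{--}$ move toward $c$ and then into $\mathcal Q^\dagger$ or onto $\mathcal L_1^{++}$. Hence $\nu_{\mathcal Q^\dagger}$ never decreases, each other $\nu_k$ never increases, and whenever a robot leaves a quadrant $\mathcal Q_k\neq\mathcal Q^\dagger$ we get $\nu_k$ decreasing. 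Consequently, in $\mathcal I_{42121}$ and $\mathcal I_{4221}$ the strict maximality of $\nu$ on $\mathcal Q^\dagger$ is preserved.

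\textbf{Step 2 (candidates).} We use the cost-difference identity established for restricted paths: for every candidate $m$ rotationally equivalent to $m^\dagger$, $c_{t'}(m^\dagger)-c_{t'}(m)\le c_t(m^\dagger)-c_t(m)$. This holds because each prescribed hop decreases $d(\cdot,m^\dagger)$ by one while changing $d(\cdot,m)$ by $\pm1$. So $m^\dagger$ stays in $\mathcal W(t)$, and any candidate outside $\mathcal Q^\dagger$ can only drop out of $\mathcal W_c$. Therefore $\mu_{\mathcal Q^\dagger}$ stays maximal, and it becomes the unique maximum as soon as some competing candidate ties are broken. In particular, configurations in $\mathcal I_{4211}$ remain in $\mathcal I_{4211}$ with the same quadrant. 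Configurations in $\mathcal I_{4212}$ either stay in the same subclass or move into $\mathcal I_{4211}$, and in either case $\mathcal Q^\dagger$ is the winner. Here we invoke Lemma~\ref{lem:node-unique} and Lemma~\ref{lem:region-unique} to ensure the reclassified snapshot still selects a unique quadrant, and Step~1 shows that quadrant is $\mathcal Q^\dagger$.

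\textbf{Step 3 (balanced tie-breaking, the main obstacle).} For $\mathcal I_{42122}^{(1)}$--$^{(3)}$ and $\mathcal I_{42221}$, selection depends on lexicographic comparison of traversal strings, which a hop could in principle reverse. My plan is to show the balanced state cannot survive the first completed hop. Any robot outside $\mathcal Q^\dagger$ that moves either enters $\mathcal Q^\dagger$ or reduces the cost of $m^\dagger$ relative to its rotational images, so the configuration becomes unbalanced in favour of $\mathcal Q^\dagger$ and is then handled by Steps~1--2. A robot inside $\mathcal Q^\dagger$ that moves toward $m^\dagger$ strictly lowers $c(m^\dagger)$ relative to the other tied candidates, which eliminates them from $\mathcal W_c$ and moves the configuration to $\mathcal I_{4211}$. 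I would verify the strictness claim case by case, using that $m^\dagger$ lies strictly closer to the moving robot than its rotated copies after the hop. The delicate case is a robot at $c$, which is equidistant from all rotated candidates. For it I would argue that its first hop out of $c$ enters $\mathcal Q^\dagger$ or $\mathcal L_1^{++}$, so it raises $\nu_{\mathcal Q^\dagger}$ or breaks the cost tie, and hence strictly breaks the balance.

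Combining Steps~1--3 with induction over hops gives the lemma.
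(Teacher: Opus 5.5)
Your Steps~1 and~2 follow the paper's proof: no prescribed hop carries a robot out of $\mathcal Q^\dagger$ into another open quadrant, occupancies of competing quadrants cannot grow, and Lemma~\ref{wbn} keeps $m^\dagger$ a \textsc{Weber Meeting Node}. The gap is in Step~3. There you replace the paper's key ingredient with a false claim, namely that a balanced configuration is destroyed by the first hop.

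A single hop decreases the distance to every node lying on the side toward which the robot moves in the changed coordinate. So it can bring a robot one unit closer to $m^\dagger$ and to all of its rotated copies at once. Take $c=(0,0)$, $\mathcal Q^\dagger$ the positive quadrant, and a $90^\circ$-symmetric $\mathcal M$ with $m^\dagger=(2,2)$ and copies $(-2,2),(-2,-2),(2,-2)$.
\begin{itemize}
\item Every shortest-path hop from $(5,3)\in\mathcal Q^\dagger$ toward $m^\dagger$ decreases the distance to all four nodes and keeps the robot in its quadrant.
\item The same holds for every hop from $(-5,-5)\in\mathcal Q^{\mathrm{opp}}$ toward $c$.
\end{itemize}
In both cases the cost ties among these copies, and every $\nu_k$, are unchanged.

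Your ``delicate case'' fails for the same reason. Every neighbour of $c$ lies on $\mathcal L_1\cup\mathcal L_2$, which is excluded from the counts $\nu_k$. The hop $(0,0)\to(1,0)$ onto $\mathcal L_1^{++}$ brings the robot one unit closer to both $(2,2)$ and $(2,-2)$.

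So the configuration can remain in $\mathcal I_{42122}^{(1)}$--$\mathcal I_{42122}^{(3)}$ or $\mathcal I_{42221}$ for many hops, for example when the robots whose moves would break the tie have crashed. Each such hop changes the $\alpha$- and $\beta$-strings that decide the tie. The missing idea is the one the paper uses here. You must show that every restricted hop either preserves the first unequal entry of the lexicographic comparison favouring $\mathcal Q^\dagger$, or creates an earlier unequal entry with the same preference. Then the string comparison itself cannot reverse.

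Two steps you treat as routine also need more care, although the paper's own proof is equally brief on them.
\begin{itemize}
\item In the boundary-target case, which is the one used for $\mathcal I_{4221}$ and $\mathcal I_{42221}$, robots in $\mathcal Q^\dagger$ are routed onto $\mathcal L_1^{++}$. That half-line is excluded from the counts. So ``$\nu_{\mathcal Q^\dagger}$ never decreases'' is false there, and occupancy monotonicity alone does not protect the unique maximum in $\mathcal I_{4221}$.
\item $\mathcal W(t')\subseteq\mathcal W(t)$ does not give $\mathcal W_c(t')\subseteq\mathcal W_c(t)$. Since $\mathcal W_c$ collects the \emph{last} Weber node of each pivot traversal, deleting such a node promotes an earlier Weber node to a candidate, possibly in a non-selected quadrant. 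The node $m^\dagger$ does remain a candidate. But your claim that candidates outside $\mathcal Q^\dagger$ ``can only drop out of $\mathcal W_c$'' needs a separate argument.
\end{itemize}
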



\begin{lemma}[Invariance of the selected target]
\label{lem:target-invariant}
Let $m^\dagger$ be selected from $\mathcal I(t)$. If every move completed
after time $t$ follows \textsc{MoveToDestination()} toward $m^\dagger$, then
every subsequent snapshot selects the same target $m^\dagger$.
\end{lemma}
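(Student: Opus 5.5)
The proof goes by induction on the number of completed hops after time $t$. We show that the selection rule of Phase~I, applied to any snapshot reachable by moves of \textsc{MoveToDestination()}, returns the same $m^\dagger$. The key quantity is the cost comparison between $m^\dagger$ and every other meeting node $m$. For a hop $r\to r'$ toward $m^\dagger$ we have $d(r',m^\dagger)=d(r,m^\dagger)-1$, while the triangle inequality gives $d(r',m)\ge d(r,m)-1$. Hence
\[
c_{t'}(m^\dagger)-c_{t'}(m)\le c_t(m^\dagger)-c_t(m).
\]
So if $m^\dagger\in\mathcal W(t)$, it remains in $\mathcal W(t')$, and no new node enters $\mathcal W$. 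This holds for all moves, including concurrent and crashed-robot-interrupted ones, because crashed robots simply stop contributing hops. This invariant does not depend on the class, and it handles $\mathcal I_1$ immediately: uniqueness of the Weber node persists, and it can only become strict.

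Next I split by the class of the initial snapshot, following the structure of Phase~I.

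\emph{$\mathcal I_2$.} $\mathcal M$ and $\mathcal{SER}_{\mathcal M}$ are unaffected by robot motion, so the canonical ordering of meeting nodes is fixed. By the invariant, $\mathcal W(t')\subseteq\mathcal W(t)$ and $m^\dagger\in\mathcal W(t')$. The highest-ranked Weber node is therefore still $m^\dagger$, provided the candidate set is interpreted as in the appendix.

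\emph{$\mathcal I_{311}$ and $\mathcal I_{41}$.} The target is the unique candidate on $\mathcal L$, or the centre $c$. Monotonicity keeps it as a Weber node, and any other candidate either drops out or stays tied but is ranked out by the same fixed rule.

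\emph{$\mathcal I_{312}$ with two candidates.} After the first hop toward $m^\dagger$, the other candidate loses its tie strictly, since the hop moves away from its reflected image. From then on this reduces to the one-candidate case.

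\emph{Region classes ($\mathcal I_{32}$ and the quadrant classes).} I invoke Lemma~\ref{lem:halfplane-preservation} and Lemma~\ref{lem:quadrant-preservation}, which give that $\mathcal B^\dagger$ or $\mathcal Q^\dagger$ is reselected. It then remains to show that Phase~I-B picks the same node inside the region. For this I use the restricted-path inequalities stated in the description of \textsc{MoveToDestination()}. For the rival candidate $m_j$ (resp.\ $m_1^\dagger$), the quantity $c(m^\dagger)-c(m_j)$ is nonincreasing, and it strictly decreases at the first crossing of $\mathcal L$ or the first committed hop along $\mathcal L_1^{++}$. So either the rival disappears from $\mathcal W_c$ and $m^\dagger$ is the unique candidate in the region, or the tie persists. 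In the second case the secondary criterion used for the choice must be shown unchanged. That criterion is the first node in the traversal $\tau^\dagger$ from the anchor/pivot corner, or the minimum distance to $p^\dagger$ followed by distance to $\mathcal L_1^{++}$. Distances to $p^\dagger$ and to $\mathcal L_1^{++}$ depend only on $\mathcal M$, so they are fixed.

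The main obstacle is the traversal-based tie-breaks in $\mathcal I_{32}$ and $\mathcal I_{4221}/\mathcal I_{42221}$. The anchor corner and the string $\tau^\dagger$ are computed from $\mathcal{SER}_{\mathcal I(t)}$, which can shrink as robots move, so the direction of traversal could in principle flip. My first attempt would be to argue that whenever a tie still persists, every completed hop has been made within $\mathcal B^\dagger$ or toward $m^\dagger$ without crossing the region boundary. Any robot hop that would alter the anchor comparison is a hop from $\mathcal L$ or $\mathcal B^{\mathrm{tr}}$ into $\mathcal B^\dagger$, and such a hop strictly breaks the cost tie in favour of $m^\dagger$ first. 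That makes the tie-break irrelevant precisely at the moments it might change. I would verify this case by case against the appendix definitions of anchor corners.
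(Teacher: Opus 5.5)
\textbf{There is a gap: the traversal-based tie-break is left open, and your proposed fix fails.}

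Your overall structure matches the paper's proof. The cost-difference monotonicity you derive from the triangle inequality is Lemma~\ref{wbn}: $m^\dagger\in\mathcal W(t')$ and $\mathcal W(t')\subseteq\mathcal W(t)$. The region is carried by Lemmas~\ref{lem:halfplane-preservation} and~\ref{lem:quadrant-preservation}. The distance-based tie-breaks are fixed because $p^\dagger$ and $\mathcal L_1^{++}$ depend only on $\mathcal M$.

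The gap is the step you flag yourself. You never show that the anchor-corner traversals still place $m^\dagger$ in the selected position (first, or last in $\mathcal I_{312}$) in later snapshots. These traversals are used in $\mathcal I_{312}$, in $\mathcal I_{321}$ with both candidates in $\mathcal B^\dagger$, and in $\mathcal I_{4221}$ and $\mathcal I_{42221}$. The route you sketch cannot close this as stated. Its premise is that only a hop from $\mathcal L$ or $\mathcal B^{\mathrm{tr}}$ into $\mathcal B^\dagger$ can affect the anchor comparison, and that such a hop breaks the cost tie first. That premise does not hold:
\begin{itemize}
\item Every completed hop changes $\lambda$ at two nodes, and hence entries of every canonical string. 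It may also shrink $\mathcal{SER}_{\mathcal I(t)}$.
\item A hop toward $m^\dagger$ changes $c(m^\dagger)-c(m')$ by $-1-\bigl(d(r',m')-d(r,m')\bigr)$, which is $0$ whenever the hop also approaches $m'$.
\item For example, a robot in $\mathcal Q^{\mathrm{opp}}$ stepping toward $c$ approaches the candidates on $\mathcal L_1^{++}$ and $\mathcal L_2^{++}$ equally.
\item A robot inside $\mathcal B^\dagger$ can approach both candidates when both lie in $\mathcal B^\dagger$, which is exactly when $\tau^\dagger$ is needed.
\item The strict $-2$ for a hop from $\mathcal L$ into $\mathcal B^\dagger$ is measured against the rival in the opposite half-plane, not against a rival on the same side.
\end{itemize}

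The same objection defeats your $\mathcal I_{312}$ step. Direct Gathering allows any shortest path, so the first hop toward $m^\dagger$ need not move away from the other candidate. The two-candidate case therefore does not reduce to the one-candidate case; it again rests on the anchor-corner rule.

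\textbf{How the paper handles this step.} The paper does not try to make the tie-break irrelevant. It argues instead that the traversal order itself is stable. Axes, centre, pivot corners, boundary half-lines and meeting-node traversal directions depend only on $\mathcal M$. The anchor-based comparisons are asserted to be preserved together with the region: the preservation lemmas claim that restricted paths keep the first differing canonical-string entry favouring the selected region, or create an earlier one with the same preference. With the order fixed, Lemma~\ref{wbn} means candidates can only disappear, so no node preceding $m^\dagger$ in that order can become a candidate.

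\textbf{What you still need.} To finish along these lines, you need an explicit invariance statement for the anchor corner and its traversal direction under the restricted moves. Where the anchor corner enters only in Phase~I-B (e.g.\ $\mathcal I_{4221}$, or $\mathcal I_{321}$ with both candidates in $\mathcal B^\dagger$), the region-preservation lemmas protect only the criterion that chose the region. In those cases this invariance must be argued directly rather than cited.
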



\begin{lemma}[Independent progress under crash faults]
\label{lem:crash-noninterference}
Let $r\notin\mathcal F$ be a non-faulty robot activated at time $t$, with
$r(t)\neq m^\dagger$, and let $t'>t$ be the time at which it completes
the move prescribed by \textsc{MoveToDestination()}. Then
$d(r(t'),m^\dagger)\leq d(r(t),m^\dagger)-1$. The completion of this move
does not require the activation or movement of any other robot.
\end{lemma}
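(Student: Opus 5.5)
The plan is to combine the earlier invariance lemmas with the locality of the \textsc{Look--Compute--Move} cycle. First, we fix the snapshot $\mathcal I(t)$ taken by $r$ in its \textsc{Look} phase. By Lemma~\ref{lem:node-unique}, the \textsc{Compute} phase of $r$ yields a unique target. By Lemma~\ref{lem:target-invariant}, this target equals the common $m^\dagger$, because every move completed before $t$ followed \textsc{MoveToDestination()} toward $m^\dagger$ by induction from $t_0$. When a region is involved, Lemmas~\ref{lem:halfplane-preservation} and~\ref{lem:quadrant-preservation} show that $r$ also computes the same $\mathcal B^\dagger$ or $\mathcal Q^\dagger$. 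Thus the hop computed by $r$ is one of the prescribed hops of \textsc{MoveToDestination()} for its current position relative to the common target and region.

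Next, we verify case by case that the prescribed next node $r'$ is an adjacent grid node with $d(r',m^\dagger)=d(r(t),m^\dagger)-1$. In the asymmetric case this holds because any shortest path is allowed. In the reflection case it holds for robots in $\mathcal B^\dagger$, on $\mathcal L$, and in $\mathcal B^{\mathrm{tr}}$. In the rotational case it holds for robots in $\mathcal Q^\dagger$, in the adjacent quadrants, in $\mathcal Q^{\mathrm{opp}}$, on the half-lines, and at $c$. The step I expect to be the main obstacle is confirming that each restricted route is still a genuine shortest path, so that the distance drops by exactly one and no detour step is taken. Two cases need care:
\begin{itemize}
\item For $\mathcal Q^{\mathrm{opp}}\to c\to\mathcal Q^\dagger$, we use $d(p,m^\dagger)=d(p,c)+d(c,m^\dagger)$. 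This holds because $p$ and $m^\dagger$ lie in closed opposite quadrants about $c$, so their coordinate differences have consistent signs.
\item For a robot in $\mathcal Q_2^{\mathrm{adj}}$ routed through $\mathcal L_2^{++}$ toward a target on $\mathcal L_1^{++}$, the analogous sign argument should again show that the route has Manhattan length. If some far configuration breaks it, I would fall back on the routine's defining requirement that each chosen neighbour reduce the distance, which always exists when $p_r\neq m^\dagger$.
\end{itemize}

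For the independence claim, we argue from the model. The next node $r'$ is a function only of the snapshot $\mathcal I(t)$ and the deterministic rule. The \textsc{Move} phase is one instantaneous hop along a grid edge, and grid nodes admit multiplicity. Hence no other robot can block the move, and nothing in the rule waits on another robot: no rule says ``stay until robot $x$ moves''. Robots at $m^\dagger$ stay, but $r(t)\neq m^\dagger$. Under the fair \textsc{ASYNC} scheduler, a non-faulty $r$ completes its cycle at some finite $t'$ regardless of whether robots in $\mathcal F$ ever act. Its position at $t'$ is $r'$, since only $r$ moves itself, so $d(r(t'),m^\dagger)\le d(r(t),m^\dagger)-1$. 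Moves by other robots between $t$ and $t'$ may change the configuration, but they do not affect $r$'s already-computed destination or its distance to the fixed node $m^\dagger$.
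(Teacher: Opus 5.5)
Your proposal is correct and follows essentially the same route as the paper. Lemma~\ref{lem:target-invariant} fixes $m^\dagger$ (and the region) in $r$'s snapshot, and the prescribed hop lies on a Manhattan shortest path, so the distance drops by one. Independence follows because the hop is computed from $r$'s own snapshot, and since grid nodes allow multiplicity, stationary crashed robots cannot block it. Your coordinate-sign check of the restricted routes is extra detail that the paper simply takes from the definition of \textsc{MoveToDestination()}. In the $\mathcal Q_2^{\mathrm{adj}}\to\mathcal L_2^{++}\to\mathcal L_1^{++}$ case both coordinates still move monotonically toward $m^\dagger$, so your fallback is not needed.
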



\begin{lemma}[Finite-time gathering]
\label{lem:finite-time-gathering}
Under every fair asynchronous schedule, all non-faulty robots reach
$m^\dagger$ after finitely many completed moves and remain there.
\end{lemma}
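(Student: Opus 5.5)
The plan is a potential-function argument that combines Lemma~\ref{lem:target-invariant} and Lemma~\ref{lem:crash-noninterference}. Fix an execution under a fair asynchronous schedule with crash set $\mathcal F$, and let $m^\dagger$ be the target selected from $\mathcal I(t_0)$. Lemma~\ref{lem:node-unique} guarantees that $m^\dagger$ is well defined and lies in $\mathcal W(t_0)$. Lemmas~\ref{lem:halfplane-preservation}, \ref{lem:quadrant-preservation} and~\ref{lem:target-invariant} then show that every snapshot taken by any robot at any later time yields the same region and the same $m^\dagger$. The argument is an induction over the sequence of completed moves. Every move completed so far followed \textsc{MoveToDestination()} toward $m^\dagger$, so the next \textsc{Look} again computes $m^\dagger$. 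Hence the next move also follows \textsc{MoveToDestination()} toward $m^\dagger$, and the hypothesis of Lemma~\ref{lem:target-invariant} propagates.

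\textbf{Potential and its decrease.} Define the potential
\[
\Phi(t)=\sum_{r\in\mathcal R_{nf}} d\bigl(r(t),m^\dagger\bigr),
\]
which is a non-negative integer. By Lemma~\ref{lem:crash-noninterference}, each completed move of a non-faulty robot not at $m^\dagger$ decreases its distance to $m^\dagger$ by at least one. In fact it decreases by exactly one, since \textsc{MoveToDestination()} performs a single hop along a shortest path. Moves of other robots never increase that robot's distance, because distances are measured from fixed nodes. Robots at $m^\dagger$ never move, since the routine leaves them stationary.

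\textbf{Asynchrony and finiteness.} One subtlety is that a robot may Look at time $t$ and Move much later. Its computed hop is still correct, because the target is the same in every snapshot, and its decrease in distance depends only on its own position. Since $\Phi$ decreases by exactly one per completed move of a non-faulty robot, at most $\Phi(t_0)$ such moves can occur, so the bound on the number of moves is finite.

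\textbf{Reaching and staying at $m^\dagger$.} Suppose some non-faulty robot $r$ never reached $m^\dagger$. Fairness activates $r$ infinitely often, and by Lemma~\ref{lem:crash-noninterference} each activation completes a move without waiting on any crashed robot. This yields infinitely many decreases of $\Phi$, which is a contradiction. Therefore each non-faulty robot reaches $m^\dagger$. Afterwards it stays there, because later snapshots still select $m^\dagger$ and the routine prescribes no move at $m^\dagger$. Taking $T$ to be the maximum of the individual arrival times gives the required finite time.

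\textbf{Main obstacle.} The step I expect to be hardest is justifying that the invariance lemmas apply to snapshots taken mid-execution. Such snapshots can be taken while other robots have pending, stale moves, and they can include multiplicities created at $c$ or on $\mathcal L$. I would handle this by observing that stale moves were themselves computed toward the same $m^\dagger$, so every realized configuration is reachable by moves that follow \textsc{MoveToDestination()}. This places it within the scope of Lemma~\ref{lem:target-invariant}.
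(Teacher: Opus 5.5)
Your proposal is correct and follows essentially the same route as the paper. Both use the potential $\Phi(t)=\sum_{r\in\mathcal R_{nf}}d(r(t),m^\dagger)$, which by Lemma~\ref{lem:crash-noninterference} drops by at least one with each completed move of a non-faulty robot, and both use fairness to drive $\Phi$ to zero within $\Phi(t_0)$ such moves, after which \textsc{MoveToDestination()} keeps robots at $m^\dagger$. Your explicit induction propagating the hypothesis of Lemma~\ref{lem:target-invariant}, and your remark on stale \textsc{Look}s under \textsc{ASYNC}, only make explicit what the paper leaves implicit; they do not change the argument.
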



\begin{theorem}[Correctness]
\label{thm:correctness}
Let $\mathcal I(t_0)\notin\mathbb U$, and suppose that at most $n-2$
robots crash. Under every fair asynchronous schedule,
\textsc{CrashTolerantWeberGathering()} gathers all non-faulty robots in
finite time at a common node $m^\dagger\in\mathcal W(t_0)$. Moreover, no
non-faulty robot must wait for another robot to move or become activated.
\end{theorem}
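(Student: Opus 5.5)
The proof assembles the preceding lemmas in sequence. Fix $\mathcal I(t_0)\notin\mathbb U$ and an arbitrary fair asynchronous schedule with crash set $\mathcal F$, $|\mathcal F|\le n-2$.

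\emph{Step 1: common initial target.} Since every robot sees the whole snapshot with strong multiplicity detection, all robots place $\mathcal I(t_0)$ in the same class of the partition. Lemma~\ref{lem:region-unique} then gives a unique target half-plane or quadrant whenever one is required. Lemma~\ref{lem:node-unique} gives a unique $m^\dagger\in\mathcal W(t_0)$. Neither choice depends on the local coordinate system, because every tie-breaking rule (occupancy counts, $\eta_i$, lexicographic comparison of canonical traversal strings, anchor and pivot corners) is invariant under the grid's isometries.

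\emph{Step 2: the target never changes.} I would argue by induction on the sequence of completed moves. Suppose every move completed so far followed \textsc{MoveToDestination()} toward $m^\dagger$. Lemmas~\ref{lem:halfplane-preservation} and~\ref{lem:quadrant-preservation} show that the selected region is unchanged, and Lemma~\ref{lem:target-invariant} shows that every snapshot taken afterwards still yields $m^\dagger$.

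The delicate point is asynchrony. A robot may Look at an old snapshot and Move later, so its move is computed from a configuration other than the current one. However, the induction hypothesis says that configuration also selected $m^\dagger$, so the pending move is still a \textsc{MoveToDestination()} hop toward $m^\dagger$, and the hypothesis is preserved. Crashed robots only stop moving, and a robot that stops moving trivially satisfies the hypothesis. In particular $m^\dagger$ remains in $\mathcal W(t)$ for every $t$, and since it was chosen from $\mathcal W(t_0)$ it is a Weber Meeting Node of the initial configuration, as the problem definition requires.

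\emph{Step 3: progress and termination.} By Lemma~\ref{lem:crash-noninterference}, every completed move of a non-faulty robot $r$ not yet at $m^\dagger$ strictly decreases $d(r,m^\dagger)$, and no such move depends on any other robot. This gives the "no waiting" clause of the statement. Fairness activates each non-faulty robot infinitely often, so the potential $\sum_{r\in\mathcal R_{nf}} d(r,m^\dagger)$ is a nonnegative integer that decreases with each completed move of a robot away from $m^\dagger$. Hence it reaches $0$ after finitely many moves. A robot at $m^\dagger$ stays there, because by Step~2 its computed target is still $m^\dagger$; this is exactly Lemma~\ref{lem:finite-time-gathering}.

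Finally, $|\mathcal R_{nf}|\ge2$, so the statement is non-vacuous. The only potential obstacle is the stale-snapshot issue in Step~2, and it is handled by inducting on completed moves rather than on snapshots.
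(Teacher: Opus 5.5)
Your proposal is correct and follows essentially the same route as the paper: Lemmas~\ref{lem:region-unique} and~\ref{lem:node-unique} give a unique $m^\dagger\in\mathcal W(t_0)$ (and region), Lemma~\ref{lem:target-invariant} together with Lemma~\ref{wbn} keeps that target fixed and Weber-optimal, and Lemmas~\ref{lem:crash-noninterference} and~\ref{lem:finite-time-gathering} give the no-waiting clause and finite-time convergence. Your explicit induction over completed moves is a useful refinement that the paper's proof leaves implicit. It resolves the circularity between ``the target is invariant'' and ``every move is a \textsc{MoveToDestination()} hop toward $m^\dagger$'' when a robot acts on a stale snapshot.
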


\begin{proof}
Because $\mathcal I(t_0)\notin\mathbb U$, the applicable rules of Phase~I
are defined. Lemmas~\ref{lem:region-unique} and \ref{lem:node-unique}
provide a unique $m^\dagger\in\mathcal W(t_0)$ and, when necessary, a unique
target region. Lemma~\ref{lem:target-invariant} shows that every subsequent
snapshot yields the same choices. Lemma~\ref{wbn} (see APPENDIX) additionally ensures that $m^\dagger$ remains a \textsc{Weber Meeting Node} throughout Phase~II.

A crash only prevents the faulty robot from moving further. By
Lemma~\ref{lem:crash-noninterference}, it neither blocks another robot nor
forces a non-faulty robot to wait. Finally,
Lemma~\ref{lem:finite-time-gathering} implies that fairness brings every
non-faulty robot to $m^\dagger$ in finite time and that all such robots remain
there.
\end{proof}

\section{Conclusion and Future Work}

We studied crash-tolerant optimal gathering at \textsc{Weber Meeting Nodes} for asynchronous, anonymous, oblivious, and disoriented robots on an infinite grid, assuming global visibility, global strong multiplicity detection, and up to $n-2$ crashes. We established both inherited and crash-specific impossibility results and showed that global weak multiplicity detection cannot preserve min-sum optimality, even under \textsc{FSYNC} activation and rigid movements. For every remaining configuration, \textsc{CrashTolerantWeberGathering()} selects a unique target $m^\dagger\in\mathcal W(t_0)$ and, when required, a unique target region.Restricted shortest paths preserve these choices and ensure that all non-faulty robots gather at the common target in finite time under any fair \textsc{ASYNC} scheduler. 

Future work includes characterizing solvable configurations under weaker multiplicity detection, possibly using persistent memory or visible lights; removing the grid-node restriction on symmetry lines and rotation centers; and considering limited or obstructed visibility, Byzantine faults, and alternative optimization criteria such as maximum travel distance or completion time.

\bibliographystyle{splncs04}
\bibliography{Biblo}

\newpage

\section*{APPENDIX}

\section*{Detailed Definitions and Notation}
\label{app:definitions}

\noindent\textbf{Grid Deployment and Symmetry.}
Let $\mathcal P=(\mathbb Z,\mathcal E')$ be the infinite path, where
$\mathcal E'=\{\{i,i+1\}:i\in\mathbb Z\}$, and let
$\mathcal G=\mathcal P\times\mathcal P=(\mathcal V,\mathcal E)$ be the
infinite grid. The finite set of meeting nodes is
$\mathcal M=\{m_1,\ldots,m_s\}\subset\mathcal V$, and
$\mathcal R(t)=\{r_1(t),\ldots,r_n(t)\}$ is the multiset of robot positions
at time $t$. The entire configuration is
$\mathcal I(t)=(\mathcal R(t),\mathcal M)$, with initial time $t_0$.
For $u,v\in\mathcal V$, $d(u,v)$ denotes their Manhattan distance. Let $\lambda_t(v)$ be the number of robots at $v$ and let
$f_t(v)=1$ if $v\in\mathcal M$, and $f_t(v)=0$ otherwise.
Although the set of meeting nodes is fixed and hence $f_t$ is independent
of time, we retain the subscript $t$ for notational consistency with the
time-dependent configuration $\mathcal I(t)$. Note that initially,
$\lambda_{t_0}(v)\leq 1$ for every $v\in\mathcal V$.

A grid automorphism $\Phi$ is a configuration automorphism at time $t$ if
$f_t(v)=f_t(\Phi(v))$ and
$\lambda_t(v)=\lambda_t(\Phi(v))$ for every $v\in\mathcal V$. Since the
configuration is finite, only reflectional and rotational symmetries are
relevant. Reflection lines may be horizontal, vertical, or diagonal, and
rotations may be by $90^\circ$ or $180^\circ$ whenever they preserve the
grid. We use partitive automorphisms and their orbits in the sense of
\cite{doi:10.3233/FI-222128}.

\noindent\textbf{Weber Meeting Nodes.}
For each $m\in\mathcal M$, define
$c_t(m)=\sum_{v\in\mathcal V} d(v,m)\lambda_t(v)$.
A meeting node minimizing $c_t$ is called a \textsc{Weber Meeting Node}, and
$\mathcal W(t)=\arg\min_{m\in\mathcal M} c_t(m)$
denotes the set of all such nodes. Thus, any node in $\mathcal W(t_0)$
minimizes the sum of the initial Manhattan distances from all robots to a
designated meeting node. Under global strong multiplicity detection, every
robot can determine $\lambda_t(v)$ exactly at every grid node and hence
compute $\mathcal W(t)$ from its snapshot.

\noindent\textbf{Traversal strings and pivot corners.}
Let $\mathcal{SER}_{\mathcal M}=PQRS$ be the smallest axis-aligned rectangle
containing $\mathcal M$, and let $\mathcal{SER}_{\mathcal I(t)}$ be the
smallest axis-aligned rectangle containing $\mathcal R(t)\cup\mathcal M$. Following the traversal construction of
\cite{doi:10.1142/S0129054122500174}, a corner together with one of its
incident directions determines an ordered scan of the rectangle. The hop-distances
of the meeting nodes encountered during this scan form a traversal string,
denoted by $\Gamma_{PQ}$ when the scan starts at corner $P$ in the
direction of side $PQ$. For a non-square rectangle, only the traversal
parallel to the shorter side is considered, whereas for a square, both
directions are considered. In the collinear case, traversals from the two
opposite ends are used.

All relevant traversal strings of $\mathcal{SER}_{\mathcal M}$ are compared
lexicographically. A corner producing a lexicographically smallest string
is called a \emph{pivot corner}, and its associated scan is its
\emph{traversal direction}. Thus, our pivot corner is the leading corner
of \cite{doi:10.1142/S0129054122500174}. Let
$\mathcal K_{\mathrm{piv}}$ be the set of pivot corners.  For each
$p\in\mathcal K_{\mathrm{piv}}$, let $w_p(t)$ be the last node of
$\mathcal W(t)$ encountered in the traversal associated with $p$. The
\emph{candidate Weber set}, corresponding to the Potential Weber meeting
nodes of \cite{doi:10.1142/S0129054122500174}, is defined as
$\mathcal W_c(t)=\{w_p(t):p\in\mathcal K_{\mathrm{piv}}\}$.

The following invariance property, proved for Weber points in
\cite{inbook} and adapted to meeting nodes in
\cite{doi:10.1142/S0129054122500174}, will be used throughout.

\begin{lemma}
\label{wbn}
Let $m^\dagger\in\mathcal W(t)$, and let $\mathcal I(t')$ be obtained from
$\mathcal I(t)$ after a robot, or a group of co-located robots, moves toward
$m^\dagger$ along a shortest path. Then
$m^\dagger\in\mathcal W(t')$ and
$\mathcal W(t')\subseteq\mathcal W(t)$.
\end{lemma}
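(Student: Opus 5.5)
We prove the claim directly from the definition $c_t(m)=\sum_{v}d(v,m)\lambda_t(v)$ and the triangle inequality for the Manhattan metric. A single moving robot and a group of $k$ co-located robots moving together are treated the same way, since each unit of multiplicity contributes linearly to $c_t$. Let the moving group have size $k\ge 1$, located at $r$ in $\mathcal I(t)$ and at $r'$ in $\mathcal I(t')$. Because the group moves toward $m^\dagger$ along a shortest path, $d(r',m^\dagger)=d(r,m^\dagger)-\delta$, where $\delta=d(r,r')\ge 1$ is the number of hops completed. All other robots are unchanged, because the transition from $\mathcal I(t)$ to $\mathcal I(t')$ is a single move event. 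For every $m\in\mathcal M$,
\[
c_{t'}(m)-c_t(m)=k\bigl(d(r',m)-d(r,m)\bigr).
\]

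\textbf{Step 1: the target's cost decreases the most.} For $m=m^\dagger$ the change is exactly $-k\delta$. For any other $m$, the triangle inequality gives $d(r,m)\le d(r,r')+d(r',m)=\delta+d(r',m)$, so $d(r',m)-d(r,m)\ge-\delta$. Hence
\[
c_{t'}(m)-c_{t'}(m^\dagger)\;\ge\;c_t(m)-c_t(m^\dagger)\qquad\text{for all } m\in\mathcal M.
\]

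\textbf{Step 2: conclude both assertions.} Since $m^\dagger\in\mathcal W(t)$, we have $c_t(m)-c_t(m^\dagger)\ge 0$ for every $m$. By Step 1, $c_{t'}(m)\ge c_{t'}(m^\dagger)$ for every $m$, so $m^\dagger\in\mathcal W(t')$. For the inclusion, take $m\in\mathcal W(t')$. Then $c_{t'}(m)=c_{t'}(m^\dagger)$, and Step 1 gives $0\ge c_t(m)-c_t(m^\dagger)\ge 0$. So $c_t(m)=c_t(m^\dagger)=\min c_t$, i.e., $m\in\mathcal W(t)$. Thus $\mathcal W(t')\subseteq\mathcal W(t)$.

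\textbf{Main obstacle: what "moves toward $m^\dagger$" covers.} The substantive point is making sure the statement is applied only to configurations reachable by one such move event. Under \textsc{ASYNC} scheduling, several robots may complete moves between two snapshots. I would handle this by ordering the completed moves in time and applying the single-move claim repeatedly. Both conclusions are transitive along a chain: $m^\dagger$ stays in $\mathcal W$ at each step, and inclusions compose. So the result extends to any finite sequence of moves toward the same $m^\dagger$, which is how it is used in the correctness proof. No structure beyond the metric is needed, and the restricted-path refinements of \textsc{MoveToDestination()} are irrelevant here.
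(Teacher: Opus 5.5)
Your proof is correct, but it does not follow the paper's route, because the paper gives no argument for this lemma. It imports it as a known invariance property, proved for Weber points in \cite{inbook} and adapted to meeting nodes in \cite{doi:10.1142/S0129054122500174}.

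You instead derive it directly from the definition of $c_t$. The identity $c_{t'}(m)-c_t(m)=k\bigl(d(r',m)-d(r,m)\bigr)$, combined with the triangle inequality, shows that the gap $c_t(m)-c_t(m^\dagger)$ cannot decrease for any $m\in\mathcal M$. Both $m^\dagger\in\mathcal W(t')$ and $\mathcal W(t')\subseteq\mathcal W(t)$ then follow in one line each.

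The citation keeps the paper short. Your route is self-contained and more general. It uses only the triangle inequality and the fact that $r'$ lies on a shortest path from $r$ to $m^\dagger$. It therefore covers partial groups, multi-hop moves, and any finite candidate set. It is also the same cost-gap bookkeeping the paper carries out in the description of \textsc{MoveToDestination()}, where the restricted paths give strict inequalities.

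Your extension to several completed moves is sound as well. Since $c_t$ is additive over robots, even simultaneous completions can be ordered arbitrarily. The chained form you obtain is what Lemma~\ref{lem:target-invariant} and the correctness proof actually rely on.
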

Thus, moving toward $m^\dagger$ preserves it as a
\textsc{Weber Meeting Node} and creates no new
\textsc{Weber Meeting Node}.

\begin{observation}
\label{m3}
If the rotation center $c$ of $\mathcal I(t_0)$ is a meeting node, then
$c\in\mathcal W(t_0)$.
\end{observation}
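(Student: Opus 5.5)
The statement to prove is Observation~\ref{m3}: if the rotation centre $c$ of $\mathcal I(t_0)$ is a meeting node, then $c\in\mathcal W(t_0)$. I will argue directly from the definition of $c_{t_0}$ and the symmetry. Let $\Phi$ be a nontrivial rotation (by $90^\circ$ or $180^\circ$) about $c$ that is a configuration automorphism of $\mathcal I(t_0)$. Then $\lambda_{t_0}(\Phi(v))=\lambda_{t_0}(v)$ for every node $v$, and $\Phi$ is an isometry of the Manhattan metric; the latter holds because grid-preserving rotations permute the coordinate axes up to sign. Let $k\in\{2,4\}$ be the order of $\Phi$, and fix an arbitrary $m\in\mathcal M$. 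The goal is to show $c_{t_0}(c)\le c_{t_0}(m)$.

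\textbf{Step 1 (averaging over the orbit).} Since $\Phi$ preserves both $\lambda_{t_0}$ and $d$, we have $c_{t_0}(\Phi^j(m))=c_{t_0}(m)$ for every $j$. Consequently
\[
c_{t_0}(m)=\frac1k\sum_{j=0}^{k-1}\sum_{v}\lambda_{t_0}(v)\,d\bigl(v,\Phi^j(m)\bigr).
\]

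\textbf{Step 2 (pointwise inequality).} For each fixed $v$ I claim $\frac1k\sum_{j}d(v,\Phi^j(m))\ge d(v,c)$. The function $x\mapsto d(v,x)$ is convex on $\mathbb R^2$ (it is the $\ell_1$ norm of $x-v$). The centroid of the orbit $\{\Phi^j(m)\}_j$ is exactly $c$, because rotations by $180^\circ$ or $90^\circ$ about $c$ sum the displacement vectors $\Phi^j(m)-c$ to zero. Jensen's inequality therefore gives the claim. If one prefers to avoid continuous reasoning, the case $k=2$ is just the triangle inequality $d(v,m)+d(v,2c-m)\ge d(m,2c-m)$, which combined with coordinatewise convexity of $|\cdot|$ gives $\ge 2d(v,c)$. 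The case $k=4$ reduces to two applications of the $k=2$ case, since $\Phi^2$ is the $180^\circ$ rotation.

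\textbf{Step 3 (conclusion).} Multiply the pointwise inequality by $\lambda_{t_0}(v)\ge0$ and sum over $v$. This yields $c_{t_0}(m)\ge c_{t_0}(c)$ for every $m\in\mathcal M$. Since $c\in\mathcal M$ by hypothesis, $c$ attains the minimum, so $c\in\mathcal W(t_0)$.

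The only delicate point is Step~2. I must make sure the centroid of the rotation orbit is $c$ even when $c$ is a grid node and $m$ is arbitrary, and that Jensen's inequality is applied to the exact $\ell_1$ distance. Both facts are elementary, so I expect this to be the main, but mild, obstacle.
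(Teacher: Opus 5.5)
Your proof is correct. The paper gives no argument for this observation; it is imported from the cited crash-free work, so there is no in-paper proof to compare against.

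The standard justification pairs robots rather than meeting nodes. In both the $90^\circ$ and $180^\circ$ cases the half-turn $v\mapsto 2c-v$ is an automorphism, so $\lambda_{t_0}(v)=\lambda_{t_0}(2c-v)$. Since $c$ lies on a Manhattan shortest path from $v$ to $2c-v$, you get $d(v,m)+d(2c-v,m)\ge d(v,2c-v)=2d(v,c)$ for every $m$. Summing with weights $\lambda_{t_0}(v)/2$ gives $c_{t_0}(m)\ge c_{t_0}(c)$.

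Because $d(v,2c-m)=d(2c-v,m)$, this is the same computation as your Step~2 with $k=2$, organized around the robot orbit instead of the meeting-node orbit. Your Jensen formulation treats $k=2$ and $k=4$ uniformly; the centroid claim holds because $R^2=-I$ forces $I+R+R^2+R^3=0$. The pairing version needs nothing beyond the triangle inequality.

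You use only the invariance of $\lambda_{t_0}$, and that hypothesis is essential. When only $\mathcal M$ is rotationally symmetric, as in class $\mathcal I_4$, a meeting node at $c$ need not be Weber. This is why the paper imposes $c\in\mathcal W(t_0)$ as a defining condition of $\mathcal I_{41}$.

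There is one slip in your optional discrete argument. The inequality $d(v,m)+d(v,2c-m)\ge d(m,2c-m)$ bounds the left side below by $2d(m,c)$, not by $2d(v,c)$, so it is not the inequality you need. The correct one is $d(v,m)+d(m,2c-v)\ge d(v,2c-v)=2d(v,c)$, using $d(v,2c-m)=d(2c-v,m)$. Alternatively, apply $|a|+|b|\ge|a+b|$ coordinatewise with $a=v_x-m_x$ and $b=v_x-(2c_x-m_x)$, whose sum is $2(v_x-c_x)$. Since your Jensen argument already establishes Step~2 on its own, this does not affect the validity of the proof.
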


\noindent\textbf{Canonical traversal strings and anchor corner.}
To distinguish configurations having identical meeting-node traversal
strings, each node $u$ encountered by a prescribed traversal $\tau$ is
encoded by $\langle f_t(u),\lambda_t(u)\rangle$. These pairs are ordered by
first comparing $\lambda_t$ and then $f_t$, and the corresponding canonical
traversal string is
$\bigl(\langle f_t(u_1^\tau),\lambda_t(u_1^\tau)\rangle,\ldots,
\langle f_t(u_\ell^\tau),\lambda_t(u_\ell^\tau)\rangle\bigr)$.
A canonical string originating from a pivot corner is called an
$\boldsymbol{\alpha}$-string; one originating from a non-pivot corner is
called a $\boldsymbol{\beta}$-string. The first and second letters in a
subscript specify the starting corner and initial traversal direction,
respectively, as in $\boldsymbol{\alpha}_{PS}(t)$.

The admissible canonical strings of
$\mathcal{SER}_{\mathcal I(t_0)}$ are compared lexicographically. If
$\mathcal I(t_0)$ is asymmetric, their unique minimum originates from the
\emph{anchor corner}; its string is denoted by
$\boldsymbol{\alpha}_{\mathrm{anchor}}(t_0)$. This notion plays the role of
the key corner of \cite{doi:10.1142/S0129054122500174}, while explicitly
using the smallest enclosing rectangle of the entire configuration. 

\noindent\textbf{Regions and balance.}
If $\mathcal M$ has a unique reflection line $\mathcal L$, let
$\mathcal H_1,\mathcal H_2$ be the two open half-planes determined by
$\mathcal L$, and let
$\mathcal B_i=\mathcal H_i\cap\mathcal{SER}_{\mathcal I(t_0)}$.
Define $\omega_i=|\mathcal W_c(t_0)\cap\mathcal B_i|$ and
$\nu_i=\sum_{u\in\mathcal B_i}\lambda_{t_0}(u)$. Thus, $\omega_i$ counts
the candidate Weber meeting nodes and $\nu_i$ counts the robots in
$\mathcal B_i$; objects on $\mathcal L$ are excluded. The configuration is
\emph{balanced} if $\nu_1=\nu_2$, and \emph{unbalanced} otherwise.

If $\mathcal M$ has rotational symmetry about $c$, let
$\mathcal L_1,\mathcal L_2$ be the perpendicular grid lines through $c$,
parallel to the sides of $\mathcal{SER}_{\mathcal M}=PQRS$. These lines
determine the open quadrants $\mathcal Q_1,\ldots,\mathcal Q_4$ containing,
respectively, $p_1=S,p_2=P,p_3=Q,p_4=R$. Let
$\mathcal B_k=\mathcal Q_k\cap\mathcal{SER}_{\mathcal I(t_0)}$,
$\omega_k=|\mathcal W_c(t_0)\cap\mathcal B_k|$, and
$\nu_k=\sum_{u\in\mathcal B_k}\lambda_{t_0}(u)$. Nodes on
$\mathcal L_1\cup\mathcal L_2$ are excluded from these counts. Set
$\omega^{\max}=\max_k\omega_k$,
$\mathcal S=\{k:\omega_k=\omega^{\max}\}$,
$\nu_{\mathcal S}^{\max}=\max_{k\in\mathcal S}\nu_k$, and
$\mathcal S^{\max}=\{k\in\mathcal S:\nu_k=\nu_{\mathcal S}^{\max}\}$.
The quadrants indexed by $\mathcal S$ are called the \emph{specified
quadrants}. The configuration is \emph{balanced} if
$|\mathcal S^{\max}|\geq2$, and \emph{unbalanced} otherwise.

\subsection*{Illustrative Computation}
\label{app:pivot-anchor-computation}

This APPENDIX gives the detailed computation for the configuration in
Figure~\ref{fig:pivot-anchor-example}. Taking the marked grid node as
$(0,0)$ and one grid edge as one hop, the robot positions are
\[
\begin{aligned}
r_1&=(-4,-4),&r_2&=(-2,-4),&r_3&=(2,-4),\\
r_4&=(4,-2),&r_5&=(-2,0),&r_6&=(2,4),
\end{aligned}
\]
and the meeting nodes are
\[
\begin{aligned}
m_1&=(-4,0),&m_2&=(-2,-2),&m_3&=(2,-2),\\
m_4&=(4,0),&m_5&=(-2,2),&m_6&=(2,2),&m_7&=(4,2).
\end{aligned}
\]

\begin{figure}[ht]
\centering
\includegraphics[width=0.5\columnwidth]{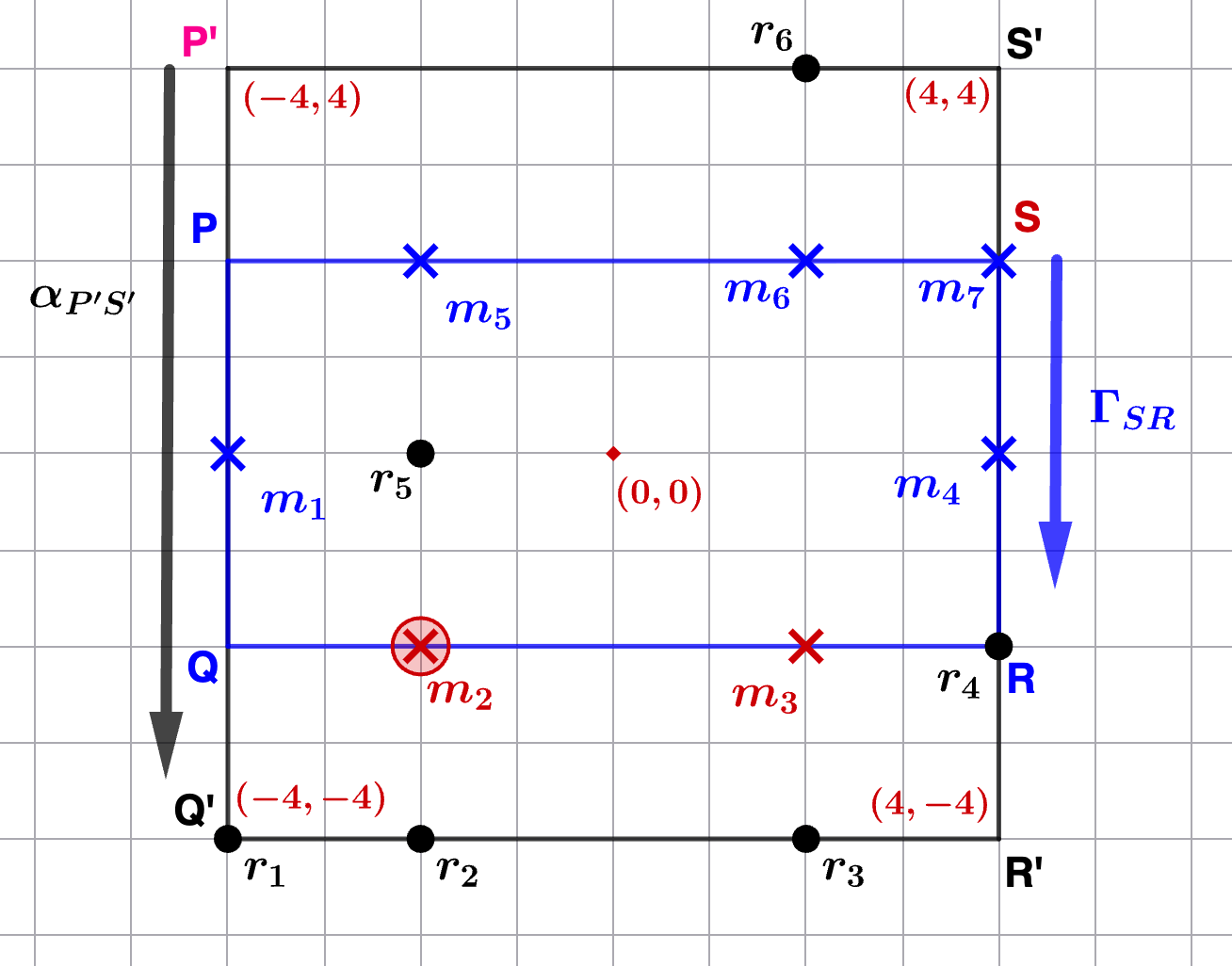}
\caption{An asymmetric configuration with
$\mathcal{SER}_{\mathcal M}=PQRS$ and
$\mathcal{SER}_{\mathcal I}=P'Q'R'S'$.
Here, $\mathcal W(t_0)=\{m_2,m_3\}$. Under the shorter-side traversal
rule for the non-square rectangle $\mathcal{SER}_{\mathcal M}$,
$\Gamma_{SR}$ is the unique lexicographically smallest relevant
$\Gamma$-string; hence, $S$ is the pivot corner and
$\mathcal W_c(t_0)=\{m_2\}$. Moreover, $\alpha_{P'S'}$ is the unique
lexicographically smallest canonical traversal string of the entire
configuration, so $P'$ is the anchor corner.}
\label{fig:pivot-anchor-example}
\end{figure}

The two enclosing rectangles are
\[
\begin{aligned}
\mathcal{SER}_{\mathcal M}&=PQRS=[-4,4]\times[-2,2],\\
\mathcal{SER}_{\mathcal I}&=P'Q'R'S'=[-4,4]\times[-4,4].
\end{aligned}
\]
Consequently,
\[
\begin{aligned}
P&=(-4,2),&Q&=(-4,-2),&R&=(4,-2),&S&=(4,2),\\
P'&=(-4,4),&Q'&=(-4,-4),&R'&=(4,-4),&S'&=(4,4).
\end{aligned}
\]

\paragraph{Weber costs.}
The aggregate costs of the seven meeting nodes are
\[
\begin{aligned}
c_{t_0}(m_1)&=4+6+10+10+2+10=42,\\
c_{t_0}(m_2)&=4+2+6+6+2+10=30,\\
c_{t_0}(m_3)&=8+6+2+2+6+6=30,\\
c_{t_0}(m_4)&=12+10+6+2+6+6=42,\\
c_{t_0}(m_5)&=8+6+10+10+2+6=42,\\
c_{t_0}(m_6)&=12+10+6+6+6+2=42,\\
c_{t_0}(m_7)&=14+12+8+4+8+4=50.
\end{aligned}
\]
Therefore,
\[
\boxed{\mathcal W(t_0)=\{m_2,m_3\}}.
\]

\paragraph{The eight $\Gamma$-strings of $\mathcal{SER}_{\mathcal M}$.}
The rectangle contains $9\times5=45$ grid nodes. Assigning scan index
$1$ to the starting corner gives
\begin{center}
\resizebox{\linewidth}{!}{%
\begin{tabular}{c|l|l}
$\tau$ & \multicolumn{1}{c|}{Order of the meeting nodes}
       & \multicolumn{1}{c}{$\Gamma_\tau$}\\ \hline
PQ&$(m_1,m_5,m_2,m_6,m_3,m_7,m_4)$&$(3,11,15,31,35,41,43)$\\
PS&$(m_5,m_6,m_7,m_1,m_4,m_2,m_3)$&$(3,7,9,19,27,39,43)$\\
QP&$(m_1,m_2,m_5,m_3,m_6,m_4,m_7)$&$(3,11,15,31,35,43,45)$\\
QR&$(m_2,m_3,m_1,m_4,m_5,m_6,m_7)$&$(3,7,19,27,39,43,45)$\\
RQ&$(m_3,m_2,m_4,m_1,m_7,m_6,m_5)$&$(3,7,19,27,37,39,43)$\\
RS&$(m_4,m_7,m_3,m_6,m_2,m_5,m_1)$&$(3,5,11,15,31,35,43)$\\
SP&$(m_7,m_6,m_5,m_4,m_1,m_3,m_2)$&$(1,3,7,19,27,39,43)$\\
SR&$(m_7,m_4,m_6,m_3,m_5,m_2,m_1)$&$(1,3,11,15,31,35,43)$\\
\end{tabular}}
\end{center}
Since $\mathcal{SER}_{\mathcal M}$ has dimensions $8\times4$, the
non-square-rectangle rule retains only the traversals parallel to its
shorter side, namely $PQ,QP,RS$, and $SR$. Among these four relevant
strings,
\[
\Gamma_{SR}<\Gamma_{PQ},\Gamma_{QP},\Gamma_{RS}.
\]
Hence, $S$ is the unique pivot corner and $SR$ is its traversal
direction. Along $SR$, the two \textsc{Weber Meeting Nodes} occur in the
order $m_3\prec m_2$. Since $m_2$ is the last one encountered,
\[
\boxed{\mathcal W_c(t_0)=\{m_2\}}.
\]
Although $\Gamma_{SP}<\Gamma_{SR}$ when all eight strings are compared,
$SP$ is not relevant to the pivot comparison because it is parallel to
the longer side of the non-square rectangle.

\paragraph{The eight $\alpha$-strings of $\mathcal{SER}_{\mathcal I}$.}
For each traversal $\tau$, let
\[
\alpha_\tau=
\Bigl(
\langle f_{t_0}(u_1^\tau),\lambda_{t_0}(u_1^\tau)\rangle,
\ldots,
\langle f_{t_0}(u_{81}^\tau),\lambda_{t_0}(u_{81}^\tau)\rangle
\Bigr),
\]
where
\[
f_{t_0}(u)=
\begin{cases}
1,&u\in\mathcal M,\\
0,&u\notin\mathcal M,
\end{cases}
\]
and $\lambda_{t_0}(u)$ is the number of robots occupying $u$. Although
each entry is written as $\langle f_{t_0}(u),\lambda_{t_0}(u)\rangle$,
the comparison first considers $\lambda_{t_0}(u)$ and then
$f_{t_0}(u)$. The symbol $\mid$ below only separates consecutive
parallel scan lines.

\begingroup
\scriptsize
\[
\begin{aligned}
\alpha_{P'Q'}={}&(\langle0,0\rangle,\langle0,0\rangle,\langle0,0\rangle,\langle0,0\rangle,\langle1,0\rangle,\langle0,0\rangle,\langle0,0\rangle,\langle0,0\rangle,\langle0,1\rangle\mid\\
&\langle0,0\rangle,\langle0,0\rangle,\langle0,0\rangle,\langle0,0\rangle,\langle0,0\rangle,\langle0,0\rangle,\langle0,0\rangle,\langle0,0\rangle,\langle0,0\rangle\mid\\
&\langle0,0\rangle,\langle0,0\rangle,\langle1,0\rangle,\langle0,0\rangle,\langle0,1\rangle,\langle0,0\rangle,\langle1,0\rangle,\langle0,0\rangle,\langle0,1\rangle\mid\\
&\langle0,0\rangle,\langle0,0\rangle,\langle0,0\rangle,\langle0,0\rangle,\langle0,0\rangle,\langle0,0\rangle,\langle0,0\rangle,\langle0,0\rangle,\langle0,0\rangle\mid\\
&\langle0,0\rangle,\langle0,0\rangle,\langle0,0\rangle,\langle0,0\rangle,\langle0,0\rangle,\langle0,0\rangle,\langle0,0\rangle,\langle0,0\rangle,\langle0,0\rangle\mid\\
&\langle0,0\rangle,\langle0,0\rangle,\langle0,0\rangle,\langle0,0\rangle,\langle0,0\rangle,\langle0,0\rangle,\langle0,0\rangle,\langle0,0\rangle,\langle0,0\rangle\mid\\
&\langle0,1\rangle,\langle0,0\rangle,\langle1,0\rangle,\langle0,0\rangle,\langle0,0\rangle,\langle0,0\rangle,\langle1,0\rangle,\langle0,0\rangle,\langle0,1\rangle\mid\\
&\langle0,0\rangle,\langle0,0\rangle,\langle0,0\rangle,\langle0,0\rangle,\langle0,0\rangle,\langle0,0\rangle,\langle0,0\rangle,\langle0,0\rangle,\langle0,0\rangle\mid\\
&\langle0,0\rangle,\langle0,0\rangle,\langle1,0\rangle,\langle0,0\rangle,\langle1,0\rangle,\langle0,0\rangle,\langle0,1\rangle,\langle0,0\rangle,\langle0,0\rangle).
\end{aligned}
\]

\[
\begin{aligned}
\alpha_{P'S'}={}&(\langle0,0\rangle,\langle0,0\rangle,\langle0,0\rangle,\langle0,0\rangle,\langle0,0\rangle,\langle0,0\rangle,\langle0,1\rangle,\langle0,0\rangle,\langle0,0\rangle\mid\\
&\langle0,0\rangle,\langle0,0\rangle,\langle0,0\rangle,\langle0,0\rangle,\langle0,0\rangle,\langle0,0\rangle,\langle0,0\rangle,\langle0,0\rangle,\langle0,0\rangle\mid\\
&\langle0,0\rangle,\langle0,0\rangle,\langle1,0\rangle,\langle0,0\rangle,\langle0,0\rangle,\langle0,0\rangle,\langle1,0\rangle,\langle0,0\rangle,\langle1,0\rangle\mid\\
&\langle0,0\rangle,\langle0,0\rangle,\langle0,0\rangle,\langle0,0\rangle,\langle0,0\rangle,\langle0,0\rangle,\langle0,0\rangle,\langle0,0\rangle,\langle0,0\rangle\mid\\
&\langle1,0\rangle,\langle0,0\rangle,\langle0,1\rangle,\langle0,0\rangle,\langle0,0\rangle,\langle0,0\rangle,\langle0,0\rangle,\langle0,0\rangle,\langle1,0\rangle\mid\\
&\langle0,0\rangle,\langle0,0\rangle,\langle0,0\rangle,\langle0,0\rangle,\langle0,0\rangle,\langle0,0\rangle,\langle0,0\rangle,\langle0,0\rangle,\langle0,0\rangle\mid\\
&\langle0,0\rangle,\langle0,0\rangle,\langle1,0\rangle,\langle0,0\rangle,\langle0,0\rangle,\langle0,0\rangle,\langle1,0\rangle,\langle0,0\rangle,\langle0,1\rangle\mid\\
&\langle0,0\rangle,\langle0,0\rangle,\langle0,0\rangle,\langle0,0\rangle,\langle0,0\rangle,\langle0,0\rangle,\langle0,0\rangle,\langle0,0\rangle,\langle0,0\rangle\mid\\
&\langle0,1\rangle,\langle0,0\rangle,\langle0,1\rangle,\langle0,0\rangle,\langle0,0\rangle,\langle0,0\rangle,\langle0,1\rangle,\langle0,0\rangle,\langle0,0\rangle).
\end{aligned}
\]

\[
\begin{aligned}
\alpha_{Q'P'}={}&(\langle0,1\rangle,\langle0,0\rangle,\langle0,0\rangle,\langle0,0\rangle,\langle1,0\rangle,\langle0,0\rangle,\langle0,0\rangle,\langle0,0\rangle,\langle0,0\rangle\mid\\
&\langle0,0\rangle,\langle0,0\rangle,\langle0,0\rangle,\langle0,0\rangle,\langle0,0\rangle,\langle0,0\rangle,\langle0,0\rangle,\langle0,0\rangle,\langle0,0\rangle\mid\\
&\langle0,1\rangle,\langle0,0\rangle,\langle1,0\rangle,\langle0,0\rangle,\langle0,1\rangle,\langle0,0\rangle,\langle1,0\rangle,\langle0,0\rangle,\langle0,0\rangle\mid\\
&\langle0,0\rangle,\langle0,0\rangle,\langle0,0\rangle,\langle0,0\rangle,\langle0,0\rangle,\langle0,0\rangle,\langle0,0\rangle,\langle0,0\rangle,\langle0,0\rangle\mid\\
&\langle0,0\rangle,\langle0,0\rangle,\langle0,0\rangle,\langle0,0\rangle,\langle0,0\rangle,\langle0,0\rangle,\langle0,0\rangle,\langle0,0\rangle,\langle0,0\rangle\mid\\
&\langle0,0\rangle,\langle0,0\rangle,\langle0,0\rangle,\langle0,0\rangle,\langle0,0\rangle,\langle0,0\rangle,\langle0,0\rangle,\langle0,0\rangle,\langle0,0\rangle\mid\\
&\langle0,1\rangle,\langle0,0\rangle,\langle1,0\rangle,\langle0,0\rangle,\langle0,0\rangle,\langle0,0\rangle,\langle1,0\rangle,\langle0,0\rangle,\langle0,1\rangle\mid\\
&\langle0,0\rangle,\langle0,0\rangle,\langle0,0\rangle,\langle0,0\rangle,\langle0,0\rangle,\langle0,0\rangle,\langle0,0\rangle,\langle0,0\rangle,\langle0,0\rangle\mid\\
&\langle0,0\rangle,\langle0,0\rangle,\langle0,1\rangle,\langle0,0\rangle,\langle1,0\rangle,\langle0,0\rangle,\langle1,0\rangle,\langle0,0\rangle,\langle0,0\rangle).
\end{aligned}
\]

\[
\begin{aligned}
\alpha_{Q'R'}={}&(\langle0,1\rangle,\langle0,0\rangle,\langle0,1\rangle,\langle0,0\rangle,\langle0,0\rangle,\langle0,0\rangle,\langle0,1\rangle,\langle0,0\rangle,\langle0,0\rangle\mid\\
&\langle0,0\rangle,\langle0,0\rangle,\langle0,0\rangle,\langle0,0\rangle,\langle0,0\rangle,\langle0,0\rangle,\langle0,0\rangle,\langle0,0\rangle,\langle0,0\rangle\mid\\
&\langle0,0\rangle,\langle0,0\rangle,\langle1,0\rangle,\langle0,0\rangle,\langle0,0\rangle,\langle0,0\rangle,\langle1,0\rangle,\langle0,0\rangle,\langle0,1\rangle\mid\\
&\langle0,0\rangle,\langle0,0\rangle,\langle0,0\rangle,\langle0,0\rangle,\langle0,0\rangle,\langle0,0\rangle,\langle0,0\rangle,\langle0,0\rangle,\langle0,0\rangle\mid\\
&\langle1,0\rangle,\langle0,0\rangle,\langle0,1\rangle,\langle0,0\rangle,\langle0,0\rangle,\langle0,0\rangle,\langle0,0\rangle,\langle0,0\rangle,\langle1,0\rangle\mid\\
&\langle0,0\rangle,\langle0,0\rangle,\langle0,0\rangle,\langle0,0\rangle,\langle0,0\rangle,\langle0,0\rangle,\langle0,0\rangle,\langle0,0\rangle,\langle0,0\rangle\mid\\
&\langle0,0\rangle,\langle0,0\rangle,\langle1,0\rangle,\langle0,0\rangle,\langle0,0\rangle,\langle0,0\rangle,\langle1,0\rangle,\langle0,0\rangle,\langle1,0\rangle\mid\\
&\langle0,0\rangle,\langle0,0\rangle,\langle0,0\rangle,\langle0,0\rangle,\langle0,0\rangle,\langle0,0\rangle,\langle0,0\rangle,\langle0,0\rangle,\langle0,0\rangle\mid\\
&\langle0,0\rangle,\langle0,0\rangle,\langle0,0\rangle,\langle0,0\rangle,\langle0,0\rangle,\langle0,0\rangle,\langle0,1\rangle,\langle0,0\rangle,\langle0,0\rangle).
\end{aligned}
\]

\[
\begin{aligned}
\alpha_{R'Q'}={}&(\langle0,0\rangle,\langle0,0\rangle,\langle0,1\rangle,\langle0,0\rangle,\langle0,0\rangle,\langle0,0\rangle,\langle0,1\rangle,\langle0,0\rangle,\langle0,1\rangle\mid\\
&\langle0,0\rangle,\langle0,0\rangle,\langle0,0\rangle,\langle0,0\rangle,\langle0,0\rangle,\langle0,0\rangle,\langle0,0\rangle,\langle0,0\rangle,\langle0,0\rangle\mid\\
&\langle0,1\rangle,\langle0,0\rangle,\langle1,0\rangle,\langle0,0\rangle,\langle0,0\rangle,\langle0,0\rangle,\langle1,0\rangle,\langle0,0\rangle,\langle0,0\rangle\mid\\
&\langle0,0\rangle,\langle0,0\rangle,\langle0,0\rangle,\langle0,0\rangle,\langle0,0\rangle,\langle0,0\rangle,\langle0,0\rangle,\langle0,0\rangle,\langle0,0\rangle\mid\\
&\langle1,0\rangle,\langle0,0\rangle,\langle0,0\rangle,\langle0,0\rangle,\langle0,0\rangle,\langle0,0\rangle,\langle0,1\rangle,\langle0,0\rangle,\langle1,0\rangle\mid\\
&\langle0,0\rangle,\langle0,0\rangle,\langle0,0\rangle,\langle0,0\rangle,\langle0,0\rangle,\langle0,0\rangle,\langle0,0\rangle,\langle0,0\rangle,\langle0,0\rangle\mid\\
&\langle1,0\rangle,\langle0,0\rangle,\langle1,0\rangle,\langle0,0\rangle,\langle0,0\rangle,\langle0,0\rangle,\langle1,0\rangle,\langle0,0\rangle,\langle0,0\rangle\mid\\
&\langle0,0\rangle,\langle0,0\rangle,\langle0,0\rangle,\langle0,0\rangle,\langle0,0\rangle,\langle0,0\rangle,\langle0,0\rangle,\langle0,0\rangle,\langle0,0\rangle\mid\\
&\langle0,0\rangle,\langle0,0\rangle,\langle0,1\rangle,\langle0,0\rangle,\langle0,0\rangle,\langle0,0\rangle,\langle0,0\rangle,\langle0,0\rangle,\langle0,0\rangle).
\end{aligned}
\]

\[
\begin{aligned}
\alpha_{R'S'}={}&(\langle0,0\rangle,\langle0,0\rangle,\langle0,1\rangle,\langle0,0\rangle,\langle1,0\rangle,\langle0,0\rangle,\langle1,0\rangle,\langle0,0\rangle,\langle0,0\rangle\mid\\
&\langle0,0\rangle,\langle0,0\rangle,\langle0,0\rangle,\langle0,0\rangle,\langle0,0\rangle,\langle0,0\rangle,\langle0,0\rangle,\langle0,0\rangle,\langle0,0\rangle\mid\\
&\langle0,1\rangle,\langle0,0\rangle,\langle1,0\rangle,\langle0,0\rangle,\langle0,0\rangle,\langle0,0\rangle,\langle1,0\rangle,\langle0,0\rangle,\langle0,1\rangle\mid\\
&\langle0,0\rangle,\langle0,0\rangle,\langle0,0\rangle,\langle0,0\rangle,\langle0,0\rangle,\langle0,0\rangle,\langle0,0\rangle,\langle0,0\rangle,\langle0,0\rangle\mid\\
&\langle0,0\rangle,\langle0,0\rangle,\langle0,0\rangle,\langle0,0\rangle,\langle0,0\rangle,\langle0,0\rangle,\langle0,0\rangle,\langle0,0\rangle,\langle0,0\rangle\mid\\
&\langle0,0\rangle,\langle0,0\rangle,\langle0,0\rangle,\langle0,0\rangle,\langle0,0\rangle,\langle0,0\rangle,\langle0,0\rangle,\langle0,0\rangle,\langle0,0\rangle\mid\\
&\langle0,1\rangle,\langle0,0\rangle,\langle1,0\rangle,\langle0,0\rangle,\langle0,1\rangle,\langle0,0\rangle,\langle1,0\rangle,\langle0,0\rangle,\langle0,0\rangle\mid\\
&\langle0,0\rangle,\langle0,0\rangle,\langle0,0\rangle,\langle0,0\rangle,\langle0,0\rangle,\langle0,0\rangle,\langle0,0\rangle,\langle0,0\rangle,\langle0,0\rangle\mid\\
&\langle0,1\rangle,\langle0,0\rangle,\langle0,0\rangle,\langle0,0\rangle,\langle1,0\rangle,\langle0,0\rangle,\langle0,0\rangle,\langle0,0\rangle,\langle0,0\rangle).
\end{aligned}
\]

\[
\begin{aligned}
\alpha_{S'P'}={}&(\langle0,0\rangle,\langle0,0\rangle,\langle0,1\rangle,\langle0,0\rangle,\langle0,0\rangle,\langle0,0\rangle,\langle0,0\rangle,\langle0,0\rangle,\langle0,0\rangle\mid\\
&\langle0,0\rangle,\langle0,0\rangle,\langle0,0\rangle,\langle0,0\rangle,\langle0,0\rangle,\langle0,0\rangle,\langle0,0\rangle,\langle0,0\rangle,\langle0,0\rangle\mid\\
&\langle1,0\rangle,\langle0,0\rangle,\langle1,0\rangle,\langle0,0\rangle,\langle0,0\rangle,\langle0,0\rangle,\langle1,0\rangle,\langle0,0\rangle,\langle0,0\rangle\mid\\
&\langle0,0\rangle,\langle0,0\rangle,\langle0,0\rangle,\langle0,0\rangle,\langle0,0\rangle,\langle0,0\rangle,\langle0,0\rangle,\langle0,0\rangle,\langle0,0\rangle\mid\\
&\langle1,0\rangle,\langle0,0\rangle,\langle0,0\rangle,\langle0,0\rangle,\langle0,0\rangle,\langle0,0\rangle,\langle0,1\rangle,\langle0,0\rangle,\langle1,0\rangle\mid\\
&\langle0,0\rangle,\langle0,0\rangle,\langle0,0\rangle,\langle0,0\rangle,\langle0,0\rangle,\langle0,0\rangle,\langle0,0\rangle,\langle0,0\rangle,\langle0,0\rangle\mid\\
&\langle0,1\rangle,\langle0,0\rangle,\langle1,0\rangle,\langle0,0\rangle,\langle0,0\rangle,\langle0,0\rangle,\langle1,0\rangle,\langle0,0\rangle,\langle0,0\rangle\mid\\
&\langle0,0\rangle,\langle0,0\rangle,\langle0,0\rangle,\langle0,0\rangle,\langle0,0\rangle,\langle0,0\rangle,\langle0,0\rangle,\langle0,0\rangle,\langle0,0\rangle\mid\\
&\langle0,0\rangle,\langle0,0\rangle,\langle0,1\rangle,\langle0,0\rangle,\langle0,0\rangle,\langle0,0\rangle,\langle0,1\rangle,\langle0,0\rangle,\langle0,1\rangle).
\end{aligned}
\]

\[
\begin{aligned}
\alpha_{S'R'}={}&(\langle0,0\rangle,\langle0,0\rangle,\langle1,0\rangle,\langle0,0\rangle,\langle1,0\rangle,\langle0,0\rangle,\langle0,1\rangle,\langle0,0\rangle,\langle0,0\rangle\mid\\
&\langle0,0\rangle,\langle0,0\rangle,\langle0,0\rangle,\langle0,0\rangle,\langle0,0\rangle,\langle0,0\rangle,\langle0,0\rangle,\langle0,0\rangle,\langle0,0\rangle\mid\\
&\langle0,1\rangle,\langle0,0\rangle,\langle1,0\rangle,\langle0,0\rangle,\langle0,0\rangle,\langle0,0\rangle,\langle1,0\rangle,\langle0,0\rangle,\langle0,1\rangle\mid\\
&\langle0,0\rangle,\langle0,0\rangle,\langle0,0\rangle,\langle0,0\rangle,\langle0,0\rangle,\langle0,0\rangle,\langle0,0\rangle,\langle0,0\rangle,\langle0,0\rangle\mid\\
&\langle0,0\rangle,\langle0,0\rangle,\langle0,0\rangle,\langle0,0\rangle,\langle0,0\rangle,\langle0,0\rangle,\langle0,0\rangle,\langle0,0\rangle,\langle0,0\rangle\mid\\
&\langle0,0\rangle,\langle0,0\rangle,\langle0,0\rangle,\langle0,0\rangle,\langle0,0\rangle,\langle0,0\rangle,\langle0,0\rangle,\langle0,0\rangle,\langle0,0\rangle\mid\\
&\langle0,0\rangle,\langle0,0\rangle,\langle1,0\rangle,\langle0,0\rangle,\langle0,1\rangle,\langle0,0\rangle,\langle1,0\rangle,\langle0,0\rangle,\langle0,1\rangle\mid\\
&\langle0,0\rangle,\langle0,0\rangle,\langle0,0\rangle,\langle0,0\rangle,\langle0,0\rangle,\langle0,0\rangle,\langle0,0\rangle,\langle0,0\rangle,\langle0,0\rangle\mid\\
&\langle0,0\rangle,\langle0,0\rangle,\langle0,0\rangle,\langle0,0\rangle,\langle1,0\rangle,\langle0,0\rangle,\langle0,0\rangle,\langle0,0\rangle,\langle0,1\rangle).
\end{aligned}
\]
\endgroup

The canonical string at each corner is
\[
\begin{aligned}
\alpha_{P'}&=\min\{\alpha_{P'Q'},\alpha_{P'S'}\}=\alpha_{P'S'},\\
\alpha_{Q'}&=\min\{\alpha_{Q'P'},\alpha_{Q'R'}\}=\alpha_{Q'P'},\\
\alpha_{R'}&=\min\{\alpha_{R'Q'},\alpha_{R'S'}\}=\alpha_{R'Q'},\\
\alpha_{S'}&=\min\{\alpha_{S'P'},\alpha_{S'R'}\}=\alpha_{S'R'}.
\end{aligned}
\]
Moreover,
\[
\alpha_{P'S'}=
\min\left\{
\alpha_{P'Q'},\alpha_{P'S'},\alpha_{Q'P'},\alpha_{Q'R'},
\alpha_{R'Q'},\alpha_{R'S'},\alpha_{S'P'},\alpha_{S'R'}
\right\}.
\]
Therefore,
\[
\boxed{P'\text{ is the unique anchor corner, with direction }P'S'.}
\]

Altogether,
\[
\boxed{
\begin{gathered}
\mathcal W(t_0)=\{m_2,m_3\},\qquad
\mathcal W_c(t_0)=\{m_2\},\\
\text{pivot corner}=S,\qquad
\text{anchor corner}=P'.
\end{gathered}
}
\]

\section*{Proof of the Crash-Induced Symmetry Impossibility}
\label{app:crash-induced-symmetry-proofs}

\noindent\textbf{Theorem~\ref{thm:weak-multiplicity-impossible} (Impossibility under weak multiplicity detection).}
There is no deterministic distributed algorithm that solves optimal Weber-meeting-node gathering for crash-faulty robots up to $n-2$ operating under \textsc{FSYNC} with rigid movement, if robots are equipped only with \emph{weak} multiplicity detection.

\begin{proof}
Assume that a deterministic distributed algorithm \(\mathcal{A}\)
solves optimal gathering under \textsc{FSYNC} activation and rigid
movements using only global weak multiplicity detection.

Let $m_1=(0,0)\ \text{and}\ m_2=(8,0),$ and consider the multiplicity-free initial configuration
\[
\begin{aligned}
r_1&=(2,0),   & r_2&=(3,0),   & r_3&=(4,0),\\
r_4&=(5,0),   & r_5&=(6,0),  & r_6&=(4,-2),\\
r_7&=(3,1),   & r_8&=(5,1),   & r_9&=(3,-1),\\
r_{10}&=(4,-1).
\end{aligned}
\]
Let \(r_1,r_2,r_4\), and \(r_6\) be crash-faulty initially, hence $4\le n-2=10-2=8$. The aggregate costs are $c_{t_0}(m_1)=45<47=c_{t_0}(m_2),$ so \(m_1\) is the unique initial \textsc{Weber Meeting Node}.

Consider the first \textsc{FSYNC} round $C_1$, in which the non-faulty
robots make rigid one-hop moves toward $m_1$ and reach the following
nodes:
\[
\begin{aligned}
r_3,r_7,r_9&\longrightarrow p_2=(3,0),\\
r_{10}&\longrightarrow q=(4,0),\\
r_5,r_8&\longrightarrow p_4=(5,0).
\end{aligned}
\]
Let $p_1=(2,0)\ \text{and}\ p_6=(4,-2).$ At the end of $C_1$, $p_2$ and $p_4$ are multiplicity nodes containing exactly four and three robots, respectively. Accordingly, with respect to the ordered nodes $(p_1,p_2,q,p_6,p_4)$, the exact multiplicity profile is $\Lambda=(1,4,1,1,3)$ and total number of robots is $10=1+4+1+1+3$.

\begin{figure}[ht!]
    \centering

    \begin{subfigure}[t]{0.47\linewidth}
        \centering
        \includegraphics[width=\linewidth]{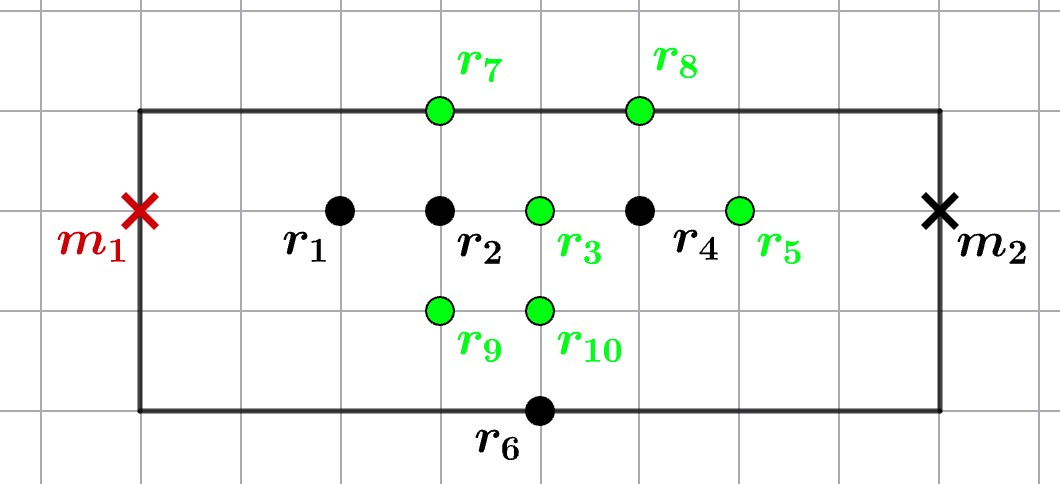}
        \caption{}
   
    \end{subfigure}
    \hfill
    \begin{subfigure}[t]{0.47\linewidth}
        \centering
        \includegraphics[width=\linewidth]{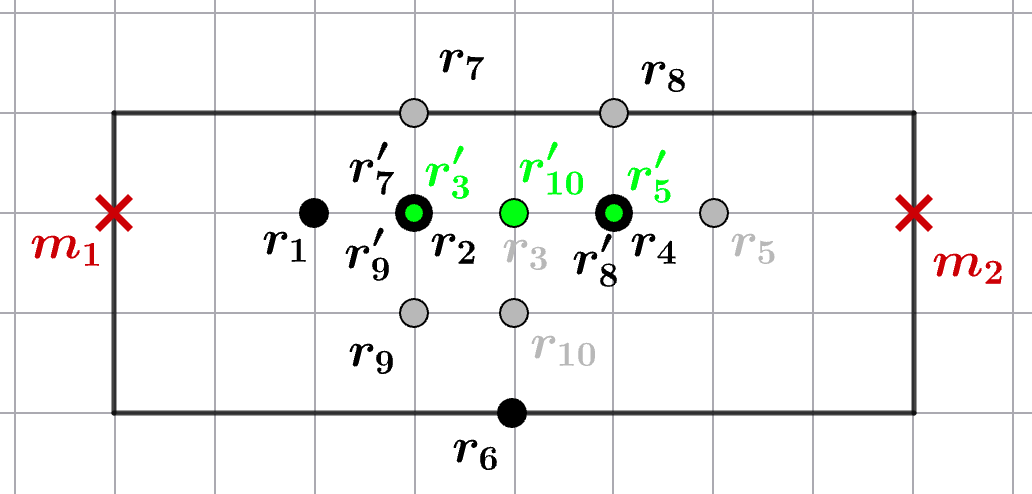}
        \caption{}
       
    \end{subfigure}\textbf{}
\caption{}
    \label{fig:strong-multiplicity}
\end{figure}

Immediately before the next round \(C_2\), let \(r_7\), \(r_8\) and  \(r_9\)
also get crashed. Thus, only \(r_3,r_5\), and \(r_{10}\) remain non-faulty,
and the total number of crash-faulty robots is $7\leq n-2=8.$ Since the robots are equipped only with global weak multiplicity detection, they observe $p_2$ and $p_4$ as multiplicity nodes but cannot determine their exact multiplicities. Their weak multiplicity profile, with respect to the ordered nodes $(p_1,p_2,q,p_6,p_4)$, is therefore $(1,\mathsf{M},1,1,\mathsf{M})$. Even if the robots know that $n=10$, the three singly occupied nodes $p_1$, $q$, and $p_6$ allow them to infer only that
$\lambda(p_2)+\lambda(p_4)=7$, where both $\lambda(p_2)$ and $\lambda(p_4)$ are at least two. Consequently, both distributions $(2,5)$ and $(5,2)$ are consistent with the same weak snapshot. Suppose that the computations of $r_5$ and $r_{10}$ are based on the compatible exact multiplicity profile $\Lambda_{r_5}=\Lambda_{r_{10}}=(1,2,1,1,5)$, whereas the computation of $r_3$ is based on the compatible profile
$\Lambda_{r_3}=(1,5,1,1,2)$. In both cases, the entries of the profile
sum to ten: $1+2+1+1+5=1+5+1+1+2=10$.

Consider an execution $E$ of algorithm $\mathcal A$ and the
\textsc{FSYNC} round $C_2$ occurring at time $t'>t_0$. Since all
non-faulty robots are activated in this round, they perform their
\textit{Look} and \textit{Compute} phases simultaneously. The computations of $r_5$ and $r_{10}$ are based on the compatible
multiplicity profile $\Lambda_{r_5}=\Lambda_{r_{10}}=(1,2,1,1,5)$. For this profile, the current Weber costs are $\bigl(c_{t'}(m_1),c_{t'}(m_2)\bigr)=(43,41)$. Hence, both $r_5$ and $r_{10}$ identify $m_2$ as the unique \textsc{Weber Meeting Node} during their \textit{Compute} phases. In contrast, the computation of the non-faulty robot $r_3$ is based on the compatible multiplicity profile
$\Lambda_{r_3}=(1,5,1,1,2)$. In this case, the current Weber costs are
$\bigl(c_{t'}(m_1),c_{t'}(m_2)\bigr)=(37,47)$. Therefore, $r_3$
identifies $m_1$ as the unique \textsc{Weber Meeting Node}.

During the \textit{Move} phase of the same round $C_2$, robot $r_3$, having identified $m_1$ as the unique \textsc{Weber Meeting Node}, makes at least one rigid hop toward $m_1$. In contrast, robots $r_5$ and $r_{10}$, having identified $m_2$ as the unique \textsc{Weber Meeting Node}, each make at least one rigid hop toward $m_2$. Thus, their first hops in round $C_2$ are
\[
r_3:(3,0)\longrightarrow(2,0),\qquad
r_{10}:(4,0)\longrightarrow(5,0),\qquad
r_5:(5,0)\longrightarrow(6,0).
\]

At time $t_0$, $m_1$ was the unique \textsc{Weber Meeting Node} of the initial configuration. During round $C_1$, all non-faulty robots moved along shortest paths toward $m_1$. However, in round $C_2$, robots $r_5$ and $r_{10}$ move rightward toward $m_2$, thereby increasing their distances from $m_1$ by one. These movements do not lie on any shortest path from their initial positions to $m_1$. In fact, both robots first move one hop toward $m_1$ in round $C_1$ and then move one hop in the opposite direction in round $C_2$. This reversal adds at least two grid edges to the total distance traveled by each robot compared with a shortest path to $m_1$. Therefore, the min-sum optimality condition is violated as soon as $r_5$ and $r_{10}$ make their first rightward hops in round $C_2$.

Thus, the indistinguishability of the exact multiplicity profiles under global weak multiplicity detection causes the non-faulty robots to move toward different meeting nodes, leading to a violation of the min-sum optimality condition. Consequently, algorithm $\mathcal A$ cannot solve optimal gathering in both indistinguishable executions. This completes the proof.
\hfill\qed
\end{proof}

\begin{lemma}[Unsolvability when all robots on the symmetry line crash]
\label{lem:faulty-axis-unsolvable}
Suppose that the initial entire configuration $\mathcal I(t_0)$ is symmetric with respect to a unique line $\mathcal L$, and that $\mathcal M\cap\mathcal L=\varnothing$. If all robots initially located on $\mathcal {L} $ and crash before performing any move, while at least two robots outside $\mathcal {L} $ remain non-faulty, then gathering all non-faulty robots at a meeting node is unsolvable. Consequently, gathering at a \textsc{Weber Meeting Node} is also unsolvable.
\end{lemma}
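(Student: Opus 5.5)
The plan is a standard symmetry-preservation (indistinguishability) argument, with crashes used to remove exactly the robots that could break the symmetry. Let $\mu_{\mathcal L}$ denote the reflection across $\mathcal L$. Since $\mathcal I(t_0)$ is symmetric about $\mathcal L$, $\mu_{\mathcal L}$ maps $\mathcal M$ onto itself and $\mathcal R(t_0)$ onto itself. Robots lying on $\mathcal L$ are fixed by $\mu_{\mathcal L}$. Every robot off $\mathcal L$ has a distinct mirror image $\mu_{\mathcal L}(r)$ in the opposite open half-plane. By hypothesis, all robots on $\mathcal L$ crash before moving and at least two robots off $\mathcal L$ stay non-faulty. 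The adversary then declares faulty, in addition, exactly those off-line robots whose mirror partner is faulty. The non-faulty set is then a nonempty union of mirror pairs $\{r,\mu_{\mathcal L}(r)\}$; this is the ``complete nontrivial orbit'' mentioned in Theorem~\ref{thm:crash-induced-symmetry}(i). If the budget of $n-2$ crashes forces some pair to keep only one live member, I will restrict the statement to the natural reading in which the non-faulty robots off $\mathcal L$ form full pairs.

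Next I construct an adversarial \textsc{FSYNC} (hence also \textsc{ASYNC}) execution. Robots are disoriented and share no chirality, so the adversary gives each non-faulty $r$ a local frame, and gives $\mu_{\mathcal L}(r)$ the mirror-image frame (reflected axes, opposite handedness). I will show by induction on rounds that the configuration stays symmetric about $\mathcal L$. Suppose it is symmetric at the start of a round. The crashed robots on $\mathcal L$ are fixed by $\mu_{\mathcal L}$, so the whole configuration, including multiplicities, is $\mu_{\mathcal L}$-invariant. Then $r$ and $\mu_{\mathcal L}(r)$ take identical local snapshots. The algorithm is deterministic and oblivious, so they compute identical local destinations. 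In global coordinates these destinations are mirror images, and rigid synchronous moves preserve the symmetry. Crashed robots do not move, so the fixed part is preserved as well.

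To conclude, suppose the algorithm gathers all non-faulty robots at a single node $m$ at some finite time $T$. Then $r$ and $\mu_{\mathcal L}(r)$ occupy $m$ and $\mu_{\mathcal L}(m)$ respectively. Gathering requires these to coincide, so $m=\mu_{\mathcal L}(m)$, i.e.\ $m\in\mathcal L$. But $m$ must be a meeting node and $\mathcal M\cap\mathcal L=\varnothing$, a contradiction. Hence gathering at a meeting node is impossible. Since $\mathcal W(t_0)\subseteq\mathcal M$, gathering at a \textsc{Weber Meeting Node} is impossible too.

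The main obstacle is the subtle point of the inductive step: two mirror robots might each move to a node on $\mathcal L$, or swap sides, and break the symmetry that way. Mirror destinations always remain mirror images, so symmetry is never lost. Robots may meet on $\mathcal L$, but that is not gathering at a meeting node. Likewise, once every live pair sits on $\mathcal L$, they can only stay there or move off symmetrically, never reaching $\mathcal M$. I would also check carefully that the adversary's frame assignment is legal, since it requires the absence of common chirality, and that the crash count remains at most $n-2$.
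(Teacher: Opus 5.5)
Your proposal is correct and follows the paper's argument. Crash the robots on $\mathcal L$ and activate the remaining robots symmetrically; you make this step explicit by giving mirror partners mirror-image local frames, which the absence of common chirality permits. Reflectional symmetry then persists, so any common gathering node would have to lie on $\mathcal L$, where there is no meeting node. Your extra step of crashing additional mirror partners, and the hedge about the crash budget, are unnecessary: for an impossibility result the adversary may simply crash exactly the robots on $\mathcal L$ and keep every off-line robot non-faulty, as the paper does. Those robots automatically come in mirror pairs, and the crash count stays at most $n-2$.
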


\begin{proof}
Consider an admissible execution in which all robots on $\mathcal L$ crash
before moving and the non-faulty robots outside $\mathcal L$ are activated
symmetrically. Since the robots are anonymous and deterministic, the
reflectional symmetry with respect to $\mathcal L$ is preserved throughout
the execution. Therefore, any common gathering node must lie on
$\mathcal L$. However, $\mathcal M\cap\mathcal L=\varnothing$, so no meeting
node lies on $\mathcal L$. Hence, gathering at a meeting node is unsolvable.
Since every \textsc{Weber Meeting Node} is a meeting node, optimal gathering
at a \textsc{Weber Meeting Node} is also unsolvable.
\hfill\qed
\end{proof}

\begin{lemma}[Unsolvability when the robot at the center of rotation crashes]
\label{lem:faulty-center-unsolvable}
Suppose that the initial entire configuration $\mathcal I(t_0)$ is rotationally
symmetric about a center $c$, and that $c\notin\mathcal M$. If the robot
initially located at $c$ crashes before performing any move, while at least
two robots outside $c$ remain non-faulty, then gathering all non-faulty
robots at a meeting node is unsolvable. Consequently, gathering at a
\textsc{Weber Meeting Node} is also unsolvable.
\end{lemma}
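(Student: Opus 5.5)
The plan is to argue exactly as in Lemma~\ref{lem:faulty-axis-unsolvable}, replacing the reflection by the rotation $\rho$ about $c$. I would build an admissible execution in which the robot initially at $c$ crashes before its first \textsc{Look}. The adversary then activates the remaining non-faulty robots in complete $\rho$-orbits. Under \textsc{FSYNC}-like rounds, which are admissible under a fair \textsc{ASYNC} scheduler, all robots of an orbit perform \textsc{Look}, \textsc{Compute} and \textsc{Move} simultaneously. Since $\mathcal I(t_0)$ is rotationally symmetric about $c$, rotation is a configuration automorphism. Because the robots share neither a coordinate system nor chirality, the adversary may assign to each robot of an orbit the local frame obtained by rotating the frame of a reference robot by the corresponding power of $\rho$.

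The key step is an invariant proved by induction on rounds: the configuration remains $\rho$-symmetric, with the crashed robot still at $c$. Assume $\mathcal I(t)$ is $\rho$-symmetric. Then every robot in an orbit obtains the same snapshot in its local frame, because the frames differ by $\rho$, which maps the configuration to itself. The algorithm is deterministic, so every robot in the orbit computes the same local destination. In global coordinates these destinations are images of one another under $\rho$, so after the rigid simultaneous moves the configuration is again $\rho$-symmetric. The crashed robot at $c$ is fixed by $\rho$ and never moves, so it does not disturb the symmetry.

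From the invariant, at every time the multiset of non-faulty robot positions is $\rho$-invariant. If all non-faulty robots were at a single node $v$, that node would satisfy $\rho(v)=v$. For a nontrivial rotation the only fixed node is $c$. Since $c\notin\mathcal M$, no meeting node can be a common gathering point, so the algorithm never terminates successfully. The hypothesis that at least two non-faulty robots lie outside $c$ guarantees that the non-faulty set is nonempty and that an admissible crash pattern exists. Since $\mathcal W(t)\subseteq\mathcal M$, gathering at a \textsc{Weber Meeting Node} is also impossible.

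The main obstacle is the step that makes all robots of an orbit receive identical views. For a $90^\circ$ rotation the orbit has four robots, and their local frames must be related by rotations rather than reflections; since the robots lack chirality, the adversary is free to choose such frames. A second subtlety is a robot of a nontrivial orbit moving onto $c$. This is harmless: every robot of that orbit then also reaches $c$, and the configuration stays symmetric. Gathering at $c$ is still not at a meeting node, so the conclusion is unaffected.
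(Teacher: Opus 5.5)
Your proof is correct and follows the paper's argument: crash the robot at $c$, activate the remaining robots in complete rotation orbits, preserve the symmetry inductively, and conclude that any common gathering node must be fixed by the rotation, hence equal to $c\notin\mathcal M$. Your version is more explicit than the paper's, which only asserts that the symmetry is preserved. One small correction: frames related by rotations have the same handedness, so you need only the absence of a common coordinate system, not the absence of chirality.
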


\begin{proof}
Consider an admissible execution in which the robot at $c$ crashes before
moving, while the non-faulty robots outside $c$ are activated symmetrically.
Since the robots are anonymous and deterministic, the rotational symmetry
about $c$ is preserved throughout the execution. Hence, any common gathering
node must be fixed by the rotation and therefore must coincide with $c$.
However, $c\notin\mathcal M$, so $c$ is not a meeting node. Thus, gathering
at a meeting node is unsolvable. Since every \textsc{Weber Meeting Node} is
a meeting node, gathering at a \textsc{Weber Meeting Node} is also
unsolvable.
\hfill\qed
\end{proof}

\begin{lemma}[Unsolvability when robots on two perpendicular symmetry lines crash]
\label{lem:faulty-perpendicular-lines-unsolvable}
Suppose that the initial entire configuration $\mathcal I(t_0)$ is rotationally
symmetric about a grid node $c$. Let $\mathcal L_1$ and $\mathcal L_2$ be two
perpendicular grid lines passing through $c$, whose union is invariant under
the same rotation. If every robot initially located on $\mathcal L_1\cup\mathcal L_2$ crashes before performing any move, while at least two robots outside $\mathcal L_1\cup\mathcal L_2$ remain non-faulty, then gathering all non-faulty robots at a meeting node is unsolvable. Consequently, gathering at a \textsc{Weber Meeting Node} is also unsolvable.
\end{lemma}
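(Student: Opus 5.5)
We follow the same indistinguishability argument as in Lemmas~\ref{lem:faulty-axis-unsolvable} and~\ref{lem:faulty-center-unsolvable}. Let $\rho$ denote the nontrivial rotation about $c$ (by $90^\circ$ or $180^\circ$) that is a configuration automorphism of $\mathcal I(t_0)$. By hypothesis $\rho$ maps $\mathcal L_1\cup\mathcal L_2$ onto itself. It therefore permutes the four open quadrants determined by $\mathcal L_1$ and $\mathcal L_2$, and no open quadrant is fixed by $\rho$. Since $\rho$ is an automorphism, the robots outside $\mathcal L_1\cup\mathcal L_2$ split into $\rho$-orbits of size at least two. Each such orbit consists of robots in distinct quadrants that have identical views up to rotation.

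We consider the admissible execution in which every robot on $\mathcal L_1\cup\mathcal L_2$ crashes before moving. This includes the robot at $c$, if there is one. The remaining robots are activated orbit by orbit, and all members of an orbit execute their Look, Compute and Move phases simultaneously, as in a semi-synchronous round. This is a fair schedule. The crashed robots are invariant under $\rho$ as a multiset, and the robots are anonymous, oblivious and without common chirality, so we may choose their local frames to be related by $\rho$.

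The key inductive claim is that the robot multiset remains $\rho$-invariant after every round, and that $\mathcal M$ and the crashed set never change. Given a $\rho$-symmetric snapshot, robots in one orbit see identical snapshots in their local frames. They therefore compute destinations that are images of one another under $\rho$, and the new configuration is again $\rho$-symmetric.

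Suppose the algorithm gathers all non-faulty robots at a single node $v$. Then $v$ must be a fixed point of $\rho$. The only fixed grid node of a nontrivial rotation is $c$, so $v=c$. But $c\notin\mathcal M$, so $v$ is not a meeting node, which is a contradiction. Gathering at a \textsc{Weber Meeting Node} is therefore also impossible, since $\mathcal W(t)\subseteq\mathcal M$.

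The main obstacle is making sure the symmetry really persists. First, all robots on the lines must crash. Otherwise a non-faulty robot on $\mathcal L_1$ that is not at $c$ would sit in an orbit with a robot on $\mathcal L_2$ or the opposite half-line, and that case is still covered by the same argument. More importantly, a fixed-point-free orbit must survive. The hypothesis that at least two non-faulty robots lie outside the lines guarantees at least one complete nontrivial orbit of non-faulty robots, which settles this. We also need to justify that robots mapped to each other by $\rho$ can be assigned $\rho$-related local frames. This follows from the absence of a common coordinate system and chirality.

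One subtle point concerns robots that move onto the lines or onto $c$. Any such moves occur simultaneously across an orbit and preserve $\rho$-invariance, and a symmetric multiplicity at $c$ is harmless. So the invariant continues to hold, and the non-faulty robots can never all be at a single node other than $c$.
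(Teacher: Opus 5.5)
Your proposal is correct and follows the paper's own argument, just in more detail: crash the robots on $\mathcal L_1\cup\mathcal L_2$, have the adversary activate each $\rho$-orbit of the remaining robots simultaneously with $\rho$-related local frames so the rotational symmetry persists, and conclude that any gathering node must be the unique fixed point $c$, which is not a meeting node. As in the paper's proof, that last step uses $c\notin\mathcal M$, which appears in Theorem~\ref{thm:crash-induced-symmetry}(iii) but is missing from the lemma's statement; you should state it explicitly, because the symmetry argument gives nothing when $c\in\mathcal M$.
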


\begin{proof}
The crashed robots remain stationary, and their configuration on
$\mathcal L_1\cup\mathcal L_2$ preserves the rotational symmetry about $c$.
The adversary can activate the non-faulty robots outside these lines in
symmetric groups, thereby preserving the same rotational symmetry throughout
the execution. Hence, any common gathering node must be fixed by the rotation
and must therefore coincide with $c$. Since $c\notin\mathcal M$, gathering at
a meeting node is impossible. Consequently, gathering at a
\textsc{Weber Meeting Node} is also impossible.
\hfill\qed
\end{proof}

In the remainder of the paper, we assume that whenever an initial configuration admits a unique line of symmetry $\mathcal L$, the line $\mathcal L$ passes through grid nodes. Similarly, whenever an initial configuration admits a nontrivial rotational symmetry, its center of rotation is a grid node. Let $\mathcal I(t_0)$ denote the set of all initial configurations, and let $\mathbb U\subseteq\mathbb I$ denote the set of initial configurations from which gathering is impossible, as established by Lemmas~\ref{lem:faulty-axis-unsolvable}--\ref{lem:faulty-perpendicular-lines-unsolvable}. Thus, $\mathcal I(t_0)\setminus\mathbb U$ denotes the set of initial configurations considered for the proposed gathering algorithm.

\begin{figure}[ht!]
\centering
\includegraphics[width=0.4\linewidth]{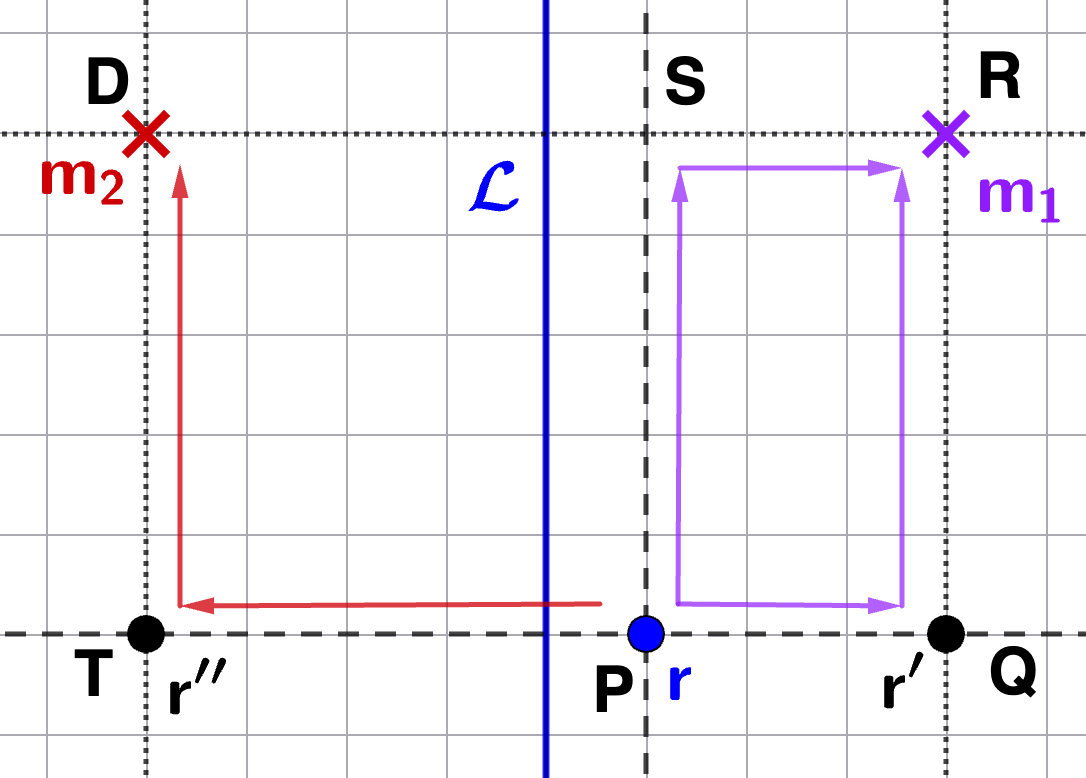}
\caption{Target-dependent restricted shortest paths. Robot $r$ may follow
one of the purple paths through $S$ or $r'$ toward $m_1$. The red path
through $r''$ leads toward the reflected candidate $m_2$.}
\label{fig:restricted-path}
\end{figure}

\section*{Full Correctness Proofs}
\label{app:correctness-proofs}

For readability, each result stated in the main correctness section is
restated before its complete proof.


\noindent\textbf{Lemma~\ref{lem:region-unique}
(Uniqueness of the target region).}
For every configuration in $\mathcal I_{321}$, $\mathcal I_{322}$, or
$\mathcal I_{323}$, Phase~I-A uniquely selects one of $\mathcal B_1$ and
$\mathcal B_2$. For every configuration in
$\mathcal I_{4211}$, $\mathcal I_{42121}$,
$\mathcal I_{42122}^{(1)}$, $\mathcal I_{42122}^{(2)}$,
$\mathcal I_{42122}^{(3)}$, $\mathcal I_{4221}$, or
$\mathcal I_{42221}$, it uniquely selects one of
$\mathcal Q_1,\ldots,\mathcal Q_4$. 



\begin{proof}
Consider first the reflectional classes. Phase~I-A initially compares
$\omega_1$ and $\omega_2$. If $\omega_i>\omega_j$, then
$\mathcal B_i$ is uniquely selected. If $\omega_1=\omega_2$, the two
half-planes are compared by their robot occupancies. Thus,
$\nu_i>\nu_j$ uniquely determines $\mathcal B^\dagger=\mathcal B_i$.

Suppose that both comparisons are tied. In $\mathcal I_{321}$, the entire configuration is asymmetric and therefore has a unique anchor corner.
Exactly one of the two half-planes does not contain this corner, and hence the prescribed rule selects it uniquely.

For $\mathcal I_{322}$, if
$\mathcal{SER}_{\mathcal I(t)}=\mathcal{SER}_{\mathcal M}$, the
strings $\alpha_{PS}$ and $\alpha_{PQ}$ are compared. These strings cannot be equal, since their equality would imply identical
meeting-node indicators and robot multiplicities at every pair of nodes
related by reflection across $\mathcal L$.
Consequently, the entire configuration would be reflectionally symmetric. Therefore, exactly one of $\mathcal B_{PS}$ and $\mathcal B_{PQ}$ is
selected. If $
\mathcal{SER}_{\mathcal M}\subsetneq\mathcal{SER}_{\mathcal I(t)}
$, the unequal
values of $\eta_1$ and $\eta_2$ determine the target half-plane uniquely.
When $\eta_1=\eta_2$, equality of the two prescribed canonical strings from
the anchor corner would again imply reflectional symmetry of the entire
configuration. Hence, their lexicographic comparison is strict.

The same argument applies to $\mathcal I_{323}$. If the two enclosing
rectangles coincide, equality of the two pivot-corner strings compared by
the rule would imply reflectional symmetry. Otherwise, an unequal grid-node
counts determine the target half-plane, and a remaining tie is resolved by
the two canonical strings associated with the unique anchor corner. Since
the configuration is asymmetric, these strings are distinct. Thus, each
reflectional class selects exactly one of $\mathcal B_1$ and
$\mathcal B_2$.

For $\mathcal I_{4211}$, the uniqueness of $\mathcal Q^\dagger$ follows
from the unique maximum candidate count. For $\mathcal I_{42121}$ and
$\mathcal I_{4221}$, it follows from the unique maximum occupancy among the
specified quadrants. In each subclass of $\mathcal I_{42122}$ and in
$\mathcal I_{42221}$, the remaining tied quadrants are compared by the
prescribed $\alpha$- or $\beta$-strings. If two distinct competing corners had the same final string, then the
entire configuration would admit the corresponding rotational symmetry,
which contradicts the assumption that the configuration is asymmetric. Hence, the final comparison is strict and
exactly one quadrant is selected.
\end{proof}

\noindent\textbf{Lemma~\ref{lem:node-unique}
(Uniqueness of the target node).}
For every solvable configuration, Phase~I determines exactly one target
\textsc{Weber Meeting Node} $m^\dagger\in\mathcal W(t)$ from the current
snapshot.

\begin{proof}
In $\mathcal I_1$, the single node of $\mathcal W(t)$ is selected. In
$\mathcal I_2$, the asymmetry of $\mathcal M$ determines a unique canonical
traversal and hence a unique outcome of the prescribed candidate-selection
rule. For $\mathcal I_{31}$, the rule for the reflection axis, together with
the anchor-corner rule when two candidates occur, selects exactly one node.
In $\mathcal I_{41}$, the algorithm selects the rotation center $c$ as a
\textsc{Weber Meeting Node}; however, this does not imply that $c$ is the
unique \textsc{Weber Meeting Node}.

For $\mathcal I_{321}$, $\mathcal I_{322}$, and
$\mathcal I_{323}$, Lemma~\ref{lem:region-unique} first provides a unique
target half-plane $\mathcal B^\dagger$. The set
$\mathcal W_c(t)\cap\mathcal B^\dagger$ is nonempty. If the set contains only one element, that element is selected.
If it contains two candidates, the fixed traversal direction of
$\mathcal B^\dagger$ uniquely determines the first candidate encountered.

For $\mathcal I_{4211}$, $\mathcal I_{42121}$, and the three subclasses of
$\mathcal I_{42122}$, Lemma~\ref{lem:region-unique} gives a unique
$\mathcal Q^\dagger$. Candidates in $\mathcal W^\dagger$ are first compared according to their
distance from $p^\dagger$. If a tie remains, their distance from
$\mathcal L_1^{++}$ is used, and the conditions of the class ensure that
the selected candidate is unique.
For $\mathcal I_{4221}$ and $\mathcal I_{42221}$, the selected quadrant and
asymmetry determine a unique canonical traversal. Since
$\overline{\mathcal W}^\dagger\neq\varnothing$, the first node of this set
encountered by that traversal is well defined and unique. These rules depend only on coordinate-invariant distances and canonical
strings. Therefore, all robots observing the same snapshot select the same
meeting node $m^\dagger$.
\end{proof}


\noindent\textbf{Lemma~\ref{lem:halfplane-preservation}
(Invariance of the selected half-plane).}
Let $\mathcal B^\dagger$ be the target half-plane selected from a
configuration in $\mathcal I_{321}$, $\mathcal I_{322}$, or
$\mathcal I_{323}$. If every subsequent move follows
\textsc{MoveToDestination()}, every configuration reached during Phase~II
selects the same half-plane $\mathcal B^\dagger$.

\begin{proof}
Let $\mathcal B$ be the half-plane different from
$\mathcal B^\dagger$. By \textsc{MoveToDestination()}, a robot in
$\mathcal B^\dagger$ never enters $\mathcal B$. A robot in $\mathcal B$ may cross $\mathcal L$ only to enter
$\mathcal B^\dagger$, and once it crosses, it cannot return.
Similarly, a robot on $\mathcal L$ may leave the line only toward
$\mathcal B^\dagger$. Hence, the occupancy of $\mathcal B^\dagger$ cannot
decrease.

Consider the first criterion of Phase~I-A that distinguishes the two
half-planes. If the comparison is based on occupancy, the monotonicity established above
preserves the strict inequality. If it is based on one of the prescribed
geometric sums, every allowed move preserves the inequality favoring
$\mathcal B^\dagger$. Finally, if the selection is based on canonical strings, the restricted
paths preserve the first differing entry that favors $\mathcal B^\dagger$,
or create an earlier differing entry with the same preference. Moreover,
no robot moves to an earlier position in the traversal of $\mathcal B$. Thus, the first strict comparison cannot reverse after a
completed move. Induction over completed moves proves that every subsequent
snapshot selects $\mathcal B^\dagger$.
\end{proof}


\noindent\textbf{Lemma~\ref{lem:quadrant-preservation}
(Invariance of the selected quadrant).}
Let $\mathcal Q^\dagger$ be the target quadrant selected at the beginning of
Phase~II. Every subsequent configuration reached during Phase~II selects the
same quadrant $\mathcal Q^\dagger$.

\begin{proof}
Under the prescribed paths, a robot in $\mathcal Q^\dagger$ remains there.
A robot in a quadrant adjacent to $\mathcal Q^\dagger$ can enter it only
across their common boundary, whereas a robot in the opposite quadrant
must pass through $c$. Once a robot enters $\mathcal Q^\dagger$, it remains
there. Robots on the bounding lines are
treated according to the same convention in every snapshot. Consequently,
no move transfers a robot from $\mathcal Q^\dagger$ to a non-selected open
quadrant.

Fix $\mathcal Q\neq\mathcal Q^\dagger$ and consider the first criterion that
distinguishes them. For a comparison based on the distribution of candidates,
Lemma~\ref{wbn} ensures that no new \textsc{Weber Meeting Node} is created,
while $m^\dagger$ remains associated with $\mathcal Q^\dagger$.
Hence, the class-specific pairing of candidates cannot make
$\mathcal Q$ preferable to $\mathcal Q^\dagger$. For an occupancy comparison,
the prescribed transfers do not decrease the occupancy of
$\mathcal Q^\dagger$ or increase that of $\mathcal Q$. If the first
distinction is an $\alpha$- or $\beta$-string, the restricted paths preserve
the first unequal entry favoring $\mathcal Q^\dagger$ or replace it with an
earlier entry having the same preference. Hence, the first strict criterion
cannot reverse. This holds for every non-selected quadrant and after every
completed move, so $\mathcal Q^\dagger$ remains selected.
\end{proof}


\noindent\textbf{Lemma~\ref{lem:target-invariant}
(Invariance of the selected target).}
Let $m^\dagger$ be selected from $\mathcal I(t)$. If every move completed
after time $t$ follows \textsc{MoveToDestination()} toward $m^\dagger$, then
every subsequent snapshot selects the same target $m^\dagger$.

\begin{proof}
By Lemmas~\ref{lem:halfplane-preservation} and
\ref{lem:quadrant-preservation}, the selected region does not change.
These lemmas also preserve every anchor-based comparison used in selecting
that region. Furthermore, the symmetry axes, rotation center, pivot corners,
boundary half-lines, and meeting-node traversal directions depend only on the
fixed meeting-node set $\mathcal M$.

Every completed move is a shortest-path hop toward $m^\dagger$.
Lemma~\ref{wbn} therefore implies, for every subsequently reached
$\mathcal I(t')$,
$m^\dagger\in\mathcal W(t')$ and
$\mathcal W(t')\subseteq\mathcal W(t)$. Thus, $m^\dagger$ remains eligible
and no new candidate can appear. Under a distance-based rule, deleting other
candidates cannot create a candidate preceding $m^\dagger$ in the fixed
distance order. Under a traversal-based rule, no earlier node can become a
new candidate. The applicable rule consequently selects $m^\dagger$ in every
subsequent snapshot.
\end{proof}


\noindent\textbf{Lemma~\ref{lem:crash-noninterference}
(Independent progress under crash faults).}
Let $r\notin\mathcal F$ be a non-faulty robot activated at time $t$, with
$r(t)\neq m^\dagger$, and let $t'>t$ be the time at which it completes
the move prescribed by \textsc{MoveToDestination()}. Then
$d(r(t'),m^\dagger)\leq d(r(t),m^\dagger)-1$. The completion of this move
does not require the activation or movement of any other robot.

\begin{proof}
By Lemma~\ref{lem:target-invariant}, every snapshot observed by $r$ yields
the same target $m^\dagger$. The routine
\textsc{MoveToDestination()} prescribes a movement contained in a
Manhattan shortest path toward $m^\dagger$. Since a completed movement
advances $r$ by at least one grid edge along this path,
$d(r(t'),m^\dagger)\leq d(r(t),m^\dagger)-1$.

The prescribed destination is computed from the snapshot available to
$r$ and does not require another robot to move first. A crash-faulty robot
remains fixed at its last position. Since a grid node may contain multiple
robots, such a stationary robot neither blocks the prescribed path nor
prevents $r$ from entering an occupied node. Therefore, $r$ completes its
distance-decreasing movement independently of the activation or movement
of every other robot.
\end{proof}



\noindent\textbf{Lemma~\ref{lem:finite-time-gathering}
(Finite-time gathering).}
Under every fair asynchronous schedule, all non-faulty robots reach
$m^\dagger$ after finitely many completed moves and remain there.

\begin{proof}
Let
$\Phi(t)=\sum_{r\notin\mathcal F}d(r(t),m^\dagger)$, where the sum is over
the robots that never crash. Since the number of robots and all grid
distances are finite, $\Phi(t_0)$ is a finite non-negative integer.

Consider a completed move of a non-faulty robot outside $m^\dagger$.
By Lemma~\ref{lem:crash-noninterference}, the contribution of that robot
to $\Phi$ decreases by at least one. Every other completed movement also
moves its robot toward $m^\dagger$ and therefore cannot increase $\Phi$.
Consequently, after each such completed move from time $t$ to time $t'$,
$\Phi(t')\leq\Phi(t)-1$.

Suppose that $\Phi(t)>0$. Then at least one non-faulty robot is outside
$m^\dagger$. By Lemma~\ref{lem:crash-noninterference}, this robot does not
wait for another robot, and fairness ensures that it is activated again and
completes another prescribed move. Hence, $\Phi$ cannot remain positive
after $\Phi(t_0)$ distance-decreasing completed moves. Therefore, every
non-faulty robot reaches $m^\dagger$ after finitely many completed moves.
Once there, it remains stationary by
\textsc{MoveToDestination()}.
\end{proof}





\end{document}